\newcommand{\ii}{\mathrm{i}}
\newcommand{\ee}{\mathrm{e}}
\newcommand{\T}{\mathrm{T}}
\newtheorem{theorem}{Theorem}
\newtheorem{lemma}{Lemma}
\newtheorem{prop}{Proposition}
\newtheorem{conjecture}{Conjecture}
\newtheorem{remark}{Remark}
\newtheorem{assumption}{Assumption}
\newcommand{\bsm}{\boldsymbol}
\def\bee{\begin{eqnarray}}
\def\ene{\end{eqnarray}}
\def\bes{\begin{subequations}}
\def\ees{\end{subequations}}
\begin{document}

\baselineskip=13pt
\renewcommand {\thefootnote}{\dag}
\renewcommand {\thefootnote}{\ddag}
\renewcommand {\thefootnote}{ }

\pagestyle{plain}

\begin{center}
\baselineskip=16pt \leftline{} \vspace{-.3in} {\Large \bf Higher-order vector Peregrine solitons and asymptotic estimates for \\ the multi-component nonlinear Schr\"odinger equations} \\[0.2in]
\end{center}

\begin{center}
Guoqiang Zhang$^{\,\rm a}$,\, Liming Ling$^{\,\rm b,*}$, and  Zhenya Yan$^{\,\rm c,\,d,\dag}$
%\footnote{$^{*}${\it Email address}: gqzhang@amss.ac.cn}
\footnote{$^{*}${\it Email address}: linglm@scut.edu.cn (corresponding author)}
\footnote{$^{\dag}${\it Email address}: zyyan@mmrc.iss.ac.cn (corresponding author)}
\\[0.1in]
{\it \small $^{\rm a}$Department of Mathematical Sciences, Tsinghua University, Beijing 100084, China \\
$^{\rm b}$School of Mathematics, South China University of Technology, Guangzhou 510640, China \\
$^{\rm c}$Key Laboratory of Mathematics Mechanization, Academy of Mathematics and Systems Science, \\ Chinese Academy of Sciences, Beijing 100190, China \\
 $^{\rm d}$School of Mathematical Sciences, University of Chinese Academy of Sciences, Beijing 100049, China} \\
 (Date:\,\, \today)
\end{center}

\vspace{-0.05in}

{\baselineskip=13pt

\begin{tabular}{p{16cm}}
 \hline \\
\end{tabular}

\vspace{-0.18in}

\begin{abstract} \small \baselineskip=12pt

We first report the first- and higher-order vector Peregrine solitons (alias rational rogue waves) for the any multi-component NLS equations based on the loop group theory, an explicit $\left(n+1\right)$-multiple eigenvalue of a characteristic polynomial of degree $(n+1)$ related to the condition of Benjamin-Feir instability, and inverse functions. Particularly, these vector rational rogue waves are parity-time symmetric for some parameter constraints. A systematic and effective approach is proposed to study the asymptotic behaviors of these vector rogue waves
such that the decompositions of rogue waves are related to the so-called governing polynomials $\mathcal{F}_\ell(z)$,
which pave a powerful way in the study of vector rogue wave structures of the multi-component integrable systems.
The vector rogue waves with maximal amplitudes can be determined via the parameter vectors, which is interesting and useful in the multi-component physical systems.

\vspace{0.1in} \noindent {\it Keywords:} Multi-component NLS equations, Lax pair, Darboux transform, Benjamin-Feir instability, higher-order vector
Peregrine solitons, parity-time symmetry, asymptotic estimates

\end{abstract}

\vspace{-0.05in}
\begin{tabular}{p{16cm}}
  \hline \\
\end{tabular}

\vspace{-0.15in}

\tableofcontents

\section{Introduction}

%Rogue waves (RWs), as extreme events of nonlinear waves, are mostly known in deep ocean for the maritime disasters, which admit huge amplitudes and deep wave crest. Up to now, the RW phenomenon was observed in succession in the atmosphere, optical fibres, plasmas, superfluids, Bose-Einstein condensates, even finance. The first exact RW (alias Peregrine soliton) of

\subsection{The $n$-component nonlinear Schr\"odinger equations}

The focusing cubic nonlinear Schr\"odinger (NLS) equation~\cite{nls67,nls67b}
\bee\label{nls}
 {\rm i}q_t+\frac12 q_{xx}+|q|^2q=0,\quad  q:\, \mathbb{R}^2\to \mathbb{C},
\ene
where $q=q(x,t)$ stands for the complex field, and the subscript denotes the partial derivative, can be used to describe the nonlinear wave phenomena in many fields such as nonlinear optics, Bose-Einstein condensates, deep ocean, plasmas physics, and even finance~\cite{nls68,nlsb1,nlsb2,nlsb3,nlsb4,nlsb5,nlsb6,frw}. In fact, it is also a fundamental and important nonlinear evolution partial differential equation in the applied mathematics and mathematical physics~\cite{nlsbook1,nlsbook2}. It is a completely integrable equation, and can be solved via the inverse scattering transform~\cite{nlsist} and other approaches to admit multi-soliton solutions, double-periodic solutions, breathers and algebro-geometric solutions (see, e.g., Refs.~\cite{ist,dt} and references therein). Moreover, it also possesses the multi-Hamiltonian structures and infinity many conversation laws~\cite{ist}. In the past decade, it was paid more attention again. The main reason is that it was found to possess the novel fundamental rogue wave~\cite{Peregrine1983} and multi-rogue wave solutions~\cite{rwbook17,nail09a,nail09b,Guo2012} (also called multi rogons~\cite{rogon}). In 1983 Peregrine~\cite{Peregrine1983} first found its novel and fundamental RW solution (alias Peregrine soliton) in the form of rational function
\bee
 q(x,t)=\left[1-\frac{4(1+2it)}{4x^2+4t^2+1}\right]e^{\mathrm{i}t},
\ene
which leads to $q \sim e^{{\rm i}t}$ as $|x|,\, |t|\to \infty$, and ${\rm max}|q|=3$, which is three times the non-zero background ($|q|\to 1$ as $|x|,\, |t|\to \infty$). The pioneering work did not draw too much attention until Solli, {\it et al}. first observed the optical RW phenomenon in 2007~\cite{orw}, and Akhmediv, {\it et al.} rediscovered the Peregrine soliton and its higher-order extensions via the modified Darboux transformation in 2009~\cite{nail09a,nail09b}. In particular, the Peregrine soliton has been verified to well agree with the RW experiments of nonlinear optics~\cite{ps}, deep water tank~\cite{wrw} and multi-component plasma~\cite{prw}. The physical mechanisms of RW generation contain many reasons
~\cite{Kharif2003,Pelinovsky2016}, but the usual RW generation is frequently related with the Benjamin-Feir instability (also modulation instability (MI))~\cite{mi,mi2}.

%The RWs usually were represented with the regular rational solutions. But not all the waves with rational functions can be seen as the rogue waves such as the rational solutions of the KdV equation~\cite{rkdv}.

Except for the focusing NLS equation, other integrable nonlinear wave equations were also verified to possess the RW solutions, such as the Hirota equation~\cite{hirota}, Ablowitz-Latik equation~\cite{hirota,al}, Sasa-Satsuma equation~\cite{ss,ss2}, Gerdjikov-Ivanov equation~\cite{gi},  Davey-Stewartson-I equation~\cite{ds1}, derivative NLS equation~\cite{dnls,dnls2}, quintic NLS equation~\cite{qnls}, fifth-order NLS equation~\cite{5nls}, etc.  Even if the RWs of the two-component\cite{guo2011,Ling2014,yan2011,Tao2012,Li2013,Baronio2012,He2013,Baronio2014,wen2015,wen2016,zhang2017,zhang2017b,zhang2019,
Bludov2010,Zhao2012,Chen2013,Ling2016,Zhao2014,Zhao2016,Chen2015,LingZ-19} and three-component\cite{Zhao2013,Baronio2013,zhang2018,zhang2018b} nonlinear wave equations were obtained, but it is still an open issue for the RWs of the $n\, (n>3)$-component integrable nonlinear partial differential equations (PDEs). A key and hard point is that these solutions will correspond to the Riemann surface with sheets more than $3$, which is hard to solve in a closed form. In other words, one needs
to find explicitly  a multiple root of a polynomial of degree $(n+1)$ with $n\geq 4$, which is in general hard to be studied.

 In this paper, we would like to answer this question by studying the focusing $n$-component NLS ($n$-NLS) equations~\cite{Ablowitz2004}
 \begin{equation}\label{n-NLS}
\ii \mathbf{q}_t+\frac{1}{2}\mathbf{q}_{xx}+\left\|\mathbf{q}\right\|_2^2\mathbf{q}=\mathbf{0},\quad x,\, t\in \mathbb{R},
\end{equation}
where $\mathbf{q}(x,t)=\left(q_1(x,t), q_2(x,t),\cdots, q_n(x,t)\right)^{\T}$ is the complex vector field, the subscripts stands for the partial derivatives with respect to the variables, and $\left\|\cdot\right\|_2$ is the standard Euclidean norm. System (\ref{n-NLS}) contains the special nonlinear wave equations, such as the single NLS equation ($n=1$)~\cite{nls67,nls67b}, the Manakov system ($n=2$)~\cite{manakov}, and other multi-component NLS equations~\cite{nnls1,nnls2,nnls3,nnls4}. The $n$-NLS equations \eqref{n-NLS} are completely integrable and admit the $(n+1)\times(n+1)$
Zaharov-Shabat spectral problem~\cite{Ablowitz2004}
\begin{equation}\label{laxs}
\Phi_x=\mathbf{U}(\lambda;x,t)\Phi,\quad \mathbf{U}(\lambda; x, t)=\ii\left(\lambda\sigma_3+\mathbf{Q}\right),\quad\sigma_3={\rm diag}\left(1,-{\mathbb I}_n\right),\quad {\bf Q}=\begin{pmatrix}
\mathbf{0}_{1\times n} &{\bf q}^{\dag} \\
{\bf q} &\mathbf{0}_{n\times n} \\
\end{pmatrix},
\end{equation}
and the associate evolution part is
\begin{equation}\label{laxt}
 \Phi_{t}=\mathbf{V}(\lambda;x,t)\Phi,\quad \mathbf{V}(\lambda;x,t)=\ii\left(\lambda^2\sigma_3+\lambda {\bf Q}-\frac{1}{2}\sigma_3{\bf Q}^2-\frac{\ii}{2}\sigma_3 {\bf Q}_x\right),
\end{equation}
where $\Phi=\Phi(\lambda;x,t)$ is the complex matrix eigenvalue, $\dag$ denotes the conjugate transpose, and $\lambda \in\mathbb{C}$ is a spectral parameter. The compatibility condition $\Phi_{xt}=\Phi_{tx}$, i.e.,  the zero-curvature equation $\mathbf{U}_t-\mathbf{V}_x+[\mathbf{U}, \mathbf{V}]=0$,  of Eqs.~(\ref{laxs}) and (\ref{laxt}) leads to system (\ref{n-NLS}). In this paper we would like to study the $n$-NLS equations (\ref{n-NLS}) via the equivalent Lax pair ~(\ref{laxs}) and (\ref{laxt}), rather than the system itself.

%In general, it is hard to present a full review on the construction of rational RW solutions for the $n$-NLS equations.
%The first non-trivial rational RW solutions were constructed for the focusing $2$-NLS equations by the Darboux transformation and nilpotent matrix method \cite{Guo2011}. Following the method, the authors constructed the rational RW solutions for the 3-NLS equations \cite{Zhao2013}. The
%high-order rational RW solutions for the 2-NLS equations were also constructed by the Darboux transformation and expanding the solutions on branch points \cite{Ling2014}.

%Comparing with the rational RW solutions of scalar NLS equation, the systematic construction of rational RWs of $n$-NLS equations is more difficulty
%even if a frame of constructing rational RW solutions of $n$-NLS equations was given~\cite{LingZ-19}. A key and hard point is that these solutions will correspond to the Riemann surface with sheets more than $3$, which is hard to solve in a closed form. Up to now, there is no systematic construction on the closed form solutions for the vector NLS equations under the degenerated MI.

We would like to focus on the rational RW solutions of the $n$-NLS equations (\ref{n-NLS}) on the degenerated MI. The vector rational RW solutions correspond to the $(n+1)$-sheeted Riemann surface $(\chi,\lambda)$:
\begin{equation}\label{eq:riemann-surface}
\chi-2\lambda-\sum_{i=1}^n\frac{a_i^2}{\chi+b_i}=0,\quad a_j>0,\,\,\, b_j\in\mathbb{R},\qquad b_i\neq b_j\,\, \text{ if } \,\, i\neq j,
\end{equation}
and the branch points are determined by the following algebraic equations
\begin{equation}\label{eq:multi-root}
1+\sum_{i=1}^n\frac{a_i^2}{(\chi+b_i)^2}=0,
\end{equation}
where the above algebraic equation \eqref{eq:multi-root} has $n$ pairs of non-real roots. Further, by the Riemann-Hurwitz formula, we find that the genus of Riemann surface is $g=-(n+1)+1+B/2=0$, where the parameter $B$ denotes the branching number with the value $2n$. The algebraic equation corresponds to the rational RW solutions for the $n$-NLS equations. Unluckily, the algebraic equation \eqref{eq:multi-root} can in general not be solved in a closed form. One way to solve it is using the numerical roots instead of the closed form solution. But the numerical solution can not be  used to analyze the general rules for those solutions. Then obtaining a closed formal solution for the equation \eqref{eq:multi-root} is benefit to the analysis for the dynamics of rational RW solutions involving the parameters. In this work, we find that  equation \eqref{eq:multi-root} can be solved under some special parameter setting, in which the roots are a pair $n$-times multiple one.

The decomposition of multi-solitons at $t\to\pm\infty$ is well known for us since 70's in the last century \cite{Faddeev}. The RW solutions have the enormous complicated dynamical structures by choosing the distinct parameters, in which the decomposition of RWs was observed about a decade ago. Until now, there was no a powerful tool to analyze the decomposition of RWs. Most of previous works are based on the RW
profiles~\cite{KedzioraAA-11,KedzioraAA-13}, in which the conjectures on the spatial-temporal structures of high-order RWs for the scalar NLS equations were proposed\cite{BilmanLM-20,Ling2017}. The results on the numbers of peaks for the fusion and fission type high-order RWs for the scalar NLS equations were solved in~\cite{HeZWPF-13}. The systematic analysis for the second-order RWs of the scalar NLS equation was performed in the recent literature \cite{BilmanM-19}. To the best of our knowledge, a systematic tool to analyze the dynamics of RWs is still a long-standing problem on the studying of RWs, which will be an important promotion to the further studies of RWs in the both theoretic and applied aspects. We find that the high-order RWs can be decomposed into several lower-order RWs, in which the locus of lower-order RWs can be approximately determined  by the roots of some algebraic equations. In this work, we try to propose an approach to analyze the high-order RWs quantitatively, which may  prove the
conjectures~\cite{KedzioraAA-11,KedzioraAA-13} for the lower-order RW cases rigorously.  When we finished the work, a recent work on the pattern of scalar rogue waves by using the Hirota method was posed on the arxiv \cite{Yang-20}, in which the rogue waves for scalar NLS equation is found to be relative to the Yablonskii-Vorob\'ev polynomials remarkably. It is hopeful that the vector ${\cal PT}$ rogue wave in current work also has the intimate relationship with the Yablonskii-Vorob\'ev polynomials as well, which will need to further study due to the complexity of multi-sheet algebraic Riemann surface.

\subsection{The main results}

The main results of this paper are summarized as follows:

 \begin{itemize}

\item{} We first find a new explicit $(n+1)$-multiple root of the characteristic polynomial of degree $n+1$ generated from the Lax pair (\ref{laxs}) and (\ref{laxt}) with the initial plane-wave solutions;

\item{} Based on the explicit $(n+1)$-multiple root, a family of first-order vector rational RWs of the $n$-NLS equation (\ref{n-NLS}) is explicitly found, and admits the parity-time (${\cal PT}$) symmetric structure under some parameter constraints;

\item{} The first-order vector rogue waves with ${\cal PT}$-symmetry can be classified via the degree $\ell$ of the polynomial $\mathbf{A}\bsm{\beta}$. When $\ell=1$, the firs-order vector RWs become the fundamental RWs containing three types of RWs, i.e., bright, dark, and four-pated RWs. However when $\ell\geq 2$, the dynamical behaviors of vector RWs as some parameters tend to infinity can be modulated by the new introduced {\it  governing polynomials} $\mathcal{F}_\ell$'s. The leading term of asymptotic formula is the linear superposition of some fundamental RWs, whose central locations are related to roots of $\mathcal{F}_\ell$'s;

\item{} An new techinique of inverse function is proposed and used to generate the explicit formula of higher-order vector rogue waves of the $n$-NLS equation (\ref{n-NLS});

\item{} Two kinds of maximal amplitude constants $\mathcal{GA}[N], \mathcal{A}_j[N]$ of   $N$th-order vector rogue wave solutions with $\bsm{\beta}=\bsm{\eta}$ and $\bsm{\beta}=\bsm{\eta}_j$ are studied.

\end{itemize}

The organization of this paper is as the following: In section \ref{sec2}, we derive
%the elementary Darboux transformation and
the generalized Darboux transformation for the $n$-NLS equations by the loop group method~\cite{Terng1998}.
%, which is one of the rigorous version for the Darboux transformation.
In section \ref{sec3}, we construct the first-order vector rational RW solutions for the $n$-NLS equations
by finding an $(n+1)$-multiple root of a characteristic polynomial of degree $n+1$ and Darboux transformation. Based on the degree of the polynomial $\mathbf{A}\boldsymbol{\beta}$, the first-order RWs are classified into $n$ cases. The dynamical behaviors of first-order RWs are also studied by performing an asymptotic analysis method and the characteristics of dynamics are related to the so-called governing polynomial $\mathcal{F}_\ell$. High-order RWs can be decomposed into lower-order RWs under the sense of the large enough parameters. Besides, two kinds of maximal amplitude $\mathcal{GA}[1], \mathcal{A}_j[1]$ are presented and studied.
In section \ref{sec4}, we presented an explicit formula of high-order RWs for $n$-NLS equations. The asymptotic analysis method can also be used to study the dynamical behaviors of high-order RWs. Moreover, two kinds of maximal amplitude $\mathcal{GA}[N], \mathcal{A}_j[N]$ for arbitrary $N\in \mathbb{N}^+$  are presented and studied. Section 5 gives the conclusions and discussions.

\section{Preliminaries}\label{sec2}

In this section we recall the basic constructions of one-fold and multi-fold Darboux transformation\cite{Terng1998} for the multi-component NLS equations (\ref{n-NLS}) in terms of the Darboux matrix with simple poles, and then present a general multi-fold Darboux transformation by means of the Darboux matrix with $k$ high-order poles.

\subsection{Darboux transform with simple poles}

 A rigorous inverse scattering analysis to the spectral problem \eqref{laxs} under the nonzero plane-wave background:
\begin{equation}
q_j\sim a_j\ee^{\ii b_j x\pm \theta_{\pm}},\qquad x\to\pm\infty,
\end{equation}
is still an open question until now, where $a_j$, $b_j$'s are arbitrary non-zero real parameters. Partial results can be recalled in \cite{KrausBK-15}.
To avoid the rigorous inverse scattering analysis on the non-vanishing background, we assume that:
\begin{assumption}
	If $\mathbf{q}(x,t)\in \mathbf{C}^{\infty}(\mathbb{R}^2)\cup\mathbf{L}^{\infty}(\mathbb{R}^2)$, then there exists a sectional analytic matrix:
	\begin{equation}
	\mathbf{M}(\lambda;x,t):\mathbb{R}^2\times(\mathbb{C}/(\Sigma\cup \mathbb{R}))\to {\rm SU}(n+1,\mathbb{C}),
	\end{equation}
	where the contour $\Sigma$ is cut of some Riemann surface, such that
	\begin{itemize}
		\item $\mathbf{M}(\lambda;x,t)$ is meromorphic for $\lambda\in \mathbb{C}/(\Sigma\cup \mathbb{R})$,
\item $\mathbf{M}(\lambda;x,t)=\mathbb{I}+\mathbf{M}_1(x,t)\lambda^{-1}+\mathbf{M}_2(x,t)\lambda^{-2}+\mathcal{O}(\lambda^{-3})$ in the neighborhood of $\infty$,
		\item $\left(\frac{\partial}{\partial x}\Phi(\lambda;x,t)\right)\Phi^{-1}(\lambda;x,t)=\ii\left(\lambda\sigma_3+\mathbf{Q}\right)$, where $\Phi(\lambda;x,t)=\mathbf{M}(\lambda;x,t)\ee^{\ii\lambda(x+\lambda t)\sigma_3}\mathbf{M}^{-1}(\lambda;0,0)$ is holomorphic in $\lambda\in\mathbb{C}$.
	%	\begin{equation}
	%	\mathbf{M}(\lambda;x,t)=\mathbb{I}+\mathbf{M}_1(x,t)\lambda^{-1}+\mathbf{M}_2(x,t)\lambda^{-2}+\mathcal{O}(\lambda^{-3}).
%		\end{equation}
	\end{itemize}
\end{assumption}
Actually, we just consider the construction of exact analytic solutions, which implies that the above assumptions will be satisfied automatically.
Further, we find the symmetric property of the holomorphic function.

\begin{lemma}\label{lemma2} $\Phi(\lambda;x,t)\Phi^{\dag}(\lambda^*;x,t)=\mathbb{I}_{n+1}$ if the holomorphic function $\Phi(\lambda;x,t)$ in Eqs.~(\ref{laxs}) and (\ref{laxt}) satisfies $\Phi(\lambda;0,0)=\mathbb{I}_{n+1}$.
\end{lemma}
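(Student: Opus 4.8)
The plan is to exploit the algebraic structure of the coefficient matrices in the Lax pair together with the normalization $\Phi(\lambda;0,0)=\mathbb{I}_{n+1}$. First I would observe that both $\mathbf{U}(\lambda;x,t)$ and $\mathbf{V}(\lambda;x,t)$ satisfy an anti-Hermitian-type symmetry in $\lambda$: since $\mathbf{Q}^\dag=\mathbf{Q}$ (as $\mathbf{q}$ appears once as $\mathbf{q}$ and once as $\mathbf{q}^\dag$ in the off-diagonal blocks) and $\sigma_3$ is real diagonal, one checks directly that $\mathbf{U}(\lambda;x,t)^\dag=-\mathbf{U}(\lambda^*;x,t)$ and likewise $\mathbf{V}(\lambda;x,t)^\dag=-\mathbf{V}(\lambda^*;x,t)$; here the $\ii$ in front of each matrix is essential, and for $\mathbf{V}$ the term $-\tfrac{\ii}{2}\sigma_3\mathbf{Q}_x$ needs the extra sign from taking the adjoint of $\ii$ to combine correctly with $\mathbf{Q}_x^\dag=\mathbf{Q}_x$.

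Next I would define $\Psi(\lambda;x,t):=\Phi^\dag(\lambda^*;x,t)^{-1}$ and show it solves the same Lax pair as $\Phi(\lambda;x,t)$. Differentiating $\Phi^\dag(\lambda^*)\Phi^{-\dag}(\lambda^*)=\mathbb{I}$ gives $\partial_x\big(\Phi^{-\dag}(\lambda^*)\big)=-\Phi^{-\dag}(\lambda^*)\big(\partial_x\Phi^\dag(\lambda^*)\big)\Phi^{-\dag}(\lambda^*)$, and using $\partial_x\Phi=\mathbf{U}\Phi$ one gets $\partial_x\Phi^\dag(\lambda^*)=\Phi^\dag(\lambda^*)\mathbf{U}^\dag(\lambda^*)=-\Phi^\dag(\lambda^*)\mathbf{U}(\lambda)$, hence $\partial_x\Psi=\mathbf{U}(\lambda)\Psi$; the identical computation with $\mathbf{V}$ handles the $t$-equation. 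Therefore $\Psi$ and $\Phi$ satisfy the same overdetermined linear system, which by the compatibility (zero-curvature) condition has a unique solution once the value at a single point $(x,t)=(0,0)$ is fixed.

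Finally I would invoke the normalization: $\Psi(\lambda;0,0)=\Phi^\dag(\lambda^*;0,0)^{-1}=\mathbb{I}_{n+1}^{-1}=\mathbb{I}_{n+1}=\Phi(\lambda;0,0)$, so by uniqueness $\Psi(\lambda;x,t)\equiv\Phi(\lambda;x,t)$ for all $(x,t)$, which is exactly $\Phi(\lambda;x,t)\Phi^\dag(\lambda^*;x,t)=\mathbb{I}_{n+1}$. The main obstacle, such as it is, is bookkeeping the adjoint symmetry of $\mathbf{V}$ correctly — verifying $\mathbf{V}^\dag(\lambda^*)=-\mathbf{V}(\lambda)$ term by term, paying attention to the sign produced by $\ii^\dag=-\ii$ interacting with $\sigma_3\mathbf{Q}^2$ (which is not Hermitian on its own, but $\sigma_3(\sigma_3\mathbf{Q}^2)^\dag\sigma_3$-type manipulations reconcile it) and with $\sigma_3\mathbf{Q}_x$; once that symmetry is in hand the rest is the standard uniqueness argument for solutions of a compatible linear Lax system.
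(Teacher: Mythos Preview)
Your proposal is correct and follows essentially the same approach as the paper: establish the symmetry $\mathbf{U}^{\dag}(\lambda^*)=-\mathbf{U}(\lambda)$, $\mathbf{V}^{\dag}(\lambda^*)=-\mathbf{V}(\lambda)$, and then invoke uniqueness of solutions with the same initial data. The only cosmetic difference is that the paper compares $\Phi^{\dag}(\lambda^*)$ with $\Phi^{-1}(\lambda)$ as solutions of the \emph{adjoint} Lax pair, whereas you compare $\Phi^{-\dag}(\lambda^*)$ with $\Phi(\lambda)$ as solutions of the original Lax pair; these are trivially equivalent rearrangements of the same argument.
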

\begin{proof} See Appendix A.
%	Since we know that the symmetric relationships for coefficient matrices $\mathbf{U}$ and $\mathbf{V}$ are:
%	\begin{equation}
%	\mathbf{U}(\lambda;x,t)=-\mathbf{U}^{\dag}(\lambda^*;x,t),\quad \mathbf{V}(\lambda;x,t)=-\mathbf{V}^{\dag}(\lambda^*;x,t),
%	\end{equation}
%	if $\Phi(\lambda;x,t)$ satisfies the Lax pair \eqref{laxs} and \eqref{laxt}, it follows that the matrix function $\Phi^{\dag}(\lambda^*;x,t)$ satisfies the adjoint Lax pair:
%	\begin{equation}\label{eq:adjoint-lax}
%	-\Phi_x^{\dag}(\lambda^*;x,t)=\Phi^{\dag}(\lambda^*;x,t) \mathbf{U}(\lambda;x,t),\quad
%	-\Phi_t^{\dag}(\lambda^*;x,t)=\Phi^{\dag}(\lambda^*;x,t) \mathbf{V}(\lambda;x,t).
%	\end{equation}
%	On the other hand, the inverse matrix function $\Phi^{-1}(\lambda;x,t)$ also satisfies the adjoint Lax pair \eqref{eq:adjoint-lax}. By the uniqueness and existence of differential equations and $\Phi(\lambda;0,0)=\mathbb{I}$, we complete the proof.
\end{proof}
On the other hand, by the asymptotic behavior of $\Phi(\lambda;x,t)$, we can set
\bee\label{phi}
\Phi(\lambda;x,t)=\mathbf{M}(\lambda;x,t)\ee^{\ii\lambda(x+\lambda t)\sigma_3}\mathbf{M}^{-1}(\lambda;0,0),
\ene
where $\mathbf{N}(\lambda;x,t)\equiv\mathbf{M}(\lambda;x,t)\ee^{\ii\lambda(x+\lambda t)\sigma_3}$ satisfies the following Riemann-Hilbert problems:
\begin{equation}
\mathbf{N}_+(\lambda;x,t)=\mathbf{N}_-(\lambda;x,t)\mathbf{J}(\lambda),\qquad \lambda\in\Sigma,
\end{equation}
 where $\mathbf{J}(\lambda)$ is a certain jump matrix
and $\Sigma$ is a certain contour. And the analytic matrix $\mathbf{N}(\lambda;0,0)=\mathbf{M}(\lambda;0,0)$ also satisfies the same jump condition. Thus the matrix function $\Phi(\lambda;x,t)$ is holomorphic in the complex plane $\mathbb{C}.$ So are the matrices of $\Phi_x(\lambda;x,t)$ and $\Phi_t(\lambda;x,t)$. By the sectional analytic matrix $\mathbf{M}(\lambda;x,t)$, the integrable hierarchy can be constructed:

\begin{lemma}\label{lemma1}
	Let $\Theta(\lambda;x,t)=\mathbf{M}(\lambda;x,t)\sigma_3\mathbf{M}^{-1}(\lambda;x,t)$, then we have the expansions in the neighorhood of $\infty$
		\begin{equation}
	\Theta(\lambda;x,t)=\sigma_3+\sum_{j=1}^{\infty}\Theta_j(x,t)\lambda^{-j},
	\end{equation}
where $\Theta_j$'s are the differential polynomials of $\mathbf{Q}$ and $\sigma_3$.
\end{lemma}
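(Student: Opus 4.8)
The plan is to derive, from the defining formula $\Theta=\mathbf{M}\sigma_3\mathbf{M}^{-1}$, a Lax-type $x$-equation for $\Theta$ together with the purely algebraic identity $\Theta^2=\mathbb{I}_{n+1}$, and then to read off a recursion for the coefficients $\Theta_j$ by matching powers of $\lambda$ near $\infty$.

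First I would extract the spatial ODE for $\mathbf{M}$. Since $\Phi(\lambda;x,t)=\mathbf{M}(\lambda;x,t)\ee^{\ii\lambda(x+\lambda t)\sigma_3}\mathbf{M}^{-1}(\lambda;0,0)$ solves $\Phi_x\Phi^{-1}=\ii(\lambda\sigma_3+\mathbf{Q})$ (third item of the Assumption, cf.\ Eq.~(\ref{laxs})), a direct computation gives $\mathbf{M}_x\mathbf{M}^{-1}=\ii\lambda(\sigma_3-\Theta)+\ii\mathbf{Q}$. Differentiating $\Theta=\mathbf{M}\sigma_3\mathbf{M}^{-1}$ and using $[\Theta,\Theta]=0$ then yields
\begin{equation}
\Theta_x=\ii\lambda[\sigma_3,\Theta]+\ii[\mathbf{Q},\Theta].
\end{equation}
Moreover, because $\sigma_3^2=\mathbb{I}_{n+1}$, the matrix $\Theta$ satisfies $\Theta^2=\mathbf{M}\sigma_3^2\mathbf{M}^{-1}=\mathbb{I}_{n+1}$ verbatim, which will supply the information the ODE alone cannot.

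Next I would substitute $\Theta=\sigma_3+\sum_{j\geq1}\Theta_j\lambda^{-j}$ (the expansion itself being guaranteed by the analyticity of $\mathbf{M}$ at $\infty$ from the Assumption), set $\Theta_0:=\sigma_3$, and match coefficients. Decompose each matrix $\mathbf{X}=\mathbf{X}^{\mathrm{d}}+\mathbf{X}^{\mathrm{o}}$ into its $\sigma_3$-block-diagonal and block-off-diagonal parts; note that $\mathbf{X}\mapsto\tfrac14[\sigma_3,\mathbf{X}]$ restricts to the identity on block-off-diagonal matrices and annihilates block-diagonal ones (since $[\sigma_3,[\sigma_3,\mathbf{Y}]]=4\mathbf{Y}$ for off-diagonal $\mathbf{Y}$), while $\{\sigma_3,\cdot\}=2\sigma_3(\cdot)$ annihilates block-off-diagonal matrices and is invertible on block-diagonal ones. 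The coefficient of $\lambda^{-k}$ in the ODE reads $[\sigma_3,\Theta_{k+1}]=-\ii\,\partial_x\Theta_k-[\mathbf{Q},\Theta_k]$ for $k\geq0$; applying $\mathbf{X}\mapsto\tfrac14[\sigma_3,\mathbf{X}]$ gives
\begin{equation}
\Theta_{k+1}^{\mathrm{o}}=-\tfrac14\big[\sigma_3,\ \ii\,\partial_x\Theta_k+[\mathbf{Q},\Theta_k]\big].
\end{equation}
The coefficient of $\lambda^{-(k+1)}$ in $\Theta^2=\mathbb{I}_{n+1}$ reads $\{\sigma_3,\Theta_{k+1}\}+\sum_{j=1}^{k}\Theta_j\Theta_{k+1-j}=0$, whose block-diagonal component gives
\begin{equation}
\Theta_{k+1}^{\mathrm{d}}=-\tfrac12\,\sigma_3\Big(\sum_{j=1}^{k}\Theta_j\Theta_{k+1-j}\Big)^{\mathrm{d}}.
\end{equation}
Starting from $\Theta_0=\sigma_3$, these two closed formulas compute $\Theta_{k+1}$ from $\Theta_0,\dots,\Theta_k$ and $\partial_x\Theta_k$; by induction every $\Theta_j$ is a polynomial in $\mathbf{Q}$, its $x$-derivatives and $\sigma_3$ — for instance $\Theta_1=\mathbf{Q}$ — which is the claim. (Using the PDE~(\ref{n-NLS}) one may, if desired, trade any $t$-derivatives for $x$-derivatives.)

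The step I expect to be the main obstacle is the consistency of the split recursion: the $x$-ODE by itself determines only the block-off-diagonal part $\Theta_{k+1}^{\mathrm{o}}$ of each coefficient, so one must really invoke the quadratic identity $\Theta^2=\mathbb{I}_{n+1}$ to pin down $\Theta_{k+1}^{\mathrm{d}}$. One should then check that the two half-recursions never clash — concretely, that the block-diagonal part of the order-$\lambda^{-k}$ ODE equation and the block-off-diagonal part of the order-$\lambda^{-(k+1)}$ equation for $\Theta^2$ are identities rather than new constraints; this is immediate here since $\Theta^2=\mathbb{I}_{n+1}$ holds as an exact matrix identity and the ODE was derived, not imposed. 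A minor bookkeeping point is to confirm the numerical constants $\tfrac14$ and $\tfrac12$ in the inverses of $[\sigma_3,\cdot]$ and $\{\sigma_3,\cdot\}$ for the specific $\sigma_3={\rm diag}(1,-\mathbb{I}_n)$.
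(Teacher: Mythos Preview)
Your proposal is correct and follows essentially the same route as the paper's proof (Appendix~B): both derive the stationary zero-curvature equation $\Theta_x=[\mathbf{U},\Theta]=\ii\lambda[\sigma_3,\Theta]+\ii[\mathbf{Q},\Theta]$ from the Lax pair for $\mathbf{M}$ (equivalently $\mathbf{N}$), combine it with the algebraic identity $\Theta^2=\mathbb{I}_{n+1}$, and then split the resulting recursion into an off-diagonal formula from the ODE and a diagonal formula from $\Theta^2=\mathbb{I}_{n+1}$. Your additional remarks on the consistency of the two half-recursions and on the constants $\tfrac14,\tfrac12$ go a bit beyond what the paper writes down, but the core argument is the same.
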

\begin{proof} See Appendix B.
%	The inverse matrix function $\mathbf{N}^{-1}(\lambda;x,t)$ satisfies the adjoint linear spectral problem
%	\begin{equation}
%	\frac{\partial}{\partial x}\left(\mathbf{N}^{-1}(\lambda;x,t)\right)=-\mathbf{N}^{-1}(\lambda;x,t) \mathbf{U}(\lambda;x,t),
%	\end{equation}
%	and then we have the following stationary zero-curvature equations:
%	\begin{equation}\label{eq:station-zero}
%	\frac{\partial}{\partial x}\Theta(\lambda;x,t)=\left[\mathbf{U}(\lambda;x,t),\Theta(\lambda;x,t)\right],
%	\end{equation}
%	by $\Theta(\lambda;x,t)\equiv\mathbf{N}(\lambda;x,t)\sigma_3\mathbf{N}^{-1}(\lambda;x,t)=\mathbf{M}(\lambda;x,t)\sigma_3\mathbf{M}^{-1}(\lambda;x,t).$
%	The above coefficients $\Theta_j(x,t)$ can be determined recursively:
%	\begin{equation}\label{eq:theta-re}
%	\begin{split}
%	\Theta_{j+1}^{\rm off}&=-\frac{1}{2}\sigma_3\left(\ii\,\Theta_{j,x}^{\rm off}+\left[\mathbf{Q}, \Theta_j^{\rm diag}\right]\right), \\
%	\Theta_{j,x}^{\rm diag}&=\ii \left[\mathbf{Q}, \Theta_j^{\rm off}\right].\\
%	\end{split}
%	\end{equation}
%	On the other hand, it is readily to see that
%	\begin{equation}
%	\Theta^2(\lambda;x,t)=\mathbb{I},
%	\end{equation}
%	which implies that
%	\begin{equation}\label{eq:theta-diag}
%	\Theta_j^{\rm diag}=-\frac{\sigma_3}{2}\sum_{s=1}^{j-1}\left(\Theta_j^{\rm diag}\Theta_{s-j}^{\rm diag}+\Theta_j^{\rm off}\Theta_{s-j}^{\rm off}\right),\qquad j\geq 2; \qquad \Theta_1^{\rm diag}=0.
%	\end{equation}
%	Through the first equation of \eqref{eq:theta-re} and equation \eqref{eq:theta-diag}, we complete the proof.
\end{proof}

Define the following evolutionary hierarchy:
\begin{equation}
\mathbf{V}_j(\lambda;x,t):=[\ii\, \lambda^j\Theta(\lambda;x,t)]_+,
\end{equation}
where the subscript $_+$ means taking the principal part of the corresponding expansions.  Lemma~\ref{lemma1} can generate the first two terms
\begin{equation}
%\begin{split}
\Theta_1=\mathbf{Q}\equiv [\mathbf{M}_1,\sigma_3], \quad
\Theta_2=-\frac{1}{2}\sigma_3(\ii\mathbf{Q}_x+\mathbf{Q}^2).
%\end{split}
\end{equation}
Therefore the first non-trivial flow $\mathbf{V}_2(\lambda;x,t)$ is just $\mathbf{V}(\lambda;x,t)$ given by Eq.~(\ref{laxt}).
%\begin{equation}
%\mathbf{V}(\lambda;x,t)=\ii\left(\lambda^2\sigma_3+\lambda {\bf Q}-\frac{1}{2}\sigma_3{\bf Q}^2-\frac{\ii}{2}\sigma_3 {\bf Q}_x\right)
%\end{equation}
%which corresponds to the vector NLS equations by evolutionary part \eqref{eq:evolution}.
%\begin{equation}\label{eq:evolution}
% \Phi_{t}=\mathbf{V}(\lambda;x,t)\Phi
%\end{equation}
%And the compatibility condition of \eqref{laxs} and \eqref{laxt}: $\Phi_{xt}=\Phi_{tx}$ will deduce the zero-curvature equation $\mathbf{U}_t-\mathbf{U}_{x}+[\mathbf{U},\mathbf{V}]=0$, i.e. system (\ref{n-NLS}).
%\begin{equation}\label{eq:vnls}
%\ii\mathbf{q}_t+\frac{1}{2}\mathbf{q}_{xx}+\mathbf{q}\mathbf{q}^{\dag}\mathbf{q}=0.
%\end{equation}

Then we consider [cf. Eq.~(\ref{phi})]
\begin{equation}
\Phi_x(\lambda;x,t)\Phi^{-1}(\lambda;x,t)=\mathbf{M}_x(\lambda;x,t)\mathbf{M}^{-1}(\lambda;x,t)+\ii\,\lambda \mathbf{M}(\lambda;x,t)\sigma_3\mathbf{M}^{-1}(\lambda;x,t),
\end{equation}
and
\begin{equation}
\Phi_t(\lambda;x,t)\Phi^{-1}(\lambda;x,t)=\mathbf{M}_t(\lambda;x,t)\mathbf{M}^{-1}(\lambda;x,t)+\ii\,\lambda^2 \mathbf{M}(\lambda;x,t)\sigma_3\mathbf{M}^{-1}(\lambda;x,t),
\end{equation}
which implies the Lax pair \eqref{laxs} and \eqref{laxt} by the Liouville theorem and Lemmas~\ref{lemma2} and \ref{lemma1}.  Now we would like to  proceed to construct the Darboux matrix by calculating the residue.

Let $\Phi(\lambda;x,t)$ be the analytic matrix function solution of Lax pair \eqref{laxs} and \eqref{laxt} in the complex plane $\mathbb{C}$. Choosing any point $\lambda_1\in \mathbb{C}^+$ (i.e., the upper half complex plane), we consider the following gauge transformation
\begin{equation}\label{eq:guage}
\Phi^{[1]}(\lambda;x,t)=\mathbf{T}_1(\lambda;x,t)\Phi(\lambda;x,t)\mathbf{T}_1^{-1}(\lambda;0,0),\quad
\mathbf{T}_1(\lambda;x,t)\equiv\mathbb{I}+\frac{\mathbf{P}_1(x,t)}{\lambda-\lambda_1^*},
\end{equation}
which requires the following conditions:
\begin{itemize}
	\item  the determinant $\oint_{C}{\rm d}\ln\det(\mathbf{T}_1(\lambda;x,t))=2\pi\mathrm{i}\left(N-P\right)=0$, where $N$ denotes the number of zeros and $P$ denotes the number of poles,
	\item ${\rm rank}(\mathbf{P}_1(x,t))=1$, $\mathbf{P}_1^2(x,t)=\alpha\mathbf{P}_1(x,t)$, where $\alpha$ is a undetermined constant,
	\item The same formal Lax pair $\Phi^{[1]}_x=\mathbf{U}^{[1]}\Phi^{[1]}$ and $\Phi^{[1]}_t=\mathbf{V}^{[1]}\Phi^{[1]}$, where $\mathbf{U}^{[1]}=\mathbf{U}\big|_{\mathbf{Q}=\mathbf{Q}^{[1]}},\,  \mathbf{V}^{[1]}=\mathbf{V}\big|_{\mathbf{Q}=\mathbf{Q}^{[1]}}$,

%\begin{pmatrix}
%0 &{\bf q}^{[1]\dag} \\
%{\bf q}^{[1]} &0 \\
%\end{pmatrix},
%$
%and $\mathbf{V}^{[1]}=\ii\left(\lambda^2\sigma_3+\lambda {\bf Q}^{[1]}-\frac{1}{2}\sigma_3({\bf Q}^{[1]})^2-\frac{\ii}{2}\sigma_3 {\bf Q}_x^{[1]}\right)$
	\item the normalization condition $\Phi^{[1]}(\lambda;x,t)\ee^{-\ii\lambda(x+\lambda t)\sigma_3}\to \mathbb{I}$ as $\lambda\to\infty.$
\end{itemize}

By the Abel formula and the above third requirement, we know that the determinant $\det(\mathbf{T}_1(\lambda;x,t))$ satisfies
%\begin{equation}
$\partial_x\det(\mathbf{T}_1(\lambda;x,t))=\partial_t\det(\mathbf{T}_1(\lambda;x,t))=0,$
%\end{equation}
which deduces that
\begin{equation}
\det(\mathbf{T}_1(\lambda;x,t))=1+\frac{\alpha}{\lambda-\lambda_1^*}.
\end{equation}
Thus the Darboux matrix and its inverse can be rewritten as
\begin{equation}
\mathbf{T}_1(\lambda;x,t)=\mathbb{I}+\frac{\alpha}{\lambda-\lambda_1^*}\frac{\varphi_1(x,t)\psi_1(x,t)}{\psi_1(x,t)\varphi_1(x,t)},\qquad
%\end{equation}
%and its inverse
%\begin{equation}
\mathbf{T}_1^{-1}(\lambda;x,t)=\mathbb{I}-\frac{\alpha}{\lambda-\lambda_1^*+\alpha}\frac{\varphi_1(x,t)\psi_1(x,t)}{\psi_1(x,t)\varphi_1(x,t)},
\end{equation}
where $\varphi_1(x,t)$ is a column vector and $\psi_1(x,t)$ is a row vector. Therefore the residues for $\Phi^{[1]}(\lambda;x,t)$ at $\lambda=\lambda_1^*$ and $\lambda=\lambda_1^*-\alpha$ are given by
\bee
\begin{array}{rl}
\underset{\lambda=\lambda_1^*}{\rm Res}\Phi^{[1]}(\lambda;x,t)\!\!&=\displaystyle \alpha \frac{\varphi_1(x,t)\psi_1(x,t)}{\psi_1(x,t)\varphi_1(x,t)}\Phi(\lambda_1^*;x,t)
\left(\mathbb{I}-\frac{\varphi_1(0,0)\psi_1(0,0)}{\psi_1(0,0)\varphi_1(0,0)}\right)=0, \vspace{0.1in} \\
%\end{equation}
%and
%\begin{equation}
\underset{\lambda=\lambda_1^*-\alpha}{\rm Res}\Phi^{[1]}(\lambda;x,t)\!\!&=\displaystyle -\alpha \left(\mathbb{I}-\frac{\varphi_1(x,t)\psi_1(x,t)}{\psi_1(x,t)\varphi_1(x,t)}\right)\Phi(\lambda_1^*-\alpha;x,t)\frac{\varphi_1(0,0)\psi_1(0,0)}{\psi_1(0,0)\varphi_1(0,0)}=0.
\end{array}
\ene
Furthermore, by the linear algebra we know that
\begin{equation}\label{eq:residue}
\begin{split}
\psi_1\Phi(\lambda_1^*;x,t)\left(\mathbb{I}-\frac{\varphi_1(0,0)\psi_1(0,0)}{\psi_1(0,0)\varphi_1(0,0)}\right)=0, \quad
\left(\mathbb{I}-\frac{\varphi_1(x,t)\psi_1(x,t)}{\psi_1(x,t)\varphi_1(x,t)}\right)\Phi(\lambda_1^*-\alpha;x,t)\varphi_1(0,0)=0.
\end{split}
\end{equation}

By the Lemma \ref{lemma2}, we require that
%\begin{equation}
$\left(\Phi^{[1]}(\lambda^*;x,t)\right)^{\dag}\Phi^{[1]}(\lambda;x,t)=\mathbb{I},$
%\end{equation}
which implies the sufficient condition
\begin{equation}\label{eq:suff}
\mathbf{T}_1(\lambda;x,t)\mathbf{T}_1^{\dag}(\lambda;x,t)=\mathbb{I}.
\end{equation}
Further, the condition $\alpha=\lambda_1^*-\lambda_1$ and $\varphi_1(x,t)=\psi_1^{\dag}(x,t)=\Phi(\lambda_1;x,t)\varphi_1(0,0)$ can deduce the above sufficient condition \eqref{eq:suff} and the residue conditions \eqref{eq:residue}, which means the gauge transformation \eqref{eq:guage} is well defined.

\begin{lemma}[Terng and Uhlenbeck\cite{Terng1998}]\label{thm1}
	Suppose $\mathbf{q}(x,t)$ is a solution of the multi-component NLS equations \eqref{n-NLS}, then the Darboux matrix
	 \begin{equation}
	 \mathbf{T}_1(\lambda;x,t)=\mathbb{I}-\frac{\lambda_1-\lambda_1^*}{\lambda-\lambda_1^*}\mathbf{P}_1(x,t),\quad \mathbf{P}_1(x,t)=\frac{\varphi_1(x,t)\varphi_1^{\dag}(x,t)}{\varphi_1^{\dag}(x,t)\varphi_1(x,t)},
	 \end{equation}
	where $\varphi_1(x,t)=\Phi(\lambda_1;x,t)\mathbf{v}_1$ with $\mathbf{v}_1$ being a constant column vector, can make the new holomorphic function \eqref{eq:guage} satisfy the same Lax pair \eqref{laxs} and \eqref{laxt} by replacing the potential function $\mathbf{Q}(x,t)$ with
	\begin{equation}\label{eq:back}
	\mathbf{Q}^{[1]}(x,t)=\mathbf{Q}(x,t)+(\lambda_1-\lambda_1^*)[\sigma_3,\mathbf{P}_1(x,t)],
	\end{equation}
which is just the one-fold Darboux transformation.
\end{lemma}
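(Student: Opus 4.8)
The plan is to prove the statement by the standard intertwining (``dressing'') argument. Everything reduces to checking that $\mathbf{T}_1$ conjugates the old Lax operators into operators of the same form with $\mathbf{Q}$ replaced by $\mathbf{Q}^{[1]}$, i.e.\ that
\[
\widetilde{\mathbf{U}}:=\big(\mathbf{T}_{1,x}+\mathbf{T}_1\mathbf{U}\big)\mathbf{T}_1^{-1}=\ii\big(\lambda\sigma_3+\mathbf{Q}^{[1]}\big),\qquad
\widetilde{\mathbf{V}}:=\big(\mathbf{T}_{1,t}+\mathbf{T}_1\mathbf{V}\big)\mathbf{T}_1^{-1}=\mathbf{V}\big|_{\mathbf{Q}\to\mathbf{Q}^{[1]}};
\]
a one-line chain-rule computation then shows that $\Phi^{[1]}$ of \eqref{eq:guage} solves $\Phi^{[1]}_x=\widetilde{\mathbf{U}}\Phi^{[1]}$ and $\Phi^{[1]}_t=\widetilde{\mathbf{V}}\Phi^{[1]}$, while the well-definedness of the gauge transformation (holomorphy of $\Phi^{[1]}$ in $\lambda$, the normalization $\Phi^{[1]}\ee^{-\ii\lambda(x+\lambda t)\sigma_3}\to\mathbb{I}$, and the $\mathrm{SU}(n+1)$-unitarity condition \eqref{eq:suff}) has already been secured before the lemma by the choices $\alpha=\lambda_1^*-\lambda_1$ and $\varphi_1=\psi_1^{\dag}=\Phi(\lambda_1)\mathbf{v}_1$ together with the residue conditions \eqref{eq:residue}. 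The facts I would lean on repeatedly are that $\varphi_1$ solves the Lax pair at $\lambda_1$, $\varphi_{1,x}=\mathbf{U}(\lambda_1)\varphi_1$ and $\varphi_{1,t}=\mathbf{V}(\lambda_1)\varphi_1$, and --- using $\mathbf{Q}^{\dag}=\mathbf{Q}$, $\sigma_3\mathbf{Q}^2=\mathbf{Q}^2\sigma_3$, $\sigma_3\mathbf{Q}_x=-\mathbf{Q}_x\sigma_3$, which give $\mathbf{U}(\lambda)^{\dag}=-\mathbf{U}(\lambda^*)$ and $\mathbf{V}(\lambda)^{\dag}=-\mathbf{V}(\lambda^*)$ --- that the row vector $\varphi_1^{\dag}$ solves the adjoint equations at $\lambda_1^*$, in particular $\partial_x\varphi_1^{\dag}=-\varphi_1^{\dag}\mathbf{U}(\lambda_1^*)$.

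First I would record the elementary properties of $\mathbf{P}_1=\varphi_1\varphi_1^{\dag}/(\varphi_1^{\dag}\varphi_1)$: it is a Hermitian rank-one idempotent, $\mathbf{P}_1^{\dag}=\mathbf{P}_1=\mathbf{P}_1^2$, with range spanned by $\varphi_1$; consequently $\mathbf{T}_1(\lambda_1)\varphi_1=\varphi_1-\mathbf{P}_1\varphi_1=0$, the inverse is the simple-pole matrix $\mathbf{T}_1^{-1}(\lambda)=\mathbb{I}+\frac{\lambda_1-\lambda_1^*}{\lambda-\lambda_1}\mathbf{P}_1$, and $\mathbf{T}_1^{-1}(\lambda_1^*)=\mathbb{I}-\mathbf{P}_1$. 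The heart of the argument is the pole analysis of $\widetilde{\mathbf{U}}$, which a priori is rational in $\lambda$ with at most simple poles at $\lambda=\lambda_1$ (from $\mathbf{T}_1^{-1}$) and $\lambda=\lambda_1^*$ (from $\mathbf{T}_1$ and $\mathbf{T}_{1,x}$). At $\lambda=\lambda_1$ the residue is proportional to $\big(\mathbf{T}_{1,x}(\lambda_1)+\mathbf{T}_1(\lambda_1)\mathbf{U}(\lambda_1)\big)\mathbf{P}_1$, and on the range of $\mathbf{P}_1$ the bracket equals $\partial_x\big(\mathbf{T}_1(\lambda_1)\varphi_1\big)=\partial_x 0=0$ (using $\mathbf{U}(\lambda_1)\varphi_1=\varphi_{1,x}$). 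At $\lambda=\lambda_1^*$ the residue is proportional to $\big(\mathbf{P}_{1,x}+\mathbf{P}_1\mathbf{U}(\lambda_1^*)\big)\mathbf{T}_1^{-1}(\lambda_1^*)=\big(\mathbf{P}_{1,x}+\mathbf{P}_1\mathbf{U}(\lambda_1^*)\big)(\mathbb{I}-\mathbf{P}_1)$; substituting $\varphi_1^{\dag}\mathbf{U}(\lambda_1^*)=-\partial_x\varphi_1^{\dag}$, the $\varphi_1\,\partial_x\varphi_1^{\dag}$ terms cancel and what remains ends in $\varphi_1^{\dag}$, which is annihilated on the right by $\mathbb{I}-\mathbf{P}_1$ since $\varphi_1^{\dag}(\mathbb{I}-\mathbf{P}_1)=0$. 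Hence $\widetilde{\mathbf{U}}$ is entire in $\lambda$; since $\mathbf{T}_1\to\mathbb{I}$ and $\mathbf{T}_{1,x}\to0$ as $\lambda\to\infty$ it is affine in $\lambda$, and expanding $\mathbf{T}_1$ and $\mathbf{T}_1^{-1}$ to order $\lambda^{-1}$ at infinity yields $\widetilde{\mathbf{U}}=\ii\lambda\sigma_3+\ii\mathbf{Q}+\ii(\lambda_1-\lambda_1^*)[\sigma_3,\mathbf{P}_1]$, which is exactly $\ii(\lambda\sigma_3+\mathbf{Q}^{[1]})$ with $\mathbf{Q}^{[1]}$ as in \eqref{eq:back}. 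Finally $[\sigma_3,\mathbf{P}_1]$ is block-off-diagonal (as $\sigma_3$ is block-diagonal) and $(\lambda_1-\lambda_1^*)[\sigma_3,\mathbf{P}_1]$ is Hermitian (as $\lambda_1-\lambda_1^*\in\ii\mathbb{R}$ and $[\sigma_3,\mathbf{P}_1]^{\dag}=-[\sigma_3,\mathbf{P}_1]$), so $\mathbf{Q}^{[1]}$ retains the admissible structure of \eqref{laxs} and defines a bona fide new solution of \eqref{n-NLS}.

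For the evolution part I would run the identical scheme: the residues of $\widetilde{\mathbf{V}}$ at $\lambda_1$ and $\lambda_1^*$ vanish by the same two computations with $\partial_x\to\partial_t$ and $\mathbf{U}\to\mathbf{V}$, so $\widetilde{\mathbf{V}}$ is a polynomial in $\lambda$; because $\mathbf{V}=\mathcal{O}(\lambda^2)$ it has degree two, and matching the $\lambda^2$ and $\lambda^1$ coefficients at infinity forces $\widetilde{\mathbf{V}}=\ii\lambda^2\sigma_3+\ii\lambda\mathbf{Q}^{[1]}+(\text{a $\lambda$-independent matrix})$. I expect the single genuinely fiddly point --- hence the main obstacle --- to be the identification of the $\lambda^0$ term with $-\tfrac{\ii}{2}\sigma_3(\mathbf{Q}^{[1]})^2+\tfrac12\sigma_3\mathbf{Q}^{[1]}_x$, since it requires re-expressing $\mathbf{P}_{1,x}$ and the $\mathbf{P}_1$-quadratic contributions through the $\lambda^{-1},\lambda^{-2}$ data in the expansion of $\mathbf{T}_1$ and using $\varphi_{1,x}=\mathbf{U}(\lambda_1)\varphi_1$ once more to simplify; a cleaner route I would prefer is to observe that, once $\Phi^{[1]}_x=\mathbf{U}^{[1]}\Phi^{[1]}$ is established with $\mathbf{U}^{[1]}$ of NLS form and $\widetilde{\mathbf{V}}$ is known to be a pole-free degree-two polynomial with the correct $\lambda^2,\lambda^1$ coefficients, the compatibility $\Phi^{[1]}_{xt}=\Phi^{[1]}_{tx}$ together with the uniqueness of the first non-trivial flow attached to a given potential (the mechanism behind Lemma~\ref{lemma1} and the construction of $\mathbf{V}_2$) forces $\widetilde{\mathbf{V}}=\mathbf{V}\big|_{\mathbf{Q}\to\mathbf{Q}^{[1]}}$ with no further coefficient chase. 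Putting these pieces together completes the verification that $\mathbf{T}_1$ effects the one-fold Darboux transformation $\mathbf{Q}\mapsto\mathbf{Q}^{[1]}$.
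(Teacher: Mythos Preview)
Your argument is correct and complete. The residue computations at $\lambda_1$ and $\lambda_1^*$ are carried out accurately, the Liouville step is sound, and the two shortcuts you propose for the $\lambda^0$ term of $\widetilde{\mathbf{V}}$ (either a direct coefficient chase or the compatibility/uniqueness argument via Lemma~\ref{lemma1}) both work.

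The paper takes a somewhat different, more structural route. Rather than checking the residues of $\widetilde{\mathbf{U}}$ and $\widetilde{\mathbf{V}}$ directly, it works with the dressed sectional matrix $\mathbf{M}^{[1]}=\mathbf{T}_1\mathbf{M}$: from the large-$\lambda$ expansion $\mathbf{T}_1=\mathbb{I}-(\lambda_1-\lambda_1^*)\mathbf{P}_1\lambda^{-1}+O(\lambda^{-2})$ one reads off $\mathbf{M}_1^{[1]}=\mathbf{M}_1-(\lambda_1-\lambda_1^*)\mathbf{P}_1$, and the relation $\mathbf{Q}=[\mathbf{M}_1,\sigma_3]$ from Lemma~\ref{lemma1} immediately yields \eqref{eq:back}. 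The holomorphy of $\Phi^{[1]}$ having been secured beforehand by the residue conditions \eqref{eq:residue}, one line of Liouville then identifies $\Phi^{[1]}_x(\Phi^{[1]})^{-1}$ and $\Phi^{[1]}_t(\Phi^{[1]})^{-1}$ with $\mathbf{U}^{[1]}$ and $\mathbf{V}^{[1]}$. The payoff of the paper's approach is that Lemma~\ref{lemma1}, applied to $\mathbf{M}^{[1]}$, delivers the full form of $\mathbf{V}^{[1]}$ automatically, so the ``fiddly'' $\lambda^0$ identification you flag never has to be faced separately. Your approach, by contrast, is more elementary and self-contained: it does not invoke the $\mathbf{M}$-matrix machinery at all, and makes explicit the pole cancellations that the paper's invocation of Liouville leaves implicit.
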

\begin{proof}
Consider the matrix function
\begin{equation}
\Phi^{[1]}(\lambda;x,t)=\mathbf{M}^{[1]}(\lambda;x,t)\ee^{\ii\lambda(x+\lambda t)\sigma_3}\left(\mathbf{M}^{[1]}(\lambda;0,0)\right)^{-1},\quad\mathbf{M}^{[1]}(\lambda;x,t)=\mathbf{T}_1(\lambda;x,t)\mathbf{M}(\lambda;x,t).
\end{equation}
Since $\mathbf{T}_1(\lambda;x,t)$ can be expanded in the neighborhood of $\infty$:
\begin{equation}
\mathbf{T}_1(\lambda)=\mathbb{I}-(\lambda_1-\lambda_1^*)\mathbf{P}_1(x,t)\lambda^{-1}+O(\lambda^{-2})
\end{equation}
then we have
\begin{equation}
\mathbf{M}^{[1]}(\lambda;x,t)=\mathbf{T}_1(\lambda)\mathbf{M}(\lambda;x,t)=\mathbb{I}
+\left[\mathbf{M}_1(x,t)-(\lambda_1-\lambda_1^*)\mathbf{P}_1(x,t)\right]\lambda^{-1}+O(\lambda^{-2}),
\end{equation}
which implies the B\"acklund (or Darboux) transformation \eqref{eq:back}. Through the Liouville theorem,  we have
\begin{equation}
\begin{split}
\left(\frac{\partial}{\partial x}\Phi^{[1]}(\lambda;x,t)\right)\left(\Phi^{[1]}(\lambda;x,t)\right)^{-1}&=\left(\frac{\partial}{\partial x}\mathbf{T}_1(\lambda;x,t)\right)\mathbf{T}_1^{-1}(\lambda;x,t)+\mathbf{T}_1(\lambda;x,t)\mathbf{U}(\lambda;x,t)\mathbf{T}_1^{-1}(\lambda;x,t) \\
&=\mathbf{U}^{[1]}(\lambda;x,t), \\
\end{split}
\end{equation}
and
\begin{equation}
\begin{split}
\left(\frac{\partial}{\partial t}\Phi^{[1]}(\lambda;x,t)\right)\left(\Phi^{[1]}(\lambda;x,t)\right)^{-1}&=\left(\frac{\partial}{\partial t}\mathbf{T}_1(\lambda;x,t)\right)\mathbf{T}_1^{-1}(\lambda;x,t)+\mathbf{T}_1(\lambda;x,t)\mathbf{V}(\lambda;x,t)\mathbf{T}_1^{-1}(\lambda;x,t) \\
&=\mathbf{V}^{[1]}(\lambda;x,t),
\end{split}
\end{equation}
which prove that the holomorphic function $\Phi^{[1]}(\lambda;x,t)$ solves the Lax pair \eqref{laxs} and \eqref{laxt} by replacing the potential function $\mathbf{Q}(x,t)$ with Eq.~\eqref{eq:back}.
\end{proof}

%Through the above one-fold Darboux transformation in above theorem \ref{thm1}, we can construct general Darboux matrices. The above one-fold and the multi-fold Darboux matrices were given in \cite{Terng1998}.
%Recently, the Darboux matrix with the single high-order pole for the scalar NLS equation was obtained in \cite{BilmanLM-20}.

\subsection{Darboux transform with $k$ high-order poles}

Here we extend the Darboux matrix with a general case containing $k$ high-order poles for the $n$-NLS equations to present the Darboux transformation:

\begin{theorem}\label{thm2}
	Suppose that $\mathbf{q}(x,t)\in \mathbf{L}^{\infty}(\mathbb{R}^2)\cup \mathbf{C}^{\infty}(\mathbb{R}^2)$ is a smooth solution and  $\Phi(\lambda;x,t)$ is an analytic matrix solution in the whole complex plane $\mathbb{C}$. Then the Darboux transformation
	\begin{equation}\label{eq:darboux}
	\mathbf{T}_N(\lambda;x,t)=\mathbb{I}-\mathbf{Y}_N\mathbf{M}^{-1}\mathbf{D}\mathbf{Y}_N^{\dag},
	\end{equation}
converts the Lax pair \eqref{laxs} and \eqref{laxt} into the new one by replacing the potential functions
	\begin{equation}\label{eq:back-N}
	\mathbf{q}^{[N]}(x,t)=\mathbf{q}(x,t)-2 \mathbf{Y}_{N,2}\mathbf{M}^{-1}\mathbf{Y}_{N,1}^{\dag},
	\end{equation}
where $\mathbf{D}=\mathrm{diag}\left(\mathbf{D}_1, \mathbf{D}_2, \cdots, \mathbf{D}_k\right)$, $\mathbf{D}_s=\left(\frac{\mathrm{He}(i-j+1)}{\left(\lambda-\lambda_s^*\right)^{i-j+1}}\right)_{0\le i, j\le r_s-1}, s=0, 1, \cdots, k$, $\mathrm{He}(x)$ is the heaviside function [$ \mathrm{He}(x)=1$ for $x>0$ and zero otherwise], $\mathbf{M}=\left(\mathbf{M}_{i, j}\right)_{1\le i, j \le k}$, $\mathbf{M}_{i, j}=\left(M_{i, j}^{[s, \ell]}\right)_{\substack{0\le s\le r_i-1\\  0\le \ell\le r_j-1}}$ with
\begin{gather}
M_{i, j}^{[s, \ell]}=\sum_{p=0}^s\sum_{m=0}^\ell \binom{p+m}{m}\frac{\left(-1\right)^m}{\left(\lambda_j-\lambda_i^*\right)^{p+m+1}}\left(\varphi_i^{[s-p]}\right)^\dag \varphi_j^{[\ell-m]},
\end{gather}
the subscript $\mathbf{Y}_{N,1}$ denotes the first row vector of $\mathbf{Y}_N$, the subscript $\mathbf{Y}_{N,2}$ the second up to $n+1$-th row vector,
\begin{gather}
\mathbf{Y}_N =\left(\varphi_1^{[0]},\varphi_1^{[1]},\cdots,\varphi_1^{[r_1-1]}, \varphi_2^{[0]},\varphi_2^{[1]},\cdots,\varphi_2^{[r_2-1]},\cdots, \varphi_k^{[0]},\varphi_k^{[1]},\cdots,\varphi_k^{[r_k-1]} \right),\quad N=\sum_{i=1}^kr_i,
\end{gather}
and   $\varphi_i^{[j]}=\frac{1}{j!}\left(\frac{\rm d}{{\rm d}\lambda}\right)^{j}\Phi(\lambda; x,t)\mathbf{v}(\lambda)|_{\lambda=\lambda_i}$, $1\le i\le k, 0\le j\le r_i-1$.
\end{theorem}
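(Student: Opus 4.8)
The plan is to build the $k$-high-order-pole Darboux matrix as an iterated limit of the simple-pole transformation from Lemma~\ref{thm1}, and then identify the resulting residue data with the block Cauchy-type matrices $\mathbf{M}$, $\mathbf{D}$ appearing in the statement. Concretely, I would first treat the case $k=1$ with a single pole $\lambda_1$ of order $r_1=N$. Starting from $N$ distinct eigenvalues $\lambda_1^{(1)},\dots,\lambda_1^{(N)}$ clustering at $\lambda_1$, the composition $\mathbf{T}_N=\mathbf{T}_{1}^{(N)}\cdots\mathbf{T}_1^{(1)}$ of simple-pole Darboux matrices is, by Lemma~\ref{thm1}, a Darboux transformation; one checks it has the form $\mathbb{I}-\sum_{a}\varphi^{(a)}\big(\text{row}\big)/(\lambda-\lambda_1^{(a)*})$ and that the rows are determined by the orthogonality relations $\mathbf{T}_N(\lambda_1^{(b)};x,t)\varphi^{(b)}(x,t)$ finite, equivalently a linear system whose coefficient matrix is the Cauchy-like matrix $\big((\varphi^{(a)})^\dagger\varphi^{(b)}/(\lambda_1^{(b)}-\lambda_1^{(a)*})\big)_{a,b}$. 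Letting all $\lambda_1^{(a)}\to\lambda_1$ and using $\varphi_1^{[j]}=\frac{1}{j!}\partial_\lambda^j(\Phi\mathbf{v})|_{\lambda_1}$, a confluent (Hermite-type) limit of this system turns the Cauchy matrix into precisely $\mathbf{M}_{1,1}=\big(M_{1,1}^{[s,\ell]}\big)$ with the stated double-sum/binomial entries, and $\mathbf{D}_1$ records the principal-part structure $\mathrm{He}(i-j+1)/(\lambda-\lambda_1^*)^{i-j+1}$. The general $k$-pole case then follows by doing this confluence at each $\lambda_i$ independently, giving the block-diagonal $\mathbf{D}$ and the full block matrix $\mathbf{M}=(\mathbf{M}_{i,j})$.

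The verification that $\mathbf{T}_N$ genuinely intertwines the Lax pairs proceeds exactly as in the proof of Lemma~\ref{thm1}, but done uniformly in the cluster parameters so that it survives the limit. First I would show $\det\mathbf{T}_N(\lambda;x,t)=\prod_{i=1}^k\big((\lambda-\lambda_i)/(\lambda-\lambda_i^*)\big)^{r_i}$ is independent of $x,t$ (Abel's formula plus the formal-Lax-pair requirement), so $\mathbf{T}_N$ is invertible away from the $\lambda_i^*$. Next, $\mathbf{M}_x\mathbf{M}^{-1}+\mathbf{M}\mathbf{U}\mathbf{M}^{-1}$ and the $t$-analogue are \emph{a priori} rational in $\lambda$ with possible poles only at $\lambda_i,\lambda_i^*$; one kills the poles at $\lambda_i^*$ by the very construction (the columns of $\mathbf{Y}_N$ are the generalized kernel vectors, so the residues annihilate), and the poles at $\lambda_i$ are removed using the symmetry relation $\mathbf{T}_N(\lambda;x,t)\mathbf{T}_N^\dagger(\lambda^*;x,t)=\mathbb{I}$ coming from Lemma~\ref{lemma2} together with $\varphi_i(x,t)=\psi_i^\dagger(x,t)$. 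By the Liouville theorem the remaining expression equals its value at $\lambda=\infty$, which from the expansion $\mathbf{T}_N=\mathbb{I}-\big(\sum_{i}\sum_{j}\varphi_i^{[j]}\otimes(\cdots)\big)\lambda^{-1}+O(\lambda^{-2})$ reproduces $\mathbf{U}^{[N]},\mathbf{V}^{[N]}$ with $\mathbf{Q}$ replaced according to $\mathbf{Q}^{[N]}=\mathbf{Q}+[\sigma_3,\mathbf{T}_{N,1}]$ where $\mathbf{T}_{N,1}=\mathbf{Y}_N\mathbf{M}^{-1}\mathbf{Y}_N^\dagger$ is the $O(\lambda^{-1})$ coefficient; reading off the off-diagonal block gives exactly \eqref{eq:back-N}, i.e. $\mathbf{q}^{[N]}=\mathbf{q}-2\,\mathbf{Y}_{N,2}\mathbf{M}^{-1}\mathbf{Y}_{N,1}^\dagger$.

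I expect the main obstacle to be purely bookkeeping: showing that the confluent limit of the product of simple-pole Darboux matrices has coefficient matrix \emph{exactly} $\mathbf{M}$ with the binomial double sum $M_{i,j}^{[s,\ell]}=\sum_{p=0}^s\sum_{m=0}^\ell\binom{p+m}{m}\frac{(-1)^m}{(\lambda_j-\lambda_i^*)^{p+m+1}}(\varphi_i^{[s-p]})^\dagger\varphi_j^{[\ell-m]}$ rather than some other equivalent normalization. The cleanest route is to avoid the limiting argument for this identification and instead verify it directly: posit $\mathbf{T}_N$ of the claimed form $\mathbb{I}-\mathbf{Y}_N\mathbf{M}^{-1}\mathbf{D}\mathbf{Y}_N^\dagger$, expand near each pole $\lambda_j^*$ of order $r_j$, and check that the conditions "$\mathbf{T}_N(\lambda;x,t)$ applied to the $s$-jet of $\varphi_j$ is regular at $\lambda_j^*$'' for $0\le s\le r_j-1$ are equivalent to the matrix identity $\mathbf{M}=\mathbf{Y}_N^\dagger\mathbf{Y}_N$-type Cauchy structure — this is a finite linear-algebra computation using only the Leibniz rule for $\partial_\lambda$ and the geometric-series expansion of $(\lambda-\lambda_j^*)^{-1}$ around $\lambda_j$, which manufactures the binomial coefficients $\binom{p+m}{m}$ and the sign $(-1)^m$ automatically. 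A secondary (minor) point is checking $\mathbf{M}=\mathbf{M}^\dagger$ and $\det\mathbf{M}\neq0$ generically, so that $\mathbf{M}^{-1}$ makes sense; Hermiticity is immediate from the index symmetry of $M_{i,j}^{[s,\ell]}$ under simultaneously swapping $(i,s)\leftrightarrow(j,\ell)$ and conjugating, and nondegeneracy follows because $\mathbf{M}$ is (a confluent limit of) a Gram-type matrix of the independent vectors $\Phi(\lambda_i^{(a)})\mathbf{v}$, hence positive definite for generic spectral data.
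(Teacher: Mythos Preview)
Your proposal is correct and follows essentially the same route as the paper: both obtain the high-order-pole Darboux matrix from iterated simple-pole transformations (Lemma~\ref{thm1}), posit the compact ansatz with poles at the $\lambda_i^*$, impose the residue/kernel conditions coming from $\mathbf{T}_N(\lambda)\mathbf{T}_N^\dagger(\lambda^*)=\mathbb{I}$ and from holomorphicity of $\mathbf{T}_N\Phi\mathbf{T}_N^{-1}(\cdot;0,0)$ to pin down the vectors, and then read off~\eqref{eq:back-N} from the $\lambda^{-1}$ coefficient at infinity. The only cosmetic difference is that you first frame the construction as a confluent limit of clustered simple eigenvalues before switching to the direct residue verification, whereas the paper iterates at the same $\lambda_i$ from the outset; your ``cleanest route'' paragraph is exactly the paper's argument.
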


\begin{proof}
The general Darboux matrices can be constructed by the successive iterations of elementary Darboux transformation. Suppose that we iterate the Darboux transformation at $\lambda=\lambda_i$ for $r_i$ times. The Darboux matrices can be obtained by the above-mentioned Theorem \ref{thm1}. We merely need to rewrite the general Darboux matrices by the above compact form \eqref{eq:darboux}.

By the form of poles, assume that the general Darboux matrix can be written as the form
\begin{equation}
\mathbf{T}_N(\lambda;x,t)=\mathbb{I}-\sum_{l=1}^{k}\sum_{i=1}^{r_l}\frac{1}{(\lambda-\lambda_l^*)^i}\sum_{j=1}^{r_l-i+1}|z_l^{[j+i-1]}(x,t)\rangle\langle y_l^{[j]}(x,t)|,
\end{equation}
where the notations $|\cdot \rangle, \langle \cdot |$ represent the row and column vector respectively, and $|\cdot \rangle=(\langle \cdot |)^{\dag}$.
By the formula
\begin{equation}
\mathbf{T}_N(\lambda;x,t)\mathbf{T}_N^{\dag}(\lambda^*;x,t)=\mathbb{I}_{n+1},
\end{equation}
we have the residues:
\begin{equation}
\underset{\lambda=\lambda_i}{\rm Res}\left((\lambda-\lambda_i)^l\mathbf{T}_N(\lambda;x,t)\mathbf{T}_N^{\dag}(\lambda^*;x,t)\right)=0,\quad l=0,1,\cdots, r_i-1;\quad i=1,2,\cdots, k,
\end{equation}
which imply that
\begin{equation}\label{eq:kernel1}
\sum_{j=1}^{l}\left[\frac{1}{(j-1)!}\left(\frac{\rm d}{{\rm d} \lambda}\right)^{j-1}\mathbf{T}_N(\lambda;x,t)|_{\lambda=\lambda_i}\right]\left|y_i^{[l-j+1]}(x,t)\right\rangle=0.
\end{equation}

On the other hand, through the holomorphic function
\begin{equation}
\mathbf{T}_N(\lambda;x,t)\Phi(\lambda;x,t)\mathbf{T}_N^{-1}(\lambda;0,0)=\mathbf{T}_N(\lambda;x,t)\Phi(\lambda;x,t)\mathbf{T}_N^{\dag}(\lambda^*;0,0)
\end{equation}
we have
\begin{equation}
\underset{\lambda=\lambda_i}{\rm Res}\left((\lambda-\lambda_i)^l\mathbf{T}_N(\lambda;x,t)\mathbf{T}_N^{\dag}(\lambda^*;x,t)\right)=0,\qquad l=0,1,\cdots, r_i-1;\qquad i=1,2,\cdots, k,
\end{equation}
which can yield
\begin{multline}\label{eq:kernel2}
\sum_{j=1}^{l}\left(\frac{1}{(j-1)!}\left(\frac{\rm d}{{\rm d} \lambda}\right)^{j-1}\mathbf{T}_N(\lambda;x,t)|_{\lambda=\lambda_i}\right)       \\
\times\left(\sum_{s=1}^{l-j+1} \frac{1}{(s-1)!}\left(\frac{\rm d}{{\rm d} \lambda}\right)^{s-1}\Phi(\lambda;x,t)|_{\lambda=\lambda_i}\left|y_i^{[l-j+2-s]}(0,0)\right\rangle\right)=0.
\end{multline}

Combining Eq.~\eqref{eq:kernel1} with Eq.~\eqref{eq:kernel2}, we arrive at
\begin{equation}
\left|y_i^{[m]}\right\rangle=\sum_{s=1}^{m} \frac{1}{(s-1)!}\left(\frac{\rm d}{{\rm d} \lambda}\right)^{s-1}\Phi(\lambda;x,t)|_{\lambda=\lambda_i}\left|y_i^{[m+1-s]}(0,0)\right\rangle.
\end{equation}
Meanwhile, by the linear algebra and Eq.~\eqref{eq:kernel1}, the row vectors $|z_i^{[m]}(x,t)\rangle$ can be solved exactly, which deduces the formula \eqref{eq:darboux}.
The B\"acklund transformation \eqref{eq:back} between old and new potential functions can be deduced by the following equations:
\begin{equation}
\frac{\partial}{\partial x}\mathbf{T}_N(\lambda;x,t)+ \mathbf{T}_N(\lambda;x,t) \mathbf{U}(\lambda;x,t)=\mathbf{U}^{[N]}(\lambda;x,t)\mathbf{T}_N(\lambda;x,t),
\end{equation}
through the expansions at the neighborhood of $\infty$.
\end{proof}

\begin{remark}
The general Darboux matrix and the corresponding B\"acklund transformation can be rewritten with the other formulation due to the Taylor expansion on the spectral parameters. Here we just write it in a compact and uniform form.
\end{remark}

A direct application for the above theorem is to construct new exact solutions by the B\"acklund transformation. For the vector NLS equations, we can obtain the multi-solitons on the zero background, multi-breathers on the non-vanishing background and so on. In the current work, we focus on the rational RW solutions at the branch point with a genus zero curve. Actually, the above formula \eqref{eq:back-N} is flexible,  we do not need to confirm the fixed formula to construct the exact solutions. Thus when we confront the explicit problems, the suitable modifications on the formulas is better to represent them in a more compact form.

\section{Fundamental vector rational RWs: classification, asymptotics and maximal amplitude}\label{sec3}

In this section, the vector rational RW solutions for the $n$-NLS equations will be explicitly constructed by the theorems in previous section. Here we mainly focus on the vector rational RW solutions on the branch points of multi-sheeted Riemann surface with maximal multiple root. Actually, other distinct types of exact solutions o the $n$-NLS equations can also be constructed by the Darboux transformation. Recently, we gave a short paper~\cite{zlyk2020} for the first-order vector RWs of the $n$-NLS equations, and we here give the detailed discussion about the general first-order vector RWs and their properties.

\subsection{The formula of first-order vector RW solutions with $\left(n+1\right)$-multiple root}

To study the vector rational RWs of the $n$-NLS system, we start from its general non-zero plane-wave (PW) solution
\begin{gather}\label{plane}
\mathbf{q}^{\mathrm{bg}}=\left(q_1^{\mathrm{bg}}, q_2^{\mathrm{bg}}, \cdots, q_n^{\mathrm{bg}}\right)^{\rm T},\quad
q_j^{\mathrm{bg}}=a_j\mathrm{e}^{\mathrm{i}\theta_j},\quad \theta_j=b_jx-\left(\frac{1}{2}\,b_j^2-\left\|\mathbf{a}\right\|_2^2\right)t,\,\,\, \,\, j=1,2,...,n,
\end{gather}
where $\mathbf{a}=\left(a_1, a_2, \cdots, a_n\right)^\mathrm{T},\,\,a_j\not=0,\, b_j\in\mathbb{R}$, the phase and group velocities for the PW of the $j$-th component are $\frac{1}{2}\,b_j-\left\|\mathbf{a}\right\|_2^2/b_j$ and $b_j$, respectively. The MI analysis for the PW \eqref{plane} is given in the {\bf Appendix C} by the method of squared eigenfunction. Actually, the MI analysis for the plane-wave solution can also be done by the linear algebra. Through the squared eigenfunction method, we can understand that the consistence between the RW generation and MI is originated from the solutions of Lax pair.

To apply the above-mentioned Darboux transformation with the initial non-zero PW background (\ref{plane}), one needs to solve the varying-coefficient Lax pair (\ref{laxs}) and (\ref{laxt}) with $\mathbf{Q}=\mathbf{Q}^{\mathrm{bg}}=\mathbf{Q}\big|_{{\bf q}={\bf q}^{\rm bg}}$:
\begin{gather}\label{initial}
\left\{\begin{aligned}
\Phi_x&=\mathbf{U}^{\mathrm{bg}}\left(\lambda; x, t\right)\Phi,\quad \mathbf{U}^{\mathrm{bg}}(\lambda; x, t):=\mathrm{i}\left(\lambda\sigma_3+\mathbf{Q}^{\mathrm{bg}}\right) , \\
\Phi_t&={\mathbf V}^{\mathrm{bg}}\left(\lambda; x, t\right)\Phi,\quad
\mathbf{V}^\mathrm{bg}\left(\lambda; x, t\right):={\rm i}\lambda\left(\lambda\sigma_3+\mathbf{Q}^{\mathrm{bg}}\right)
+\frac{1}{2}\sigma_3\left[\mathbf{Q}^{\mathrm{bg}}_x-\mathrm{i}\left(\mathbf{Q}^{\mathrm{bg}}\right)^2\right],
\end{aligned}\right.
\end{gather}
which can be reduced to the system of linear constant-coefficient PDEs
\begin{equation}\label{constant}
\Psi_x=\mathrm{i}\mathbf{H}\Psi,\quad \Psi_t=\mathrm{i}\left[\frac{1}{2}\mathbf{H}^2+\lambda\mathbf{H}-\left(\sum_{j=1}^na_j^2+\frac{\lambda^2}{2}\right)\mathbb{I}_{n+1}\right]\Psi
\end{equation}
via the gauge transformation
%\begin{equation}\label{gauge}
$\Phi=\mathbf{G}\Psi$ with $\mathbf{G}=\mathrm{diag}\left(1, \mathrm{e}^{\mathrm{i}\theta_1}, \cdots,\mathrm{e}^{\mathrm{i}\theta_n}\right)$, where
\begin{gather*}
\mathbf{H}=\begin{pmatrix}
\lambda& \mathbf{a}^\dagger  \\
\mathbf{a}&-\lambda\mathbb{I}_n+\mathbf{b}
\end{pmatrix},\quad \mathbf{b}=\mathrm{diag}\left(b_1, b_2, \cdots, b_n\right).
\end{gather*}

To study vector rational RW solutions of system (\ref{n-NLS}) we require that the coefficient matrix $\mathbf{H}$ in system (\ref{constant}) must admit the multiple eigenvalues.
Here suppose that $\mathbf{H}$ has an $(n+1)$-multiple eigenvalue at $\lambda=\lambda_0$. In the following, we will firstly derive the existence condition of the $(n+1)$-multiple eigenvalue.
%To derive the existence condition of $(n+1)$-multiple eigenvalue for matrix $\mathbf{H}_0:=\mathbf{H}\left|\right._{\lambda=\lambda_0}$,
For convenience, we define $\mathbf{H}_0:=\mathbf{H}\left|\right._{\lambda=\lambda_0}$ and a polynomial of  degree $(n+1)$ as
\bee\label{Dx}
\begin{array}{rl}
D(z):=& \mathrm{det}\left[\left(z-\lambda_0\right)\mathbb{I}_{n+1}-\mathbf{H}_0\right] \vspace{0.1in}\\
 =&\displaystyle\left(z-2\lambda_0\right)\prod_{j=1}^n\left(z+b_j\right)-\sum_{j=1}^n\left[a_j^2 \prod_{k\ne j}\left(z+b_k\right)\right],
\end{array}
\ene
where $z-\lambda_0$ is regarded as the eigenvalue of $\mathbf{H}_0$.

Then one has the following lemma.
\begin{lemma}\label{root-condition}
Given the real-valued amplitudes $a_j$'s ($a_j\not=0$) and wave-numbers $b_j$'s in the PW background (\ref{plane}), the polynomial $D(z)$ has an (n+1)-multiple root $z=z_0$ if and only if %$b_j's$ are mutually different
$b_k\not=b_j\, (k\not=j)$ and
\begin{gather}\label{ab-relation}
a_j^2=-\frac{\left(b_j+z_0\right)^{n+1}}{\prod_{k\ne j}\left(b_j-b_k\right)}, \quad z_0=\frac{1}{n+1}\left(2\lambda_0-\sum_{k=1}^n b_k\right).
\end{gather}
\end{lemma}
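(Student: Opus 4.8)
The plan is to analyze when the degree-$(n+1)$ polynomial $D(z)$ defined in \eqref{Dx} has a root of full multiplicity $n+1$. Since $D(z)$ is monic of degree $n+1$, having an $(n+1)$-multiple root $z=z_0$ is equivalent to the factorization $D(z)=(z-z_0)^{n+1}$. Thus the whole statement reduces to comparing the two representations
\begin{equation}\label{eq:factor-goal}
\left(z-2\lambda_0\right)\prod_{j=1}^n\left(z+b_j\right)-\sum_{j=1}^n a_j^2\prod_{k\ne j}\left(z+b_k\right)=\left(z-z_0\right)^{n+1}.
\end{equation}
First I would observe that the term $\left(z-2\lambda_0\right)\prod_{j}(z+b_j)$ contributes the top coefficient and determines, via the coefficient of $z^n$, the value $z_0=\frac{1}{n+1}\bigl(2\lambda_0-\sum_k b_k\bigr)$ claimed in \eqref{ab-relation}; this is immediate since the $z^n$-coefficient on the left is $\sum_j b_j-2\lambda_0$ (the $a_j^2$-terms have degree $n-1$ and do not contribute) while on the right it is $-(n+1)z_0$. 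So the formula for $z_0$ is forced, and the content is the system of equations for the $a_j^2$.

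The key device is to evaluate \eqref{eq:factor-goal} at the $n$ points $z=-b_j$. The product $\prod_{k}(z+b_k)$ vanishes there, and among the sum only the $j$-th term survives, so the left side collapses to $-a_j^2\prod_{k\ne j}(b_k-b_j)$, while the right side is $(-b_j-z_0)^{n+1}=-(b_j+z_0)^{n+1}$ (the sign works because $n+1$ and the extra minus combine correctly — I will track this carefully). This yields exactly
\begin{equation}
a_j^2=-\frac{(b_j+z_0)^{n+1}}{\prod_{k\ne j}(b_j-b_k)},
\end{equation}
which is the second assertion of the lemma, and it manifestly requires $b_k\ne b_j$ for $k\ne j$ so that the denominator is nonzero. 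For the necessity direction, this evaluation argument shows these relations must hold; for sufficiency, I would argue that the difference $D(z)-(z-z_0)^{n+1}$ is a polynomial of degree at most $n-1$ (the degree-$(n+1)$ and degree-$n$ terms already agree once $z_0$ is chosen as above, provided one also checks the degree-$n$ coefficient is automatically matched — which it is, again because the $a_j^2$-terms are lower order) that vanishes at the $n$ distinct points $z=-b_1,\dots,-b_n$, hence is identically zero.

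The main obstacle, and the step needing the most care, is the sufficiency bookkeeping: one must confirm that matching the leading two coefficients together with the $n$ interpolation conditions at $z=-b_j$ is genuinely enough to pin down a degree-$(n+1)$ polynomial, i.e.\ that $D(z)-(z-z_0)^{n+1}$ really has degree $\le n-1$ rather than $\le n$. This hinges on the observation that in $D(z)$ the coefficient of $z^n$ involves only the $b_j$'s and $\lambda_0$ (not the $a_j$'s), so once $z_0$ is defined by $(n+1)z_0=2\lambda_0-\sum b_k$ the $z^n$ coefficients automatically coincide; then $n$ vanishing conditions on a polynomial of degree $\le n-1$ force it to be zero. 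I would also note the subtlety that the $a_j^2$ given by \eqref{ab-relation} must be positive reals for the construction to correspond to a genuine PW background, but as a purely algebraic statement about roots of $D(z)$ the lemma holds as stated; the positivity constraint is what later forces the parameter restrictions and the Benjamin–Feir (MI) condition, and I would flag it as a remark rather than fold it into this proof.
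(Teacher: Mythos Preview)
Your approach is essentially the same as the paper's: both identify $D(z)=(z-z_0)^{n+1}$ as the target factorization, read off $z_0$ from the $z^n$-coefficient, evaluate at $z=-b_j$ to obtain the formula for $a_j^2$, and for sufficiency observe that $D(z)-(z-z_0)^{n+1}$ is a polynomial of degree at most $n-1$ vanishing at the $n$ distinct points $-b_1,\dots,-b_n$.

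There is, however, a genuine gap in your necessity argument for the distinctness of the $b_j$'s. You write that the formula ``manifestly requires $b_k\ne b_j$ so that the denominator is nonzero,'' but this is circular: your evaluation at $z=-b_j$ already presupposes that only the $j$-th summand survives, which fails if some $b_k$ coincide. If, say, $b_1=b_2=\cdots=b_m$, then every term $a_j^2\prod_{k\ne j}(z+b_k)$ with $j\le m$ still contains a factor $(z+b_1)$ and vanishes at $z=-b_1$, so the evaluation yields $0=(-b_1-z_0)^{n+1}$ rather than the claimed formula, and you have not extracted any contradiction. The paper closes this gap by a direct reductio: factoring $D(z)=p(z)(z+b_1)^{m-1}$ with $\deg p=n-m+2$, the hypothesis $D(z)=(z-z_0)^{n+1}$ forces $p(z)=(z+b_1)^{n-m+2}$, and evaluating $p(-b_1)$ then gives $\sum_{j=1}^m a_j^2=0$, impossible for nonzero real $a_j$. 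You should insert this (or an equivalent) argument before invoking the evaluation at $z=-b_j$.
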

\begin{proof} (Necessity $\Rightarrow$) Since $z=z_0$ is the $(n+1)$-multiple root of $D(z)$, that is,
\bee
\label{dchi}
D(z)=(z-z_0)^{n+1}.
\ene
%From the definition of $D(z)$ in (\ref{Dx}), it follows that
%\begin{gather*}
%D(z)=\left(z-2\lambda_0\right)\prod_{j=1}^n\left(z+b_j\right)-\sum_{j=1}^n a_j^2 \prod_{k\ne j}\left(z+b_k\right).
%\end{gather*}
%One can pose the first two terms of $D(z)$ as
%$D(z)=z^{n+1}+\left(\sum_{j=1}^n b_j-2\lambda_0\right)z^n+ \cdots$.
Matching the $\mathcal{O}\left(z^n\right)$ term in Eqs.~(\ref{Dx}) and (\ref{dchi}) makes sure that the second expression of Eq.~(\ref{ab-relation}) holds.
%\begin{gather*}
%\chi_0=\frac{1}{n+1}\left(2\lambda_0-\sum_{k=1}^n b_k\right).
%\end{gather*}
One can perform the reduction to absurdity to verify that $b_j's$ are mutually different. Without loss of generality, one can suppose $b_1=b_2=\cdots=b_m$, $m\ge 2$ and $b_1, b_{m+1}, b_{m+2}, \cdots, b_n$ are mutually different.
In this case, one obtains
$D(z)=p(z)\left(z+b_1\right)^{m-1}$,
where
\begin{gather*}
p(z)=\left(z-2\lambda_0\right)\prod_{j=m}^n\left(z+b_j\right)-\sum_{j=1}^m a_j^2\prod_{k=m+1}^n\left(z+b_k\right)-\sum_{j=m+1}^n a_j^2\prod_{\substack{k=m\\k\ne j}}^n\left(z+b_k\right).
\end{gather*}
Since $D(z)$ has an $(n+1)$-multiple root, then it follows that
$p(z)=\left(z+b_1\right)^{n-m+2}$.
Then one deduces $\sum_{j=1}^m a_j^2=0$, which is a contradiction. At last, one deduces
$a_j^2=-\left(b_j+z_0\right)^{n+1}/\prod_{k\ne j}\left(b_j-b_k\right)$.

(Sufficiency $\Leftarrow$) Define the polynomial
\begin{gather*}
\widehat{p}(z):=\left(z-2\lambda_0\right)\prod_{j=1}^n\left(z+b_j\right)-\sum_{j=1}^n a_j^2 \prod_{k\ne j}\left(z+b_k\right)-\left(z-z_0\right)^{n+1}.
\end{gather*}
One can obtain that the degree of $\widehat{p}(z)$ is less than $n$ and $\widehat{p}(-b_j)=0$. Since $b_j's$ are mutually different, it follows that $D(z)=\left(z-z_0\right)^{n+1}$. Thus the Lemma follows.
\end{proof}

In Lemma \ref{root-condition}, one establishes the sufficient and necessary condition of the $(n+1)$-multiple root for $D(z)$. Next, one will seek the suitable and mutually different real $b_j's$ and non-zero real $a_j's$ such that the relation (\ref{ab-relation}) holds. To the end, one can discuss it in two cases: i) $\Im(\lambda_0)=0$; ii) $\Im(\lambda_0)\ne 0$.

 %As $\Im(\lambda_0)=0$, one can verify that $\mathbf{H}$ does not have $(n+1)$-multiple eigenvalue in the next lemma.

\begin{lemma}
For the case $\Im\left(\lambda_0\right)=0$, there do not exist suitable and mutually different real $b_j's$ and non-zero real $a_j's$ such that the relation (\ref{ab-relation}) holds, that is, $\mathbf{H}_0$ has no an $(n+1)$-multiple eigenvalue for the case.
\end{lemma}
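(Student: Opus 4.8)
The plan is to derive, by \emph{summing} the $n$ equations in~\eqref{ab-relation}, an identity that forces $\|\mathbf a\|_2^2\le 0$, which contradicts $a_j\neq 0$. The point is that nonexistence here is a pure sign obstruction: since $\Im(\lambda_0)=0$, the scalar $z_0$ in~\eqref{ab-relation} is real, so every factor $b_j+z_0$ and every product $\prod_{k\neq j}(b_j-b_k)$ is real and the formula automatically returns real numbers $a_j^2$; the only way it can fail to define admissible $a_j$'s is that one of these numbers is $\le 0$. Hence it suffices to show that~\eqref{ab-relation} implies $\sum_{j=1}^n a_j^2\le 0$.

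To obtain such a relation I would read off one further coefficient from the polynomial identity $D(z)=(z-z_0)^{n+1}$ used in the proof of Lemma~\ref{root-condition}: matching the $\mathcal O(z^n)$ term there gave the formula for $z_0$, and matching the $\mathcal O(z^{n-1})$ term gives a single linear relation in the $a_j^2$. Concretely, set $c_j:=b_j+z_0$ (mutually distinct reals, all nonzero since $c_j=0$ would force $a_j=0$), write $\mathbf{c}=(c_1,\dots,c_n)$ and $w:=z-z_0$; the formula for $z_0$ gives $z-2\lambda_0=w-\sum_k c_k$, so that
\begin{equation*}
D(z)=\left(w-\sum_{k}c_k\right)\prod_{j=1}^n(w+c_j)-\sum_{j=1}^n a_j^2\prod_{k\neq j}(w+c_k)=w^{n+1}.
\end{equation*}
The coefficient of $w^n$ vanishes automatically, and equating the coefficient of $w^{n-1}$ on the two sides yields
\begin{equation*}
\sum_{j=1}^n a_j^2=e_2(\mathbf{c})-e_1(\mathbf{c})^2=-\frac12\left[\left(\sum_{j=1}^n c_j\right)^2+\sum_{j=1}^n c_j^2\right],
\end{equation*}
with $e_1,e_2$ the first two elementary symmetric polynomials. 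Equivalently, one notes $a_j^2=-\operatorname{Res}_{z=c_j}\!\left[z^{n+1}/\prod_k(z-c_k)\right]$ and sums the residues using the Laurent expansion of $z^{n+1}/\prod_k(z-c_k)$ at $z=\infty$. Either way the right-hand side is $\le 0$, whereas $\sum_j a_j^2=\|\mathbf a\|_2^2>0$; this contradiction shows no admissible $a_j$'s and $b_j$'s exist, i.e.\ $\mathbf{H}_0$ has no $(n+1)$-multiple eigenvalue when $\Im(\lambda_0)=0$.

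All the computations are routine; the only place requiring care is the bookkeeping in the coefficient comparison above, and the one genuine idea is to sum the constraints rather than study them individually. I expect this to be essentially the whole difficulty: once the constraints are summed, the obstruction $\sum_j a_j^2\le 0$ is immediate, and it is precisely this obstruction that dissolves when $\Im(\lambda_0)\neq 0$ forces the $c_j$ to be genuinely complex — the content of the next lemma.
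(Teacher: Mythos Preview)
Your proof is correct but takes a genuinely different route from the paper's. The paper argues by case analysis: after ordering $b_1>\cdots>b_n$, it shows for $n$ odd that the formula~\eqref{ab-relation} forces $a_1^2<0$ (the numerator $(b_1+z_0)^{n+1}\ge 0$ and the denominator $\prod_{k\ge 2}(b_1-b_k)>0$), while for $n$ even it splits on the sign of $b_1+z_0$ and exhibits either $a_1^2<0$ or $a_2^2<0$. You instead sum all $n$ constraints and extract the global identity $\sum_j a_j^2=e_2(\mathbf c)-e_1(\mathbf c)^2=-\tfrac12\big[(\sum_j c_j)^2+\sum_j c_j^2\big]\le 0$, giving the contradiction in one stroke. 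The paper's argument is marginally more elementary---no symmetric polynomials or residue calculus---but requires the parity split; your argument is uniform in $n$, and, as you observe, it makes transparent why the obstruction evaporates once $\Im(\lambda_0)\neq 0$ renders $z_0$ (and hence the $c_j$) genuinely complex.
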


\begin{proof}
Without loss of generality, one can assume $b_1>b_2>\cdots>b_n$.
%Then one can verify the lemma by the parity of $n$ in two cases.
\begin{enumerate}
\item [1)] As $n$ is odd, one has
\begin{gather*}
a_1^2=-\frac{\left(b_1+z_0\right)^{n+1}}{\prod_{k=2}^n\left(b_1-b_k\right)}<0,
\end{gather*}
which is a contradiction.
\item [2)] As $n$ is even, one can verify this in the following two cases:
\begin{itemize}
\item As $b_1+z_0>0$, one derives that
\begin{gather*}
a_1^2=-\frac{\left(b_1+z_0\right)^{n+1}}{\prod_{k=2}^n\left(b_1-b_k\right)}<0,
\end{gather*}
which is a constradiction.
\item As $b_1+z_0\le 0$, it follows directly that $b_2+z_0<0$. Then one derives that
\begin{gather*}
a_2^2=-\frac{\left(b_2+\chi_0\right)^{n+1}}{\prod_{k\ne2}\left(b_2-b_k\right)}<0,
\end{gather*}
which is a contradiction.
\end{itemize}
\end{enumerate}
\end{proof}

\begin{lemma}\label{prop-jutixuanqu}
As $\Im\left(\lambda_0\right)\ne0$,  the parameters $a_j's$ and $b_j's$  can be  chosen as
\begin{gather}\label{jutixuanqu}
a_j=\zeta\csc\omega_j, \quad b_j=\zeta\cot\omega_j-2\Re\left(\lambda_0\right),\quad \zeta=\frac{2\Im\left(\lambda_0\right)}{n+1}, \quad \omega_j=\frac{j\pi}{n+1},
\end{gather}
such that the relation (\ref{ab-relation}) holds.
\end{lemma}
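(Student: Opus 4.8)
The plan is to verify directly that the choice \eqref{jutixuanqu} satisfies the system \eqref{ab-relation}, using elementary trigonometric identities together with a product-factorization of $z^{n+1}-1$ over roots of unity. First I would record that, by Lemma~\ref{root-condition}, it suffices to check the two relations in \eqref{ab-relation}; the second relation $z_0=\frac{1}{n+1}(2\lambda_0-\sum_k b_k)$ I would dispatch first, since with $b_j=\zeta\cot\omega_j-2\Re(\lambda_0)$ we get $\sum_{j=1}^n b_j = \zeta\sum_{j=1}^n\cot\omega_j - 2n\Re(\lambda_0)$, and $\sum_{j=1}^n\cot(j\pi/(n+1))=0$ by the antisymmetry $\cot(\pi-\theta)=-\cot\theta$ pairing $\omega_j$ with $\omega_{n+1-j}$ (and the middle term vanishing when $n$ is even). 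Hence $\sum_j b_j=-2n\Re(\lambda_0)$, and the proposed $z_0 = \frac{1}{n+1}(2\lambda_0 - \sum_k b_k) = \frac{1}{n+1}(2\lambda_0 + 2n\Re(\lambda_0))= 2\Re(\lambda_0)+\frac{2\ii\Im(\lambda_0)}{n+1} + \text{(check real/imag parts)}$; cleaning this up gives $z_0 = 2\Re(\lambda_0)\cdot\frac{1}{n+1}\cdot\dots$ — I would simply compute $b_j+z_0$ directly and show it equals $\zeta(\cot\omega_j+\ii)=\zeta\,\frac{\cos\omega_j+\ii\sin\omega_j}{\sin\omega_j}=\zeta\,\frac{\ee^{\ii\omega_j}}{\sin\omega_j}=\ii\zeta\csc\omega_j\,\ee^{\ii(\omega_j-\pi/2)}$, the key clean identity.

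Next, with $b_j+z_0 = \zeta(\cot\omega_j+\ii)$ in hand, the first relation becomes a claim about the product $\prod_{k\ne j}(b_j-b_k) = \zeta^{n}\prod_{k\ne j}(\cot\omega_j-\cot\omega_k)$, and we must show
\begin{equation*}
a_j^2 = \zeta^2\csc^2\omega_j = -\frac{\zeta^{n+1}(\cot\omega_j+\ii)^{n+1}}{\zeta^n\prod_{k\ne j}(\cot\omega_j-\cot\omega_k)} = -\frac{\zeta\,(\cot\omega_j+\ii)^{n+1}}{\prod_{k\ne j}(\cot\omega_j-\cot\omega_k)}.
\end{equation*}
Using $\cot\omega_j - \cot\omega_k = \frac{\sin(\omega_k-\omega_j)}{\sin\omega_j\sin\omega_k}$ and $\cot\omega_j+\ii = \frac{\ee^{\ii\omega_j}}{\sin\omega_j}$, the right-hand side reduces to a pure product of sines of the angles $\omega_k-\omega_j = (k-j)\pi/(n+1)$. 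The crucial step is then the evaluation $\prod_{k\ne j}\sin\!\big(\tfrac{(k-j)\pi}{n+1}\big) = \pm\,\frac{n+1}{2^{n}}$ (sign tracked carefully), which follows from the classical factorization $\prod_{m=1}^{n}\big(1-\ee^{2\pi \ii m/(n+1)}\big)=n+1$, or equivalently $\prod_{m=1}^{n}\sin\!\big(\tfrac{m\pi}{n+1}\big)=\tfrac{n+1}{2^{n}}$, after shifting the index $m=k-j$ and keeping track of which residues mod $n+1$ appear and the resulting accumulation of signs and of factors $\ee^{\ii\omega_j}$.

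The main obstacle I expect is precisely the bookkeeping of signs and phases in that last step: the product $\prod_{k\ne j}(\cot\omega_j-\cot\omega_k)$ ranges over $k\in\{1,\dots,n\}\setminus\{j\}$, so the differences $(k-j)$ run over $\{-(j-1),\dots,-1,1,\dots,n-j\}$ rather than a clean block, and one must confirm that the accumulated phase factor $\ee^{\ii(n)\omega_j}$ from the numerator $(\cot\omega_j+\ii)^{n+1}=\ee^{\ii(n+1)\omega_j}/\sin^{n+1}\omega_j$ combines with the $\sin\omega_k$ factors and the overall minus sign to produce exactly the positive real quantity $\zeta^2\csc^2\omega_j$. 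A convenient way to sidestep brute force is to introduce $\widehat p(z) = (z-2\lambda_0)\prod_j(z+b_j) - \sum_j a_j^2\prod_{k\ne j}(z+b_k) - (z-z_0)^{n+1}$ exactly as in the sufficiency half of Lemma~\ref{root-condition}: it has degree $\le n-1$ and, once one checks $\widehat p(-b_j)=0$ for each $j$ (which is exactly the single relation $a_j^2 = -(b_j+z_0)^{n+1}/\prod_{k\ne j}(b_j-b_k)$ evaluated at $z=-b_j$), it vanishes identically. So the entire lemma collapses to verifying that one scalar identity at $n$ points, and then invoking Lemma~\ref{root-condition}. I would therefore present the proof as: (i) compute $b_j+z_0 = \zeta(\cot\omega_j+\ii)$ and $\sum_j b_j$; (ii) reduce the target identity to $\prod_{m=1}^{n}\sin(m\pi/(n+1)) = (n+1)/2^n$; (iii) cite/prove that classical product formula; (iv) assemble, tracking the sign via the parity of the number of negative differences $(k-j)$.
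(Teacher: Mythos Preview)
Your approach is correct and will go through: the identity reduces exactly as you say to $(b_j+z_0)=\zeta(\cot\omega_j+\ii)=\zeta\,\ee^{\ii\omega_j}/\sin\omega_j$, so $(b_j+z_0)^{n+1}=\zeta^{n+1}(-1)^{j}/\sin^{n+1}\omega_j$, and the denominator product is handled by shifting $k-j\mapsto k-j+(n+1)$ on the negative indices to hit $\{1,\dots,n\}\setminus\{n+1-j\}$ with $j-1$ sign flips, after which $\prod_{m=1}^{n}\sin(m\pi/(n+1))=(n+1)/2^{n}$ closes the computation. Two small bookkeeping slips to fix: the product $\prod_{k\ne j}(b_j-b_k)$ has $n-1$ factors, so the prefactor is $\zeta^{\,n-1}$, not $\zeta^{n}$; and the ``middle term'' of $\sum_j\cot\omega_j$ occurs (and vanishes as $\cot(\pi/2)=0$) when $n$ is \emph{odd}, not even.

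By comparison, the paper's own proof is written constructively rather than as a verification: it \emph{derives} the formulas by positing the polar ansatz $(n+1)(b_j+z_0)=\rho_j\ee^{\ii\omega_j}$ with prescribed angles $\omega_j=j\pi/(n+1)$, reads off $\rho_j$ and $b_j$ from the real and imaginary parts, and then simply asserts that substitution into \eqref{ab-relation} yields $a_j=\pm\zeta\csc\omega_j$ without displaying the product computation. Your route is more explicit on precisely the step the paper omits (the sine-product identity is the actual content of ``substituting\dots derives''), while the paper's framing has the advantage of explaining where the choice \eqref{jutixuanqu} comes from. Either presentation is fine; if you want the best of both, you could preface your verification with the one-line observation that the ansatz $b_j+z_0=\zeta\,\ee^{\ii\omega_j}/\sin\omega_j$ is forced once you demand $(b_j+z_0)^{n+1}$ be real with equispaced phases.
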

\begin{proof}
To find the suitable $a_j's$ and $b_j's$ satisfying the relation (\ref{ab-relation}), it is a natural idea that one sets $b_j+z_0$ in the right hand side of relation (\ref{ab-relation}) $(n+1)$-multiple root. By the symmetry of the potential $\mathbf{q}$ in the $n$-NLS equation (\ref{n-NLS}), one can suppose that
\begin{gather*}
2\lambda_0+nb_j-\sum_{k\ne j}b_k=\rho_j\mathrm{e}^{i\omega_j},\quad j=1, 2, \cdots, n,
\end{gather*}
which lead to
%\begin{gather*}
$\rho_j=2\Im\left(\lambda_0\right)\csc\omega_j, \quad b_j=\zeta\cot\omega_j-2\Re\left(\lambda_0\right).$
%\end{gather*}
It is noted that $b_1, b_2, \cdots, b_n$ are mutually different. Substituting $b_j's$ into relation (\ref{ab-relation}) derives
$a_j=\pm \zeta\csc\varphi_j$. Without loss of generality, one here chooses $a_j=\zeta\csc\varphi_j$. This completes the proof of the lemma.
\end{proof}

\begin{remark} It follows from Lemma \ref{prop-jutixuanqu} that each family of parameters $(a_j,\, b_j)$ of the component $q_j^{\rm bg}$ is located on the hyperbola on the $(a_j,\, b_j)$-plane
\bee\label{abr}
  \frac{a_j^2}{\zeta^2}-\frac{\left(b_j+2\Re(\lambda_0)\right)^2}{\zeta^2}=1,\quad j=1,2,...,n.
\ene
with center $(0,\, -2\Re\left(\lambda_0\right))$, vertices $(\pm \zeta,\, -2\Re\left(\lambda_0\right))$, asymptotes $b_j=\pm x-2\Re\left(\lambda_0\right)$, and slopes of asymptotes $\pm 1$.

\end{remark}

\begin{remark}
Lemma \ref{prop-jutixuanqu} gives the explicit existence condition for the $(n+1)$-multiple eigenvalue of matrix $\mathbf{H}\left(\lambda_0\right)$. In fact, it  is also an existence condition for the vector rational RW solutions of system (\ref{n-NLS}).
\end{remark}

With Eq.~(\ref{jutixuanqu}),  one can simplify the $(n+1)$-multiple eigenvalue $z_0$ and $\mathbf{B}:=\mathbf{H}_0-\left(z_0-\lambda_0\right) \mathbb{I}_{n+1}$ as
\begin{gather}\label{BA}
\mathbf{B}=\begin{pmatrix}
 \mathrm{i}n\zeta &\mathbf{a}^\dagger \vspace{0.05in}\\
\mathbf{a}&-z_0 \mathbb{I}_n-\mathbf{b}
\end{pmatrix},\quad
z_0=2\Re\left(\lambda_0\right)+\mathrm{i}\zeta=\frac{2(\lambda_0+n\Re\left(\lambda_0\right))}{n+1}.
\end{gather}

%It is well known that Darboux transformation  is a direct method to derive the localized wave solutions in integrable system.
Based on the Darboux transformation technique in Theorem \ref{thm2} and the existence condition (\ref{jutixuanqu}), one can deduce a family of
analytical vector rational RW solutions for the $n$-NLS equation (\ref{n-NLS}).

\begin{theorem}\label{dingli-RW}
The formula for the first-order vector rational rogue wave solutions $\mathbf{q}^{[1]}=(q_1^{[1]}(x,t), \cdots, q_n^{[1]}(x,t))^{\rm T}$ of the $n$-NLS equation (\ref{n-NLS}) is found as
\begin{gather}\label{RW}
q_j^{[1]}(x,t)=q_j^{\mathrm{bg}}\left(1-\frac{2\mathrm{i}\left(n+1\right)\zeta}{a_j}
\frac{\left(\mathbf{A}_{j+1}\bsm{\beta}\right)\left(\mathbf{A}_1\bsm{\beta}\right)^\dagger}{\left(\mathbf{A}\bsm{\beta}\right)^\dagger
\left(\mathbf{A}\bsm{\beta}\right)}\right),\quad j=1,\ldots,n,
\end{gather}
and then the intensity of the vector potential $\mathbf{q}^{[1]}$ is
\begin{gather}\label{RW-mf}
\|\mathbf{q}^{[1]}\|_2^2=\left\|\mathbf{a}\right\|_2^2+\frac{\partial^2}{\partial x^2} \ln \left\|\mathbf{A}\bsm{\beta}\right\|_2^2,
\end{gather}
where the matrix polynomial $\mathbf{A}(x,t)$ is defined as
\begin{gather}
\label{AB}
\mathbf{A}(x,t)=\sum_{s=0}^n\sum_{k=0}^{\lfloor n/2 \rfloor}\frac{\mathrm{i}^{s+k}\mathbf{B}^{s+2k}}{2^k\, s!\, k!}\left(x+z_0t\right)^s t^k,
\end{gather}
with $\lfloor\cdot\rfloor$ standing for the integer part and ${\bf B}$ given by Eq.~(\ref{BA}), and the constant column vector $\bsm{\beta}$ is linearly independent with
\begin{gather*}
\bsm{\xi}_0=\left(1, \frac{a_1}{z_0+b_1}, \frac{a_2}{z_0+b_2}, \cdots, \frac{a_n}{z_0+b_n}\right)^\mathrm{T},
\end{gather*}
\end{theorem}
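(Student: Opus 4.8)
The plan is to construct the solution by applying the $N=n+1$ high-order pole Darboux transformation of Theorem~\ref{thm2} at the single point $\lambda=\lambda_0$, but with the crucial twist that $\lambda_0$ is now an $(n+1)$-multiple eigenvalue of $\mathbf{H}$, so the seed solutions $\varphi^{[j]}$ are generated by expanding $\Phi(\lambda;x,t)\mathbf{v}(\lambda)$ in a Taylor/Laurent series about $\lambda_0$ rather than by genuinely distinct eigenvalues. First I would write down the explicit solution $\Phi$ of the constant-coefficient system \eqref{constant} with $\mathbf{Q}=\mathbf{Q}^{\mathrm{bg}}$: after the gauge transformation $\Phi=\mathbf{G}\Psi$, the matrix $\Psi$ is $\exp\!\big(\mathrm{i}\mathbf{H}x+\mathrm{i}[\tfrac12\mathbf{H}^2+\lambda\mathbf{H}-(\sum a_j^2+\tfrac{\lambda^2}{2})\mathbb{I}]t\big)$ acting on a constant vector $\mathbf{v}(\lambda)$. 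Since at $\lambda=\lambda_0$ the matrix $\mathbf{H}_0$ has the single eigenvalue $z_0-\lambda_0$ with $\mathbf{B}=\mathbf{H}_0-(z_0-\lambda_0)\mathbb{I}$ nilpotent-like (in fact $\mathbf{B}$ has a one-dimensional kernel spanned by $\bsm{\xi}_0$, which is why $\mathbf{B}^{n+1}$ annihilates the relevant jet), the exponential collapses to a finite polynomial in $\mathbf{B}$ times $(x+z_0 t)$ and $t$. Collecting the $x$- and $t$-dependence and using $\tfrac12\mathbf{H}^2+\lambda_0\mathbf{H}$ evaluated on the generalized eigenspace reduces to the shift by $z_0$ in the combination $x+z_0 t$ plus a quadratic-in-$\mathbf{B}$ correction $\tfrac{\mathrm{i}}{2}\mathbf{B}^2 t$, which is exactly the origin of the double sum in \eqref{AB}: the $s$-sum is $\exp(\mathrm{i}\mathbf{B}(x+z_0t))$ truncated at order $n$ (higher powers of $\mathbf{B}$ vanish on the jet), and the $k$-sum is $\exp(\tfrac{\mathrm{i}}{2}\mathbf{B}^2 t)$ similarly truncated.

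Next I would specialize the Darboux transformation formula \eqref{eq:back-N}. With $k=1$, $r_1=N=n+1$, the matrix $\mathbf{Y}_N$ is built from $\varphi^{[0]},\dots,\varphi^{[n]}$, and these are precisely the Taylor coefficients in $\lambda-\lambda_0$ of $\Phi(\lambda;x,t)\mathbf{v}(\lambda)$; choosing $\mathbf{v}(\lambda)$ so that its expansion produces the column vector $\bsm{\beta}$ (linearly independent from $\bsm{\xi}_0$) in the appropriate slot, the whole jet organizes into the single vector polynomial $\mathbf{A}(x,t)\bsm{\beta}$. The denominator $\mathbf{M}$ in \eqref{eq:darboux}, which for a single high-order pole is the Gram-type matrix with entries $M^{[s,\ell]}$, telescopes after summation against $\mathbf{Y}_N^\dagger \cdots \mathbf{Y}_N$ into the scalar $(\mathbf{A}\bsm{\beta})^\dagger(\mathbf{A}\bsm{\beta}) = \|\mathbf{A}\bsm{\beta}\|_2^2$ — this is the standard ``sum the geometric-like kernel'' computation that is routine but must be done carefully, tracking the $\binom{p+m}{m}$ and the $(\lambda_j-\lambda_i^*)^{-p-m-1}$ factors and using $\lambda_0^*-\lambda_0 = -2\mathrm{i}\Im(\lambda_0) = -\mathrm{i}(n+1)\zeta$. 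The numerator $\mathbf{Y}_{N,2}\mathbf{M}^{-1}\mathbf{Y}_{N,1}^\dagger$ then becomes $(\mathbf{A}_{j+1}\bsm{\beta})(\mathbf{A}_1\bsm{\beta})^\dagger/\|\mathbf{A}\bsm{\beta}\|_2^2$ for the $j$-th component, and pulling out the normalization $2\mathrm{i}(n+1)\zeta/a_j$ (coming from $\lambda_0-\lambda_0^*$ and the background amplitude $a_j$ in $q_j^{\mathrm{bg}}$) gives exactly \eqref{RW}. For the intensity formula \eqref{RW-mf} I would use the general trace-logarithm identity: from the Darboux-dressed potential, $\sum_j|q_j^{[1]}|^2 = \|\mathbf{a}\|_2^2 + \partial_x^2\ln\tau$ with $\tau = \det\mathbf{M}$, and then show $\det\mathbf{M}$ differs from $\|\mathbf{A}\bsm{\beta}\|_2^2$ by a factor that is constant in $x,t$ (a standard consequence of the Wronskian structure and the fact that $\det\mathbf{T}_N$ is $(x,t)$-independent), so the second logarithmic derivative is unchanged.

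The main obstacle I anticipate is establishing rigorously that the finite truncation in \eqref{AB} is correct, i.e., that $\mathbf{B}^{n+1}$ (and all higher powers, in the relevant combinations) act as zero on the $(n+1)$-jet of $\Phi\mathbf{v}$ at $\lambda_0$ — equivalently, that the degenerate exponential really is a polynomial of the stated bidegree and no cross-terms between the $(x+z_0t)$-series and the $t$-series survive beyond what is written. This rests on the structure of the generalized eigenspace of $\mathbf{H}_0$: because $D(z)=(z-z_0)^{n+1}$ has the $(n+1)$-multiple root, $(\mathbf{H}_0-(z_0-\lambda_0)\mathbb{I})^{n+1}=\mathbf{B}^{n+1}=\mathbf{0}$ by Cayley--Hamilton, which immediately kills the $s$-series beyond $s=n$; the subtler point is that $\tfrac12\mathbf{H}^2+\lambda\mathbf{H}$, when expanded about $\lambda_0$ and restricted modulo $\mathbf{B}^{n+1}$, contributes only the shift $z_0(x\to x+z_0t)$ at first order and the single $\tfrac{\mathrm{i}}{2}\mathbf{B}^2 t$ at second order, with all higher $\lambda$-derivatives absorbable into the same two series — this requires checking that $\partial_\lambda^2(\tfrac12\mathbf{H}^2+\lambda\mathbf{H})=\mathbb{I}$ is scalar (it is, since $\mathbf{H}$ is linear in $\lambda$ with $\partial_\lambda\mathbf{H}=\mathrm{diag}(1,-\mathbb{I}_n)=\sigma_3$ and $\partial_\lambda\mathbf{H}^2 = \sigma_3\mathbf{H}+\mathbf{H}\sigma_3$), so the Baker--Campbell--Hausdorff corrections vanish and the exponential factorizes cleanly. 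Once that combinatorial/linear-algebraic bookkeeping is pinned down, the rest is the routine Darboux substitution described above.
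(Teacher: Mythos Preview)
Your construction of $\mathbf{A}(x,t)$ as the truncated matrix exponential is essentially right, and your Cayley--Hamilton argument that $\mathbf{B}^{n+1}=\mathbf{0}$ (so the series terminates) matches the paper's reasoning. But the Darboux step is misconceived: you propose to apply the $(n+1)$-fold high-order-pole transformation of Theorem~\ref{thm2} with $k=1$, $r_1=n+1$, whereas the theorem you are proving concerns the \emph{first-order} rogue wave, which comes from a single one-fold Darboux transformation (Lemma~\ref{thm1}) at $\lambda=\lambda_0$. The polynomial richness of $\mathbf{A}\bsm{\beta}$ (degree up to $n$) is produced entirely by the nilpotent structure of $\mathbf{B}$ at the fixed spectral value $\lambda_0$; no jet in $\lambda$ is taken at this stage. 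That jet construction is what the paper later uses for the genuinely higher-order solutions $\mathbf{q}^{[N]}$ in Theorem~\ref{Nth-RW}.

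Consequently your claim that the $(n+1)\times(n+1)$ Gram matrix $\mathbf{M}$ ``telescopes'' to the scalar $\|\mathbf{A}\bsm{\beta}\|_2^2$ is false: an $(n+1)$-fold transformation would yield $\mathbf{q}^{[n+1]}$ with $\det\mathbf{M}$ in the denominator, a polynomial of much higher degree in $(x,t)$. The correct route is far shorter. Having built the single vector solution $\Phi_p=\mathbf{G}\mathbf{A}\bsm{\beta}\,e^{\mathrm{i}\varpi}$ of the Lax pair at $\lambda=\lambda_0$, apply the elementary dressing $\mathbf{T}_1=\mathbb{I}-\frac{\lambda_0-\lambda_0^*}{\lambda-\lambda_0^*}\,\frac{\Phi_p\Phi_p^\dagger}{\Phi_p^\dagger\Phi_p}$; since $\lambda_0-\lambda_0^*=\mathrm{i}(n+1)\zeta$ and the scalar phase $e^{\mathrm{i}\varpi}$ and the unitary diagonal $\mathbf{G}$ cancel in the projector, formula~\eqref{RW} drops out immediately from $\mathbf{Q}^{[1]}=\mathbf{Q}^{\mathrm{bg}}+(\lambda_0-\lambda_0^*)[\sigma_3,\mathbf{P}_1]$. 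For the intensity~\eqref{RW-mf}, the paper reads off the $(1,1)$ entry by matching orders in the dressing relations $\mathbf{T}_{p,x}+\mathbf{T}_p\mathbf{U}^{\mathrm{bg}}=\mathbf{U}^{[1]}\mathbf{T}_p$ and $\mathbf{T}_{p,t}+\mathbf{T}_p\mathbf{V}^{\mathrm{bg}}=\mathbf{V}^{[1]}\mathbf{T}_p$ and then uses $(\Phi_p^\dagger\Phi_p)_x=-(n+1)\zeta\,\Phi_p^\dagger\sigma_3\Phi_p$ to convert to a second logarithmic $x$-derivative; your $\partial_x^2\ln\det\mathbf{M}$ identity is the higher-order analogue, but here it reduces trivially since $\mathbf{M}$ is already the scalar $\Phi_p^\dagger\Phi_p/(\lambda_0-\lambda_0^*)$.
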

\begin{proof}
Since $z_0-\lambda_0$ is the eigenvalue of $\mathbf{H}_0$, hence $\mathrm{det}\,(\mathbf{B})=\mathrm{det}(\mathbf{H}_0-\left(z_0-\lambda_0\right) \mathbb{I}_{n+1})=0$, which infers $\mathrm{rank}\,\mathbf{B}\le n$. From the fact $\mathrm{det}\left(z_0\mathbb{I}_n+\mathbf{b}\right)\ne 0$, one deduces $\mathrm{rank}\,\mathbf{B}\ge n$. Therefore, one can determine $\mathrm{rank}\,\mathbf{B}=n$ such that
one knows $\mathrm{rank}\,\mathbf{B}^k=n+1-k\, (1\le k\le n+1)$.
Then one can obtain the fundamental matrix solution for (\ref{constant}) with $\lambda=\lambda_0$ as
\begin{gather}
\Psi_p=\mathbf{A}\mathrm{e}^{\mathrm{i}\varpi},\quad
\varpi=\left(z_0-\lambda_0\right)\left(x+\lambda_0 t\right)\left[\frac{1}{2}\left(z_0-\lambda_0\right)^2-\left(\sum_{j=1}^na_j^2+\frac{\lambda_0^2}{2}\right)\right]t.
\end{gather}
 Then the Lax pair ~(\ref{laxs}) and (\ref{laxt}) with $\mathbf{Q}=\mathbf{Q}\big|_{{\bf q}={\bf q}^{\rm bg}}$ and $\lambda=\lambda_0$ has the fundamental solution
\begin{gather}\label{Phi-p}
\Phi_p=\mathbf{GA}\bsm{\beta}\,\mathrm{e}^{\mathrm{i}\varpi}.
\end{gather}
 From the definition of $\mathbf{B}$, one can obtain $\mathbf{B}\bsm{\beta}_0=\mathbf{0}$. Since $\mathrm{rank}\,\mathbf{B}=n$, then as $\bsm{\beta}$ is linearly independent with $\bsm{\beta}_0$, $\mathbf{A}\bsm{\beta}$ is the polynomial of at least degree 1.   Recalling the Darboux transformation for the $n$-NLS equation, the formula (\ref{RW}) follows.

Define %the matrix $\bsm{\mathcal{L}}$ and
the Darboux transformation $\mathbf{T}_p$
\begin{gather}
\label{darm}
\mathbf{T}_p:=\mathbb{I}_{n+1}+\frac{\bsm{\mathcal{L}}}{\lambda-\lambda_0^*},\quad \bsm{\mathcal{L}}:=\frac{-\mathrm{i}\left(n+1\right)\zeta\left(\Phi_p\bsm{\beta}\right)
\left(\Phi_p\bsm{\beta}\right)^\dag}{\left(\Phi_p\bsm{\beta}\right)^\dag\left(\Phi_p\bsm{\beta}\right)}.
\end{gather}
By matching the term $\mathcal{O}\left(\lambda_0^{-1}\right)$ in $\mathbf{T}_{p, x}+\mathbf{T}_{p}\mathbf{U}^{\mathrm{bg}}=\mathbf{U}^{[1]}\mathbf{T}_p$, one yields
$\bsm{\mathcal{L}}_x=\mathrm{i}\left(\mathbf{Q}^{[1]}\mathbf{A}-\mathbf{A}\mathbf{Q}^\mathrm{bg}\right)$.
By matching the term $\mathcal{O}\left(1\right)$ in $\mathbf{T}_{p, t}+\mathbf{T}_{p}\mathbf{V}^{\mathrm{bg}}=\mathbf{V}^{[1]}\mathbf{T}_p$, one yields
$\Big(\mathbf{Q}^{[1]}\Big)^2=\Big(\mathbf{Q}^\mathrm{bg}\Big)^2+\mathrm{i}\Big(\mathbf{Q}^\mathrm{bg}-\mathbf{Q}^{[1]}\Big)_x+2\sigma_3\Big(\mathbf{Q}^{[1]}\bsm{\mathcal{L}}-\bsm{\mathcal{L}}\mathbf{Q}^\mathrm{bg}\Big)$. Then one derives that
$\|\mathbf{q}^{[1]}\|_2^2=\left\|\mathbf{a}\right\|_2^2-2\mathrm{i}\left(\bsm{\mathcal{L}}_x\right)_{1, 1}$ and  $|q_j^{[1]}|^2=\left|a_j\right|^2+2\mathrm{i}\left(\bsm{\mathcal{L}}_x\right)_{j+1, j+1}$. Together with $\left(\mathbf{\varPhi}_p^\dag\mathbf{\varPhi}_p\right)=-\left(n+1\right)\zeta\mathbf{\varPhi}_p^\dag\sigma_3\mathbf{\varPhi}_p$, then the formula (\ref{RW-mf}) follows. This completes the proof.
\end{proof}

\begin{remark} The analytical vector rational RW solutions (\ref{dingli-RW}) of the $n$-NLS system (\ref{n-NLS}) were never reported for ${\bf H}$ with an $(n+1)$-multiple root before. Besides, by the intensity expression of the vector potential $\mathbf{q}^{[1]}$, one can pose the mass conservation law
\begin{gather}
\int_{-\infty}^{+\infty}\left(\|\mathbf{q}^{[1]}\|_2^2-\left\|\mathbf{a}\right\|_2^2\right)\mathrm{d}x=0.
\end{gather}
\end{remark}

\subsection{${\cal PT}$-symmetric vector rational RW solutions}

\begin{prop}\label{prop-PT}
The vector rational RW $\mathbf{q}^{[1]}(x,t)$ given by Eq.~(\ref{RW}) is of the parity-time-reversal (${\cal PT}$) symmetric structure:
\begin{gather}\label{PT}
\mathbf{q}^{[1]}(x, t)={\cal PT}\mathbf{q}^{[1]}(x, t)={\cal P}\mathbf{q}^{[1]}(x, -t)^*,
\end{gather}
if $\Re(\lambda_0)=0$ and $\bsm{\beta}=\mathbf{J}\bsm{\beta}^*$, where the matrix $\mathbf{J}$ and the parity operator ${\cal P}$ are defined as
\begin{gather*}
\mathbf{J}=
\begin{pmatrix}
1& \mathbf{0}_{1\times n} \vspace{0.1in}\\
\mathbf{0}_{n\times 1} & -\mathcal{P}
\end{pmatrix}, \quad
{\cal P}=
\begin{pmatrix}
&&1\\
&\iddots&\\
1&&
\end{pmatrix}_{n\times n},\quad {\cal P}^2=\mathbb{I}_{n},
\end{gather*}
and the time-reversal operator ${\cal T}$ is:\, $t\to -t,\,\ {\rm i}\to -{\rm i}$.
\end{prop}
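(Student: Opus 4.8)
The plan is to track how the three ingredients of formula \eqref{RW} — the background $q_j^{\mathrm{bg}}$, the matrix polynomial $\mathbf{A}(x,t)$, and the seed vector $\bsm{\beta}$ — behave under $(x,t)\mapsto(x,-t)$ together with complex conjugation, and then reassemble them componentwise. First I would record the consequences of $\Re(\lambda_0)=0$: then $\zeta=2\Im(\lambda_0)/(n+1)$ is real, $z_0=\mathrm{i}\zeta$ is purely imaginary (so $\overline{z_0}=-z_0$), and by Lemma \ref{prop-jutixuanqu} every $a_j=\zeta\csc\omega_j$ and $b_j=\zeta\cot\omega_j$ is real. The crucial point is the reflection symmetry hidden in $\omega_j=j\pi/(n+1)$: since $\omega_{n+1-j}=\pi-\omega_j$ we get $a_{n+1-j}=a_j$ and $b_{n+1-j}=-b_j$, i.e. $\mathcal{P}\mathbf{a}=\mathbf{a}$ and $\mathcal{P}\mathbf{b}\mathcal{P}=-\mathbf{b}$ for the $n\times n$ reversal $\mathcal{P}$. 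From $\theta_j=b_jx-(\tfrac12 b_j^2-\|\mathbf{a}\|_2^2)t$ and $b_{n+1-j}^2=b_j^2$ a one-line computation then gives $\overline{q^{\mathrm{bg}}_{n+1-j}(x,-t)}=q^{\mathrm{bg}}_j(x,t)$.

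The key algebraic step is the matrix identity $\mathbf{J}\,\overline{\mathbf{B}}\,\mathbf{J}=-\mathbf{B}$. With $\mathbf{a}$ real and $\overline{z_0}=-z_0$, $\overline{\mathbf{B}}$ has upper-left entry $-\mathrm{i}n\zeta$, off-diagonal blocks $\mathbf{a}^\dagger,\mathbf{a}$, and lower-right block $\mathrm{i}\zeta\mathbb{I}_n-\mathbf{b}$; block-multiplying by $\mathbf{J}=\mathrm{diag}(1,-\mathcal{P})$ and using $\mathcal{P}\mathbf{a}=\mathbf{a}$, $\mathcal{P}\mathbf{b}\mathcal{P}=-\mathbf{b}$, $\mathcal{P}^2=\mathbb{I}_n$, $\mathbf{J}^2=\mathbb{I}_{n+1}$, the off-diagonal blocks flip sign and the lower-right block becomes $\mathrm{i}\zeta\mathbb{I}_n+\mathbf{b}$, so the whole matrix is $-\mathbf{B}$. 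I would then propagate this to $\mathbf{A}$ via \eqref{AB}:
\[
\overline{\mathbf{A}(x,-t)}=\sum_{s,k}\frac{(-\mathrm{i})^{s+k}(-1)^k}{2^k\,s!\,k!}\,\overline{\mathbf{B}}^{\,s+2k}\,(x-\overline{z_0}t)^s t^k,
\]
where $x-\overline{z_0}t=x+z_0t$ since $\overline{z_0}=-z_0$, and $\overline{\mathbf{B}}^{\,s+2k}=(-\mathbf{J}\mathbf{B}\mathbf{J})^{s+2k}=(-1)^s\mathbf{J}\mathbf{B}^{s+2k}\mathbf{J}$ by the key identity together with $\mathbf{J}^2=\mathbb{I}_{n+1}$; a short check that $(-\mathrm{i})^{s+k}(-1)^k(-1)^s=\mathrm{i}^{s+k}$ then collapses the sum to $\overline{\mathbf{A}(x,-t)}=\mathbf{J}\,\mathbf{A}(x,t)\,\mathbf{J}$.

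Finally, set $\mathbf{w}(x,t):=\mathbf{A}(x,t)\bsm{\beta}$. Since $\mathbf{J}$ is real with $\mathbf{J}^2=\mathbb{I}_{n+1}$, the hypothesis $\bsm{\beta}=\mathbf{J}\bsm{\beta}^*$ is equivalent to $\bsm{\beta}^*=\mathbf{J}\bsm{\beta}$, so the previous identity yields $\overline{\mathbf{w}(x,-t)}=\mathbf{J}\mathbf{A}(x,t)\mathbf{J}\bsm{\beta}^*=\mathbf{J}\,\mathbf{w}(x,t)$. As $\mathbf{J}$ is real orthogonal this forces $\|\mathbf{w}(x,-t)\|_2^2=\|\mathbf{w}(x,t)\|_2^2$ for the denominator in \eqref{RW}; reading off components of $\overline{\mathbf{w}(x,-t)}=\mathbf{J}\mathbf{w}(x,t)$ gives $\overline{w_1(x,-t)}=w_1(x,t)$ and, for $1\le j\le n$, $\overline{w_{n+2-j}(x,-t)}=-\,w_{j+1}(x,t)$ (the minus sign from the $-\mathcal{P}$ block). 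Plugging $w_{j+1}=\mathbf{A}_{j+1}\bsm{\beta}$, $w_1=\mathbf{A}_1\bsm{\beta}$, $a_{n+1-j}=a_j$, $\zeta$ real, and $\overline{q^{\mathrm{bg}}_{n+1-j}(x,-t)}=q^{\mathrm{bg}}_j(x,t)$ into the $(n+1-j)$-th entry of \eqref{RW} at $(x,-t)$ and conjugating, the sign carried by $\overline{w_{n+2-j}(x,-t)}=-w_{j+1}(x,t)$ converts the $+2\mathrm{i}$ produced by conjugation back into the $-2\mathrm{i}$ of \eqref{RW}, so $\overline{q^{[1]}_{n+1-j}(x,-t)}=q^{[1]}_j(x,t)$, i.e. $\mathbf{q}^{[1]}(x,t)=\mathcal{P}\,\mathbf{q}^{[1]}(x,-t)^*$; and since $\mathcal{T}$ is exactly $(x,t)\mapsto(x,-t)$ combined with $\mathrm{i}\mapsto-\mathrm{i}$, this is also $\mathcal{PT}\mathbf{q}^{[1]}(x,t)$.

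I expect the main obstacle to be the bookkeeping in the two middle paragraphs: confirming that the powers of $\mathrm{i}$, the $(-1)^k$ from $(-t)^k$, and the $(-1)^s$ from $\overline{\mathbf{B}}^{\,s+2k}$ cancel simultaneously in the double sum defining $\mathbf{A}$, and keeping straight the two different index reversals — $j\mapsto n+1-j$ on the $n$ components of $\mathbf{q}$ versus $j\mapsto n+2-j$ on the $(n+1)$ components of $\mathbf{w}$ — when applying $\mathcal{P}$ and $\mathbf{J}$. Everything else is routine linear algebra. I would also add a brief remark that the constraint $\bsm{\beta}=\mathbf{J}\bsm{\beta}^*$ singles out a real form of the admissible parameter vectors and is compatible with the linear-independence requirement on $\bsm{\beta}$ in Theorem \ref{dingli-RW}.
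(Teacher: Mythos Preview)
Your argument is correct. The key identity $\mathbf{J}\,\overline{\mathbf{B}}\,\mathbf{J}=-\mathbf{B}$ holds exactly as you claim, the sign-and-phase cancellation in the double sum for $\mathbf{A}$ goes through (indeed $(-\mathrm{i})^{s+k}(-1)^{s+k}=\mathrm{i}^{s+k}$), and the componentwise substitution into \eqref{RW} is clean once you have $\overline{\mathbf{w}(x,-t)}=\mathbf{J}\,\mathbf{w}(x,t)$. The index bookkeeping you flagged as a potential hazard checks out: $(\mathbf{J}\mathbf{w})_{n+2-j}=-w_{j+1}$ and $w_1(x,-t)=\overline{w_1(x,t)}$ are exactly what is needed.

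Your route, however, is genuinely different from the paper's. The paper does not touch the explicit formula \eqref{RW} at all; instead it argues at the level of the Lax pair. It first records the symmetries $\mathbf{U}^{\mathrm{bg}}(\lambda;x,t)=\mathbf{J}\,\mathbf{U}^{\mathrm{bg}}(-\lambda^*;x,-t)^*\mathbf{J}$ and $\mathbf{V}^{\mathrm{bg}}(\lambda;x,t)=-\mathbf{J}\,\mathbf{V}^{\mathrm{bg}}(-\lambda^*;x,-t)^*\mathbf{J}$, then invokes uniqueness of ODE solutions (matching initial data at $(0,0)$ via $\bsm{\beta}=\mathbf{J}\bsm{\beta}^*$) to conclude $\Phi_p(\lambda_0;x,t)=\mathbf{J}\,\Phi_p(\lambda_0;x,-t)^*$ when $\Re(\lambda_0)=0$. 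This symmetry is then pushed to the Darboux matrix $\mathbf{T}_p$ and finally to the transformed potential. What your direct approach buys is a completely elementary, self-contained verification that never appeals to the ODE structure or uniqueness; what the paper's approach buys is a mechanism that is visibly insensitive to the particular polynomial form of $\mathbf{A}$ and therefore transports without change to the higher-order Darboux transforms in Section~\ref{sec4}, where an analogous closed-form sum is not available.
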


\begin{proof}
As $\Re(\lambda_0)=0$, it is easy to know that the plane-wave solution (\ref{jutixuanqu}) is parity-time-reversal symmetric, i.e. $\mathbf{q}^{\mathrm{bg}}(x, t)={\cal P}\mathbf{q}^\mathrm{bg}(x, -t)^*$,
from which one derives the symmetries of $\mathbf{U}^\mathrm{bg}(\lambda; x, t)=\mathbf{U}(\lambda; x, t)\big|_{{\bf Q}={\bf Q}^\mathrm{bg}}$ and $\mathbf{V}^\mathrm{bg}(\lambda; x, t)=
\mathbf{V}(\lambda; x, t)\big|_{{\bf Q}={\bf Q}^\mathrm{bg}}$ as
\begin{gather*}
\mathbf{U}^\mathrm{bg}(\lambda; x, t)=\mathbf{J}\mathbf{U}^\mathrm{bg}(-\lambda^*; x, -t)^*\mathbf{J},  \quad \mathbf{V}^\mathrm{bg}(\lambda; x, t)=-\mathbf{J}\mathbf{V}^\mathrm{bg}(-\lambda^*; x, -t)^*\mathbf{J}.
\end{gather*}

%Recall the proof of the Theorem \ref{dingli-RW},
Since $\Phi_p(\lambda_0; x, t)$ in Eq. \eqref{Phi-p} is the fundamental vector solution of
\begin{gather}\label{initial-1}
\Phi_x=\mathbf{U}^\mathrm{bg}(\lambda_0; x, t)\Phi, \quad \Phi_t=\mathbf{V}^\mathrm{bg}(\lambda_0; x, t)\Phi
\end{gather}
thus so is $\mathbf{J}\Phi_p(-\lambda_0^*; x, -t)^*$.
From $\bsm{\beta}=\mathbf{J}\bsm{\beta}^*$, one obtains $\Phi_p(\lambda_0; 0, 0)=\mathbf{J}\Phi_p(-\lambda_0^*; 0, 0)^*$. It follows from the uniqueness of the solution of Eq.~(\ref{initial-1}) that
$\Phi_p(\lambda_0; x, t)=\mathbf{J}\Phi_p(-\lambda_0^*; x, -t)^*$.
In particular, one has $\Phi_p(\lambda_0; x, t)=\mathbf{J}\Phi_p(\lambda_0; x, -t)^*$ for $\Re(\lambda_0)=0$. Then the Darboux matrix $\mathbf{T}_p$ given by Eq.~(\ref{darm}) has the symmetry
$\mathbf{T}_p(\lambda; x, t)=\mathbf{J}\mathbf{T}_p(-\lambda^*; x, -t)^*\mathbf{J}$.
By the relation between the initial $\left(\mathbf{U}^\mathrm{bg}, \mathbf{V}^\mathrm{bg}\right)$ and transformed $\left(\mathbf{U}^{[1]}, \mathbf{V}^{[1]}\right)$
\begin{gather*}
\mathbf{U}^{[1]}(\lambda; x, t)=\left(\mathbf{T}_{p, x}(\lambda; x, t)+\mathbf{T}_p(\lambda; x, t)\mathbf{U}^\mathrm{bg}(\lambda; x, t)\right)\mathbf{T}_p(\lambda; x, t)^{-1}, \\
\mathbf{V}^{[1]}(\lambda; x, t)=\left(\mathbf{T}_{p, t}(\lambda; x, t)+\mathbf{T}_p(\lambda; x, t)\mathbf{V}^\mathrm{bg}(\lambda; x, t)\right)\mathbf{T}_p(\lambda; x, t)^{-1},
\end{gather*}
then one deduces
\begin{gather*}
\mathbf{U}^{[1]}(\lambda; x, t)=\mathbf{J}\mathbf{U}^{[1]}(-\lambda^*; x, -t)^*\mathbf{J}, \quad \mathbf{V}^{[1]}(\lambda; x, t)=-\mathbf{J}\mathbf{V}^{[1]}(-\lambda^*; x, -t)^*\mathbf{J},
\end{gather*}
which imply that Eq.~(\ref{PT}) holds. Thus we complete the proof.
\end{proof}

\begin{remark} In fact if $\mathbf{q}(x, t)$ is a solution of the $n$-NLS equation (\ref{n-NLS}), then so is $\mathbf{q}(x+2\Re(\lambda_0)t, t)\mathrm{e}^{-2\mathrm{i}\Re(\lambda_0)[x+\Re(\lambda_0)t]}$.  Without loss of generality, one can take $\Re\left(\lambda_{0}\right)=0$ hereafter.
\end{remark}

%3.2

\subsection{The classification of vector rational RW solutions and dynamics}

The explicit formula (\ref{RW}) for the vector rational RWs are in fact the rational forms and $\mathbf{A}\bsm{\beta}$ is a polynomial of degree less than $n+1$. %With the different vector parameter $\bsm{\beta}$, the degree of the polynomial $\mathbf{A}\bsm{\beta}$ may be different.
Given a vector parameter $\bsm{\beta}$, the degree of the polynomial $\mathbf{A}\bsm{\beta}$ can be $1$, $2$, $\cdots$, or $n$. In the following,
we will give a complete classification of the rational RW solution (\ref{RW}) by the degree of the polynomial $\mathbf{A}\bsm{\beta}$.

\begin{prop}\label{classification}
Let the matrix $\mathbf{S}=\left(\bsm{\xi}_0, \bsm{\xi}_1, \cdots, \bsm{\xi}_n\right)$ and $\bsm{\alpha}=(\alpha_0, \alpha_1, \cdots, \alpha_n)=\mathbf{S}^{-1}\bsm{\beta}$, where
\begin{gather}
\bsm{\xi}_j=\left(0, \frac{a_1}{\left(\mathrm{i}\zeta+b_1\right)^{j+1}}, \frac{a_2}{\left(\mathrm{i}\zeta+b_2\right)^{j+1}}, \cdots, \frac{a_n}{\left(\mathrm{i}\zeta+b_n\right)^{j+1}}\right)^\mathrm{T}, \quad j=1, 2, \cdots, n.
\end{gather}
Then the complete classification of the vector rational RW solution (\ref{RW}) can be determined  by the degree of the polynomial $\mathbf{A}\bsm{\beta}$, where $\mathbf{A}\bsm{\beta}$ is the polynomial of degree $j$ with respect to the variables $x$ and $t$ as $\alpha_j\ne 0$, $\alpha_{j+1}=\cdots=\alpha_n=0$\, ($1\le j\le n$).

\end{prop}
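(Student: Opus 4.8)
The plan is to analyze the structure of the matrix polynomial $\mathbf{A}(x,t)$ defined in Eq.~\eqref{AB} through its action on the eigenvectors of $\mathbf{B}$, and to show that the column vectors $\bsm{\xi}_0,\bsm{\xi}_1,\dots,\bsm{\xi}_n$ form a natural basis adapted to the Jordan structure of $\mathbf{B}$. First I would recall from the proof of Theorem~\ref{dingli-RW} that $\mathrm{rank}\,\mathbf{B}=n$ and that $\mathrm{rank}\,\mathbf{B}^k=n+1-k$ for $1\le k\le n+1$; in particular $\mathbf{B}$ is nilpotent-like on a flag, with $\mathbf{B}^{n+1}=0$ but $\mathbf{B}^n\ne 0$ (after the shift by $z_0-\lambda_0$, the $(n+1)$-multiple eigenvalue becomes a single Jordan block of size $n+1$). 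The key observation is that $\bsm{\xi}_0=\bsm{\beta}_0$ is the kernel vector of $\mathbf{B}$, i.e.\ $\mathbf{B}\bsm{\xi}_0=\mathbf{0}$, and more generally one checks by a direct computation using the explicit form of $\mathbf{B}$ in Eq.~\eqref{BA} (and the relations~\eqref{ab-relation} defining $a_j,b_j$) that $\mathbf{B}\bsm{\xi}_j$ is a linear combination of $\bsm{\xi}_0,\dots,\bsm{\xi}_{j-1}$ with the coefficient of $\bsm{\xi}_{j-1}$ nonzero; equivalently $\mathbf{B}$ maps $\mathrm{span}\{\bsm{\xi}_0,\dots,\bsm{\xi}_j\}$ into $\mathrm{span}\{\bsm{\xi}_0,\dots,\bsm{\xi}_{j-1}\}$ surjectively. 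This makes $\{\bsm{\xi}_j\}$ a Jordan-type basis with $\mathbf{B}^k$ annihilating exactly $\mathrm{span}\{\bsm{\xi}_0,\dots,\bsm{\xi}_{k-1}\}$.

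Next I would expand $\bsm{\beta}=\sum_{j=0}^n\alpha_j\bsm{\xi}_j$ using $\bsm{\alpha}=\mathbf{S}^{-1}\bsm{\beta}$ (so that one must first verify $\mathbf{S}$ is invertible — this follows because $\bsm{\xi}_0,\dots,\bsm{\xi}_n$ span the Jordan flag and are linearly independent, which in turn reduces to the $b_j$ being mutually distinct and a Vandermonde-type nonvanishing argument). Then $\mathbf{B}^m\bsm{\beta}=\sum_{j\ge m}\alpha_j\mathbf{B}^m\bsm{\xi}_j$, and the flag property shows $\mathbf{B}^m\bsm{\beta}\ne\mathbf{0}$ iff some $\alpha_j$ with $j\ge m$ is nonzero, while the highest such power $m$ giving a nonzero result is $m=j_{\max}$ where $j_{\max}=\max\{j:\alpha_j\ne 0\}$. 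Now inspect $\mathbf{A}(x,t)\bsm{\beta}$: from Eq.~\eqref{AB}, $\mathbf{A}$ is a polynomial in $x+z_0t$ and $t$ whose coefficients are powers $\mathbf{B}^{s+2k}$ with $0\le s+2k$, the top power of $\mathbf{B}$ appearing being $\mathbf{B}^n$. The leading behavior of $\mathbf{A}\bsm{\beta}$ in the space-time variables is governed by the term with the highest power of $\mathbf{B}$ that does not annihilate $\bsm{\beta}$, namely $\mathbf{B}^{j_{\max}}\bsm{\beta}$, so the degree of $\mathbf{A}\bsm{\beta}$ as a polynomial in $(x,t)$ is exactly $j_{\max}$; conversely every lower power of $\mathbf{B}$ contributes terms of degree at most that power, and the top-degree contribution $\propto (x+z_0t)^{j_{\max}}\mathbf{B}^{j_{\max}}\bsm{\beta}/j_{\max}!$ cannot be cancelled because it is the unique term of that space-time degree. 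This yields precisely the claim: $\deg\mathbf{A}\bsm{\beta}=j$ when $\alpha_j\ne 0$ and $\alpha_{j+1}=\cdots=\alpha_n=0$.

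The main obstacle I anticipate is establishing rigorously that $\mathbf{B}$ acts on the $\bsm{\xi}_j$ basis in the strict lower-triangular Jordan form with nonzero subdiagonal — i.e.\ computing $\mathbf{B}\bsm{\xi}_j$ explicitly and confirming the coefficient of $\bsm{\xi}_{j-1}$ is nonzero. This requires a careful algebraic manipulation: writing out the $(n+1)$ components of $\mathbf{B}\bsm{\xi}_j$ using $\mathbf{B}=\begin{pmatrix}\mathrm{i}n\zeta & \mathbf{a}^\dagger\\ \mathbf{a} & -z_0\mathbb{I}_n-\mathbf{b}\end{pmatrix}$, one gets for the $(k{+}1)$-th component $a_k\big(\tfrac{1}{(\mathrm{i}\zeta+b_k)^{j+1}}\cdot(-(z_0+b_k))\big) = -a_k(\mathrm{i}\zeta+b_k)^{-j}$ since $z_0=\mathrm{i}\zeta$, which is exactly $-$ the corresponding component of $\bsm{\xi}_{j-1}$, so $\mathbf{B}\bsm{\xi}_j = -\bsm{\xi}_{j-1}+ (\text{scalar})\,\bsm{\xi}_0$, with the scalar coming from matching the first component via $\sum_k \mathbf{a}^\dagger$-type sums that evaluate using~\eqref{ab-relation}. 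Once this recursion is pinned down, the rest — invertibility of $\mathbf{S}$, the degree count, and the non-cancellation of the leading term — is routine. A secondary subtlety is making the "leading term in $(x,t)$" argument precise, since $\mathbf{A}$ is bivariate in $x+z_0t$ and $t$; the cleanest route is to substitute $x\to x$, $t=0$ (or specialize a generic direction), reducing to the single-variable polynomial $\sum_s \tfrac{\mathrm{i}^s\mathbf{B}^s}{s!}x^s\,\bsm{\beta}$, whose degree in $x$ is manifestly $j_{\max}$ by the flag property, and noting this is a lower bound for the bivariate degree while the explicit form gives the matching upper bound.
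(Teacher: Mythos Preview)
Your proposal is correct and follows essentially the same approach as the paper: both hinge on the identities $\sum_j a_j^2/(\mathrm{i}\zeta+b_j)^{k+1}=0$ (for $2\le k\le n$) and $1+\sum_j a_j^2/(\mathrm{i}\zeta+b_j)^2=0$ coming from the $(n{+}1)$-fold root of $D(z)$, which give $\mathbf{B}^{j}\bsm{\xi}_j\ne\mathbf{0}$ and $\mathbf{B}^{j+1}\bsm{\xi}_j=\mathbf{0}$, together with the invertibility of $\mathbf{S}$ via a Vandermonde-type determinant. Your explicit recursion $\mathbf{B}\bsm{\xi}_j=-\bsm{\xi}_{j-1}+c_j\bsm{\xi}_0$ (in fact $c_j=0$ once those identities are used) and your leading-term degree argument for $\mathbf{A}\bsm{\beta}$ simply spell out in more detail the same mechanism the paper leaves implicit.
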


\begin{proof}
The vector parameter $\bsm{\alpha}$ is well-defined due to
\begin{gather*}
\mathrm{det}\,\mathbf{S}=\left(-\frac{1}{\zeta}\right)^{\frac{n\left(n+1\right)}{2}}\prod_{j=1}^n\left(\sin\omega_j\right)^j\ne 0.
\end{gather*}
Since $\mathrm{i}\zeta$ is the $(n+1)$-multiple root of $D(z)=0$, then one can deduce that
\begin{gather*}
\mathrm{i}\zeta-2\lambda_0-\sum_{j=1}^n\frac{a_j^2}{\mathrm{i}\zeta+b_j}=0, \quad 1+\sum_{j=1}^n\frac{a_j^2}{\left(\mathrm{i}\zeta+b_j\right)^2}=0, \quad \sum_{j=1}^n\frac{a_j^2}{\left(\mathrm{i}\zeta+b_j\right)^{k+1}}=0, \quad 2\le k\le n,
\end{gather*}
which derives that
$\mathbf{B}^j\bsm{\xi}_j \ne \mathbf{0}$ and $\mathbf{B}^{j+1}\bsm{\xi}_j = \mathbf{0}\,  (1\le j\le n)$.
Then we complete the proof.
\end{proof}

\begin{remark}
Proposition~\ref{classification} confirms the complete classification of the vector raiotnal RWs, and gives a rule to determine the degree of the polynomial $\mathbf{A}\bsm{\beta}$.
 \end{remark}

\begin{remark}
Give a parameter $\bsm{\beta}$, one can always confirm the degree of $\mathbf{A}\bsm{\beta}$ judged by $\bsm{\alpha}$. Besides,  the symmetric condition  $\bsm{\beta}=\mathbf{J}\bsm{\beta}^*$ is equivalent to $\alpha_j=(-1)^j\alpha_j^*$. The linear independence with $\bsm{\xi}_0$ leads to $\sum_{j=1}^n\left|\alpha_j\right|^2\ne 0$.
\end{remark}

For convenience,  we can redenote the first-order vector rational RWs (\ref{RW}) as $q_j^{{[1]}_\ell}$ for the $\mathbf{A}\bsm{\beta}$ of degree $\ell$ ($1\le\ell\le n$). Then we classify the general first-order vector RWs (\ref{RW}) into the $n$ cases: $q_j^{[1]_1}, q_j^{[1]_2}, \cdots, q_j^{[1]_n}$.
In the next example, we will confirm each component $q_j^{[1]_1}$ is a kind of the fundamental RW and the types of the structures are controlled  by $\omega_j$. For convenience, we introduce $n$ rational functions
\begin{gather}
R_j\left(x, t\right)=\frac{1}{a_j^2}\frac{2\mathrm{i}(b_jx-\zeta^2t)-1}{x^2+\zeta^2t^2+\zeta^2/4}, \quad j=1, 2, \cdots, n.
\end{gather}
and an $\left(n+1\right)\times\left(n+1\right)$  Jordan block  $\mathbf{J}_\mathrm{or}$
\begin{gather}
\mathbf{J}_\mathrm{or}=
\begin{pmatrix}
0& 1 &  & &  \\
 & 0 & 1 & & \\
  &  &   \ddots & \ddots &\\
  &  &      &  0   & 1  \\
  &&&&0
\end{pmatrix}.
\end{gather}

{\it Example 3.2.1.} \, For the simplest case: $\mathbf{q}^{[1]_1}$, without loss of generality, we choose $\alpha_1=\mathrm{i}$ and $\alpha_j=0\, (2\leq j\leq n)$ such that $\mathbf{A}\bsm{\beta}=\left(x+\mathrm{i}\zeta t+\alpha_0\right)\bsm{\xi}_0+\mathrm{i}\bsm{\xi}_1$ is a linear polynomial.  Then one can simplify $q_j^{[1]_1}$ as
\begin{gather}\label{1-RW}
q_j^{[1]_1}=q_j^\mathrm{bg}\left[1+R_j\left(x+\alpha_0+\frac{1}{2\zeta}, t\right)\right]\mathrm{e}^{-2\mathrm{i}\omega_j}, \quad j=1, 2, \cdots, n.
\end{gather}
It follows from the formula (\ref{1-RW}) that one can find that each component $q_j^{[1]_1}$ is the fundamental RW with the center located at $(x,t)=\left(-\alpha_0-1/(2\zeta),\,0\right)$.

\begin{remark}
It follows from Eq.~(\ref{abr}) that $a_j^2-b_j^2=\zeta^2$ for $\Re(\lambda_0)=0$, thus the formula (\ref{1-RW}) coincides with one in Ref.~\cite{Ling2017}, but Ref.~\cite{Ling2017} did not find an explicit multiple root $z_0$ for the general case $n\geq 4$. We here first find an explicit $(n+1)$-multiple root $z_0$ given by Eq.~(\ref{BA}), which is a critical point to find vector rational RWs of the $n$-NLS system.
\end{remark}

%which gives the judgement of the structures of the fundamental rogue waves: As $\left(b_j/\zeta\right)^2\le 1/3$, the $j$-th component $q_j^{[1]_1}$ is the eye-shaped RW (or bright RW); As $\left(b_j/\zeta\right)^2\ge 3$, the $j$-th component $q_j^{[1]_1}$ is the anti-eye-shaped RW (or dark RW);
%As $1/3<\left(b_j/\zeta\right)^2<3$, the $j$-th component $q_j^{[1]_1}$ is the four-petaled RW.

Based on the classification of fundamental RW types~\cite{Ling2017}, for the case of $(n+1)$-multiple root,
%$b_j/\zeta$ is simplified as $b_j/\zeta=\cot\omega_j$, from which
the dynamical structures of $q_j^{[1]_1}$
can be determined by $\omega_j$ as follows:
\begin{itemize}
\item As $\omega_j\in\left[\pi/3, 2\pi/3\right]$, the $j$-th component $q_j^{[1]_1}$ is eye-shaped RW (alias bright RW) with one hump and two valleys;

\item As $\omega_j\in\left(\pi/6, \pi/3\right)\bigcup\left(2\pi/3, 5\pi/6\right)$, the $j$-th component $q_j^{[1]_1}$ is four-petaled RW with
two humps and two valleys;

\item As $\omega_j\in\left(0, \pi/6\right] \bigcup \left[5\pi/6, \pi\right)$, the $j$-th component $q_j^{[1]_1}$ is anti-eye-shaped RW (alias dark RW) with two humps and one valley.

    \end{itemize}

\begin{figure}[!t]
\centering
\includegraphics[scale=0.4]{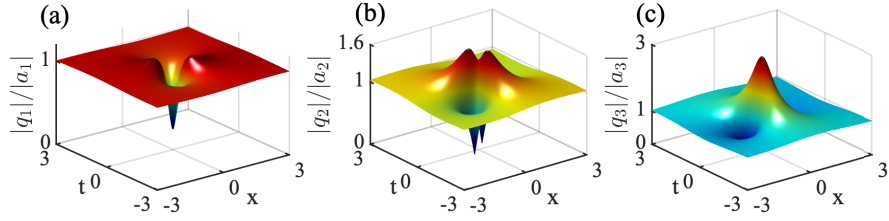}
\caption{Profiles of fundamental RW solutions of the $6$-NLS equation with $\zeta=1, \alpha_0=-1/2$ in Eq. (\ref{1-RW}).}
\label{RW1}
\end{figure}

With the judgement, one can determine the fundamental RW type of each component $q_j^{[1]_1}$ by $\omega_j$. For instance (based on the ${\cal PT}$ symmetry of the vector RWs in Proposition \ref{prop-PT}),

\begin{itemize}

\item{} For $n=1$, the fundamental RW for the scalar NLS equation is always the bright RW due to $\omega_1=\pi/2$;

\item{} For $n=2$, two fundamental RWs for the 2-NLS equation are both bright RWs due to $\omega_1=\pi/3,\, \omega_2=2\pi/3$;

\item{} For $n=3$, the fundamental RWs for $q_1^{[1]_1}$ and $q_3^{[1]_1}$ are both four-petaled RWs due to $\omega_1=\pi/4,\, \omega_3=3\pi/4$, and
    $q_2^{[1]_1}$ is the bright RW due to $\omega_2=\pi/2$;

\item{} For $n=4$, the fundamental RWs for $q_1^{[1]_1}$ and $q_4^{[1]_1}$ are both  four-petaled RWs due to $\omega_1=\pi/5,\, \omega_3=4\pi/5$, and
    $q_j^{[1]_1}$'s are all the bright RWs due to $\omega_j=j\pi/5\, (j=2,3,4)$;

\item{} For $n=5$, the fundamental RWs for $q_1^{[1]_1}$ and $q_5^{[1]_1}$ are both the dark RWs due to $\omega_1=\pi/6,\, \omega_5=5\pi/6$, and
    $q_j^{[1]_1}$'s are all the bright RWs due to $\omega_j=j\pi/6\, (j=2,3,4,5)$;

\item{} For $n\ge 6$ the fundamental RWs for the $n$-NLS equation admit three types of RWs. For example as $n=6$, $q_1^{[1]_1}$ and $q_6^{[1]_1}$ are both the dark RWs due to $\omega_1=\pi/7,\, \omega_6=6\pi/7$ (Fig. \ref{RW1}a),
$q_2^{[1]_1}$ and $q_5^{[1]_1}$ are both the four-petaled RWs due to $\omega_2=2\pi/7,\, \omega_5=5\pi/7$ (Fig. \ref{RW1}b), and
$q_3^{[1]_1}$ and $q_4^{[1]_1}$ are both the bright RWs due to $\omega_3=3\pi/7,\, \omega_4=4\pi/7$ (Fig. \ref{RW1}c).

\end{itemize}

\subsection{The asymptotics of vector rational RW solutions}

From the above propositions, the degree of the polynomial $\mathbf{A}\bsm{\beta}$ is decided by $\bsm{\alpha}$ and as $\mathbf{A}\bsm{\beta}$ is the polynomial of degree $1$, each component $q_j^{[1]_1}$ is a fundamental rogue wave. As the degree of $\mathbf{A}\bsm{\beta}$ is greater than $1$, each component  is no longer a fundamental rogue wave, which has complicated and abundant dynamical structures. In the following  two examples, we study the asymptotic behaviors of the rogue wave $q_j^{[1]_\ell}$ with  $\mathbf{A}\bsm{\beta}$ of degree $\ell$ ($\ge 2$). As some parameters tend to $\pm\infty$, the dynamical behavior of each component $q_j^{[1]_\ell}$ with  $\mathbf{A}\bsm{\beta}$ of degree $\ell$ ($\ge 2$) will be characterized by the leading term of asymptotic formula.

From the definition of $\mathbf{A}(x,t)$ in Eq. \eqref{AB}, it follows that $\mathbf{A}$ can be written as the form
\begin{gather}
\mathbf{A}(x,t)=\mathbb{I}_{n+1}-\tau_1(x,t)\mathbf{B}+\tau_2(x,t)\mathbf{B}+\cdots+\left(-1\right)^n \tau_n(x,t)\mathbf{B}^n,
\end{gather}
where $\tau_s(x,t)$'s are the polynomials of degree $s$ with respect to $x$ and $t$. By the straightforward computation,
then the formula of rogue wave solution $q_j^{[1]_\ell}$ in Eq. \eqref{RW} can be rewritten as
\begin{gather}\label{form-gai}
q_j^{[1]_\ell}=q_j^\mathrm{bg}\left[1-\frac{2\mathrm{i}\zeta}{b_j-\mathrm{i}\zeta}\frac{\left(n+1\right)L_jL_0^*-\mathbf{L}^\dag \mathbf{L}}{\mathbf{L}^\dag \mathbf{L}}\right]\mathrm{e}^{-2\mathrm{i}\omega_j},\quad j=1, 2, \cdots, n,
\end{gather}
where \, $\bsm{\alpha}=\left(\alpha_0, \alpha_1, \cdots, \alpha_\ell, 0, \cdots, 0\right)^\mathrm{T}$, \, $\mathbf{v}_\ell=\left(1, \tau_1, \tau_2, \cdots, \tau_\ell, 0, \cdots, 0\right)$  and
\begin{gather}\label{L}
\mathbf{L}=\sum_{s=0}^\ell\mathbf{v}_\ell\mathbf{J}_\mathrm{or}^s\bsm{\alpha}\bsm{\xi}_s,   \quad
L_j=\sum_{s=0}^\ell\frac{\mathbf{v}_\ell\mathbf{J}_\mathrm{or}^s\bsm{\alpha}}{\left(b_j+\mathrm{i}\zeta\right)^s}, \quad
L_0=\mathbf{v}_\ell\bsm{\alpha}.
\end{gather}
Note that $\bsm{\xi}_0^\dag\bsm{\xi}_0=n+1$. It follows from the form \eqref{form-gai} that  $q_j^{[1]_\ell}/q_j^\mathrm{bg}$ tends to $\mathrm{e}^{-2\mathrm{i}\omega_j}$ as  $|x|\to \infty$ or $|t|\to \infty$.

In the following, we study the asymptotic behaviors of vector rational RWs $\mathbf{q}^{[1]_\ell}$ with $\ell=2, 3, 4$, whose explicit formulas \eqref{form-gai}  are determined by Eq. \eqref{L} with
\begin{gather}\label{VL}
\tau_1= \mathrm{i}\zeta t-{\rm i}x, \,\, \tau_2=\frac{\mathrm{i}t-\left(x+\mathrm{i}\zeta t\right)^2}{2}, \,\, \tau_3=\frac{\left(x+\mathrm{i}\zeta t\right)t}{2}\!+\!\frac{\mathrm{i}\left(x+\mathrm{i}\zeta t\right)^3}{6}, \,\, \tau_4=\frac{\left(x\!+\!\mathrm{i}\zeta t\right)^4}{24}\!-\!\frac{2\mathrm{i}\left(x\!+\!\mathrm{i}\zeta t\right)^2t\!+\!t^2}{8}.
\end{gather}

\vspace{0.1in}
{\it Example 3.3.1.\, } For the case $\ell=2$, to study the asymptotic behaviors of $\mathbf{q}^{[1]_2}$, we take
\begin{gather}
\alpha_0=\kappa_0h^2+\frac{\kappa_{0, 1}h+\kappa_{0, 0}}{\zeta},\quad \alpha_1=\mathrm{i}\left(\kappa_1h+\frac{\kappa_{1, 0}}{\zeta}\right),\quad \alpha_2=-2,\quad \kappa_0, \kappa_1,\kappa_{0, 1}, \kappa_{0, 0}, \kappa_{1, 0}, h\in\mathbb{R}.
\end{gather}
Under the transformations $x=\widehat{x}+\Re\left(z\right)h,\, t=\widehat{t}+\Im\left(z\right)h/\zeta,\, z\in\mathbb{C}$, we derive
\begin{gather}
\frac{\mathbf{L}^\dag \mathbf{L}}{\left(n+1\right)h^4}=|z^2+\kappa_1z+\kappa_0|^2+\mathcal{O}\left(\frac{1}{h}\right), \quad h\to \infty,
\end{gather}
in which we call
\bee\label{f2}
\mathcal{F}_2(z)=z^2+\kappa_1z+\kappa_0
\ene
 the {\it governing polynomial}, whose roots are related to the central locations of separated fundamental RWs. Therefore we can control the central locations of separated RWs by the arbitrariness of $\kappa_0, \kappa_1$. Since  $\mathcal{F}_2(z)$ is a polynomial of degree $2$, it allows two simple roots or a double root.  For the two case,  we can analyze the asymptotic behaviors of rogue waves.

 \begin{figure}[!t]
\centering
\includegraphics[scale=0.4]{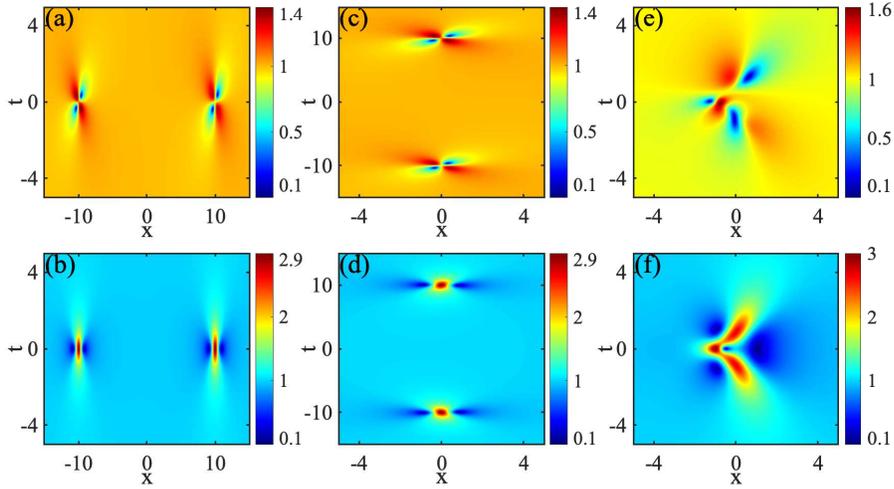}
\caption{Rogue wave solutions    $q_1^{[1]_2}/a_1$ (a, c, e) and  $q_2^{[1]_2}/a_2$ (b, d, f)  of $3$-NLS equation in Eqs.
\eqref{form-gai}-\eqref{VL} with parameters:  (a, b) $\zeta=1, \alpha_0=-100, \alpha_1=-\mathrm{i}, \alpha_2=-2$; (c, d)  $\zeta=1, \alpha_0=100, \alpha_1=0, \alpha_2=-2$; (e, f) $\zeta=1, \alpha_0=\alpha_1=0, \alpha_2=-2$.}
\label{RW2}
\end{figure}

\vspace{0.05in}
{\it Case  1.} For the two simple roots of $\mathcal{F}_2(z)$ and the corresponding two fundamental RWs arranging along the $x$-axis.
For instance as $\kappa_0=-1, \kappa_1=0$, the governing equation $\mathcal{F}_2(z)=0$ has two simple roots $z=\pm 1$. With the transformations $x=\widehat{x}+\left(-1\right)^\delta h,\, t=\widehat{t}$\,  ($\delta=0, 1)$, we derive the limitations
 \begin{gather}
 \lim_{h\to \infty}q_j^{[1]_2}=q_j^\mathrm{bg}\left[1+R_j\left(\widehat{x}+\frac{\left(-1\right)^\delta\kappa_{0, 0}+\kappa_{1, 0}+1}{2\zeta},\, \widehat{t}\right)\right]\mathrm{e}^{-2\mathrm{i}\omega_j}.
 \end{gather}
 Then we deduce the formula of  asymptotic behaviors
\begin{gather}
q_j^{[1]_2}=q_j^\mathrm{bg}\left[1+\sum_{\delta=0}^1R_j\left(x-x_\delta, t\right)\right]\mathrm{e}^{-2\mathrm{i}\omega_j}
+\mathcal{O}\left(\frac{1}{h}\right), \,\,\,\, h\to \infty,
\end{gather}
where $x_\delta=\left(-1\right)^\delta\left(h-\kappa_{0, 0}/(2\zeta)\right)-\left(\kappa_{1, 0}+1\right)/(2\zeta)$. The leading term of asymptotic formula is the linear superposition of two fundamental RWs, which can be used to characterize the dynamical behaviors. As the value of $h$ increases, each component $q_j^{[1]_2}$ will be separated into two fundamental RWs with centers located at $\left(x_\delta, 0\right)$. In particular, as $\kappa_{0, 0}=0, \kappa_{1, 0}=-1$, the centers of two fundamental RWs are located at $\left(\pm h, 0\right)$, respectively (Figs. \ref{RW2}(a, b)).

\vspace{0.05in}
{\it Case  2.} For the two simple roots of $\mathcal{F}_2(z)$ and corresponding two fundamental RWs arranging along the $t$-axis.
As $\kappa_0=1, \kappa_1=0$, the governing equation $\mathcal{F}_2(z)=0$ has the roots $z=\pm \mathrm{i}$. With the transformation $x=\widehat{x}, \, t=\widehat{t}+\left(-1\right)^\delta h/\zeta$\, ($\delta=0, 1$), we derive the limitations
 \begin{gather}
 \lim_{h\to+\infty}q_j^{[1]_2}=q_j^\mathrm{bg}\left[1+R_j\left(\widehat{x}+\frac{\kappa_{1, 0}}{2\zeta}, \widehat{t}-\frac{\left(-1\right)^\delta\kappa_{0, 0}}{2\zeta^2}\right)\right]\mathrm{e}^{-2\mathrm{i}\omega_j}.
 \end{gather}
 Then we deduce the formula of  asymptotic behavior
\begin{gather}
q_j^{[1]_2}=q_j^\mathrm{bg}\left[1+\sum_{\delta=0}^1R_j\left(x-x_\delta, t-t_\delta\right)\right]\mathrm{e}^{-2\mathrm{i}\omega_j}
+\mathcal{O}\left(\frac{1}{h}\right), \,\, h\to +\infty,
\end{gather}
where $x_\delta=-\kappa_{1, 0}/(2\zeta), \, t_\delta=\left(-1\right)^\delta\left(h+\kappa_{0, 0}/(2\zeta)\right)/\zeta$. The leading term of the formula of asymptotic behavior is the linear superposition of two fundamental RWs, located at $\left(x_\delta, t_\delta\right)$, which can characterize the dynamical structure. In particular, as $\kappa_{1, 0}=\kappa_{0, 0}=0$, two fundamental RWs are located at $\left(0, \pm h/\zeta\right)$ (see Figs. \ref{RW2}(c, d)).

\vspace{0.05in} {\it Case  3.} For a double root of $\mathcal{F}_2(z)$, to conveniently express the leading term of asymptotic formula, we define
\begin{gather}
p_{j, 2}(x,t)=-\frac{2\mathrm{i}\zeta}{b_j-\mathrm{i}\zeta}\frac{\left(n+1\right)L_jL_0^*-\mathbf{L}^\dag \mathbf{L}}{\mathbf{L}^\dag \mathbf{L}},\quad j=1, 2, \cdots, n,
\end{gather}
where $\mathbf{L}, L_j's, L_0$ are defined in Eqs. \eqref{L} and \eqref{VL} with $\ell=2$. In particular, as $\kappa_0=\kappa_1=0$, the governing polynomial $\mathcal{F}_2=z^2$ has a double root $z=0$. To obtain the rogue wave solution in the leading term of the asymptotic formula, $\kappa_{0, 0}=0$ is posed.  Then
\bee
q_j^{[1]_2}=q_j^\mathrm{bg}\left(1+p_{j, 2}(x,t)\right)\mathrm{e}^{-2\mathrm{i}\omega_j}
\ene
with $\alpha_0=\kappa_{0, 1}, \alpha_1=\mathrm{i}\kappa_{1, 0}, \alpha_2=-2\zeta$, which display the strong interaction (Figs. \ref{RW2}(e, f)). Similarly, for the other double roots of $\mathcal{F}_2$, we also have the formula of asymptotic behavior as $h\to +\infty$. For instance,  as $\kappa_0=1, \kappa_1=-2$, the governing polynomial $\mathcal{F}_2=\left(z-1\right)^2$, which has a double root $z=1$. Under the constraint $\kappa_{0, 0}$, we derive the formula of asymptotic behaviors of the RWs
\begin{gather}
q_j^{[1]_2}=q_j^\mathrm{bg}\left[1+p_{j, 2}\left(x-h, t\right)\right]\mathrm{e}^{-2\mathrm{i}\omega_j}+\mathcal{O}\left(\frac{1}{h}\right),\,\, h\to +\infty,
\end{gather}
where  the parameters of $p_{j, 2}$ are $\alpha_0=\kappa_{0, 1},\, \alpha_1=\mathrm{i}\kappa_{1, 0},\, \alpha_2=-2\zeta$.

\vspace{0.1in} {\it Example 3.3.2.} For the case $\ell=3$, to study the asymptotic behaviors of rogue waves of $\mathbf{q}^{[1]_3}$, we take
\begin{gather}
\begin{gathered}
\alpha_0=\kappa_0h^3+\sum_{s=0}^2\kappa_{0, s}h^{2-s}/\zeta,\quad \alpha_1=\mathrm{i}\kappa_1h^2+\sum_{s=0}^1\kappa_{1, s}h^{1-s}/\zeta,            \quad
 \alpha_2=-2\kappa_2h-2\kappa_{2, 0}/\zeta,\quad \alpha_3=-6\mathrm{i},
 \end{gathered}
\end{gather}
with $\kappa_0,\kappa_{0, s}, \kappa_1, \kappa_{1, s}, \kappa_2, \kappa_{2, 0},  h\in\mathbb{R}$.

By the transformation $x=\widehat{x}+\Re\left(z\right)h, t=\widehat{t}+\Im\left(z\right)h/\zeta$, we derive the expansions
\begin{gather}
\frac{\mathbf{L}^\dag \mathbf{L}}{\left(n+1\right)h^6}=\left|z^3+\kappa_2z^2+\kappa_1z+\kappa_0\right|^2+\mathcal{O}\left(\frac{1}{h}\right), \,\,\,\, h\to +\infty,
\end{gather}
which generates the governing polynomial of degree $3$ as
 \begin{gather} \label{f3}
\mathcal{F}_3(z)=z^3+\kappa_2z^2+\kappa_1z+\kappa_0,
\end{gather}
which allows three cases: three simple roots, a double root and one simple root, and a triple root.

\vspace{0.05in}
{\it Case 1.} For the three simple roots of $\mathcal{F}_3(z)$, to obtain  the line-typed structure along $t$-axis, we take $\kappa_1=-1, \kappa_2=\kappa_0=0$, with which the governing polynomial $\mathcal{F}_3$ has three real simple roots $z=0, \pm 1$.  To obtain  the line-typed structure along $x$-axis, we take $\kappa_1=1, \kappa_2=\kappa_0=0$, with which the governing polynomial $\mathcal{F}_3$ has three simple roots $z=0, \pm \mathrm{i}$.    To obtain  the triangle-typed structure, we take $\kappa_0=1, \kappa_1=\kappa_2=0$, with which the governing polynomial $\mathcal{F}_3$ has three  simple roots $\mathrm{e}^{\left(2s+1\right)\pi\mathrm{i}/3}, s=0, 1, 2$.  Then we deduce the formula of  asymptotic behavior
\begin{gather}
q_j^{[1]_3}=q_j^\mathrm{bg}\left[1+\sum_{s=0}^2R_j\left(x-x_s, t-t_s\right)\right]\mathrm{e}^{-2\mathrm{i}\omega_j}
+\mathcal{O}\left(\frac{1}{h}\right), \,\, h\to +\infty.
\end{gather}
with the linear superposition of three single rogue waves located at $(x_s, t_s)$ in the leading term, where
\begin{itemize}
\item $x_\delta=\left(-1\right)^\delta\left(h-\kappa_{1, 0}/2\zeta\right)-\left(\kappa_{0, 0}+\kappa_{2, 0}+1\right)/2\zeta, \, t_\delta=0, x_2=(2\kappa_{0, 0}-1)/2\zeta, \,t_2=0, \, \delta=0, 1$ for  the case $\kappa_1=-1, \kappa_2=\kappa_0=0$ (see Figs. \ref{RW3}(a, b));
\item $x_\delta=(\kappa_{0, 0}-\kappa_{2, 0}+2)/2\zeta, \, t_\delta=(-1)^\delta(h+\kappa_{1, 0}/2\zeta)/\zeta, \, x_2=(2\kappa_{0, 0}+1)/2\zeta, \, t_2=0, \, \delta=0, 1$ for the case $\kappa_1=1, \kappa_2=\kappa_0=0$ (see Figs. \ref{RW3}(c, d));
\item $x_s=(h+(2\kappa_{0, 0}-2\kappa_{1, 0}+3)/6\zeta)\cos[(2s+1)\pi/3]-\kappa_{2, 0}/3\zeta, \, t_s=(h+(2\kappa_{0, 0}+2\kappa_{1, 0}+3)/6\zeta)\cos[(2s+1)\pi/3]/\zeta, \, s=0, 1, 2$ for the case $\kappa_0=1, \kappa_1=\kappa_2=0$ (see Figs. \ref{RW3}(e, f)).
\end{itemize}

\begin{figure}[!t]
\centering
\includegraphics[scale=0.4]{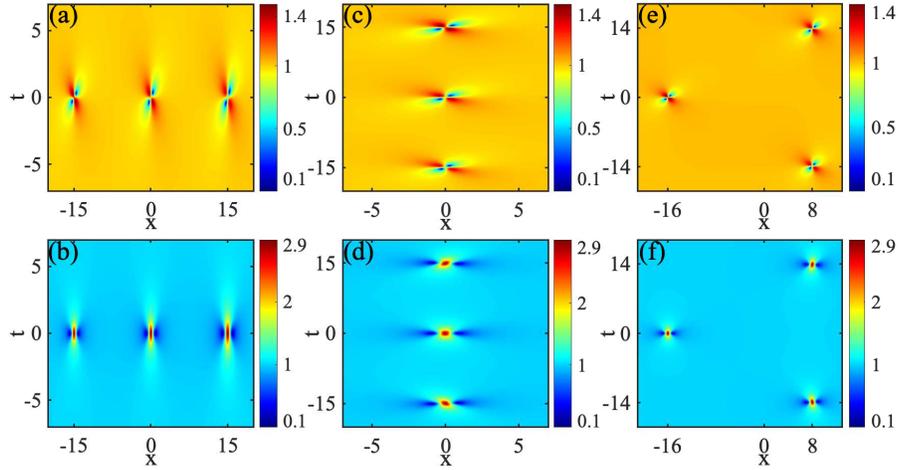}
\caption{Rogue wave solution $q_1^{[1]_3}/a_1$ (a, c, e) and  $q_2^{[1]_3}/a_2$ (b, d, f) of $3$-NLS equation in Eqs. \eqref{form-gai}, \eqref{L} and \eqref{VL} with parameters: (a, b) $\zeta=1, \alpha_0=225/2, \alpha_1=-225\mathrm{i}, \alpha_2=3, \alpha_3=-6\mathrm{i}$; (c, d)  $\zeta=1, \alpha_0=-225/2, \alpha_1=225\mathrm{i}, \alpha_2=-3, \alpha_3=-6\mathrm{i}$; (e, f) $\zeta=1, \alpha_0=-128, \alpha_1=\alpha_2=0, \alpha_3=-6\mathrm{i}$.}
\label{RW3}
\end{figure}

\vspace{0.05in}
{\it Case 2.} For the case of one simple root and one double root of $\mathcal{F}_3(z)$, we take $\kappa_2=1, \kappa_0=\kappa_1=0$, the governing polynomial $\mathcal{F}_3$ has a simple roots $z=-1$ and a double root $z=0$. The corresponding asymptotic formula is derived as \begin{gather}
q_j^{[1]_3}=q_j^\mathrm{bg}\left[1+p_{j, 2}\left(x, t\right)+R_j\left(x+h-\frac{2\kappa_{1, 0}-2\kappa_{2, 0}-1}{2\zeta}, t\right)\right]\mathrm{e}^{-2\mathrm{i}\omega_j}
+\mathcal{O}\left(\frac{1}{h}\right), \,\, h\to +\infty,
\end{gather}
where the parameters in $p_{j, 2}$ are $\alpha_0=-\kappa_{0, 1}/2\zeta, \alpha_1=-\mathrm{i}\kappa_{1, 0}/2\zeta$.

\vspace{0.05in}
{\it Case 3.}  For the case of a triple root of $\mathcal{F}_3(z)$, we take $\kappa_0=\kappa_1=\kappa_2=0$, the governing polynomial $\mathcal{F}_3$ has a triple root $z=0$. For the convenience of the representation of  the leading term of asymptotic formula, we define
\begin{gather}
p_{j, 3}=-\frac{2\mathrm{i}\zeta}{b_j-\mathrm{i}\zeta}\frac{\left(n+1\right)L_jL_0^*-\mathbf{L}^\dag \mathbf{L}}{\mathbf{L}^\dag \mathbf{L}},\quad j=1, 2, \cdots, n,
\end{gather}
where $\mathbf{L}, L_j's, L_0$ are defined in Eqs. \eqref{L} and \eqref{VL} with $\ell=3$.  To obtain the rogue wave solution in the leading term of the asymptotic formula, $\kappa_{0, 0}=\kappa_{0, 1}=\kappa_{1, 0}=0$ is posed.  Then $q_j^{[1]_3}=q_j^\mathrm{bg}\left(1+p_{j, 3}\right)\mathrm{e}^{-2\mathrm{i}\omega_j}$ with $\alpha_0=\kappa_{0, 2}, \alpha_1=\mathrm{i}\kappa_{1, 1}, \alpha_2=-2\kappa_{2, 0}, \alpha_3=-6\mathrm{i}\zeta$.

 \begin{figure}[!t]
\centering
\includegraphics[scale=0.4]{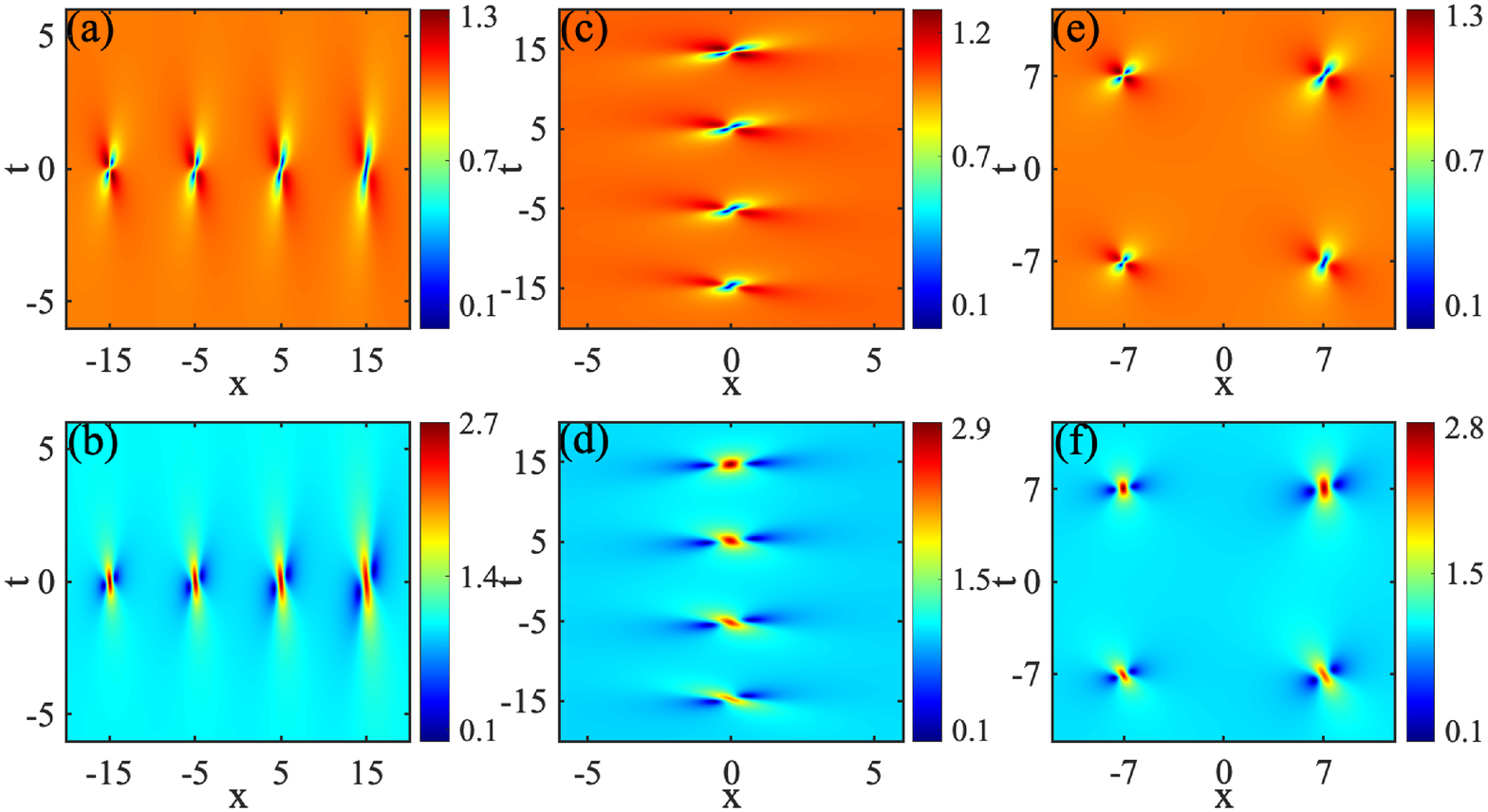}
\caption{Rogue wave solutions $q_1^{[1]_4}/a_1$ (a, c, e) and $q_2^{[1]_4}/a_2$ (b, d, f) of $4$-NLS equation in Eqs. \eqref{form-gai}, \eqref{L} and \eqref{VL} with parameters: (a, b) $\zeta=1, \alpha_0=5^49, \alpha_1=250\mathrm{i}, \alpha_2=500, \alpha_3=12\mathrm{i}, \alpha_4=24$; (c, d)  $\zeta=1,  \alpha_0=5^49, \alpha_1=0, \alpha_2=-500, \alpha_3=-24\mathrm{i}, \alpha_4=24$; (e, f) $\zeta=1, \alpha_0=10^4, \alpha_1=-300, \alpha_2=0, \alpha_3=-6\mathrm{i}, \alpha_4=24$.}
\label{RW4}
\end{figure}

\vspace{0.1in} {\it Example 3.3.3.} For the case $\ell=4$, to study the asymptotic behaviors of rogue wave, we take
\begin{gather}
\begin{gathered}
\alpha_0=\kappa_0h^4+\sum_{s=0}^3\kappa_{0, s}h^{3-s}/\zeta,\quad \alpha_1=\left(\kappa_1h^3+\sum_{s=0}^2\kappa_{1, s}h^{2-s}/\zeta\right)\mathrm{i},  \quad \alpha_2=-2\left(\kappa_2h^2+\sum_{s=0}^1\kappa_{2, s}h^{1-s}/\zeta\right),        \\
 \alpha_3=-6\left(\kappa_3h+\kappa_{3, 0}/\zeta\right)\mathrm{i},\quad \alpha_4=24,\quad \kappa_0,\kappa_{0, s}, \kappa_1, \kappa_{1, s}, \kappa_2, \kappa_{2, s},  \kappa_3, \kappa_{3, 0}, h\in\mathbb{R}.
 \end{gathered}
\end{gather}
By the transformation $x=\widehat{x}+\Re\left(z\right)h, t=\widehat{t}+\Im\left(z\right)h/\zeta$, we derive the expansions
\begin{gather}
\frac{\mathbf{L}^\dag \mathbf{L}}{\left(n+1\right)d^8}=\left|z^4+\kappa_3z^3+\kappa_2z^2+\kappa_1z+\kappa_0\right|^2+\mathcal{O}\left(\frac{1}{h}\right), \,\, h\to +\infty.
\end{gather}
Based on different types of multiple roots of the governing polynomial
 \begin{gather}\label{f4}
\mathcal{F}_4(z)=z^4+\kappa_3z^3+\kappa_2z^2+\kappa_1z+\kappa_0,
\end{gather}
we consider the following seven cases: four simple roots (Figs. \ref{RW4}(a, b, c, d, e,f)),
 a double root and two simple roots (Figs. \ref{RW4a}(a, b)), two  double roots (Figs. \ref{RW4a}(c, d)), a simple root and a triple root (Figs. \ref{RW4a}(e, f)), and a quadruple root. The details of asymptotic behaviors are addressed in {\bf Appendix D}.

Through the above analysis of some representative vector rational RWs, we have the following conjecture:

\begin{conjecture}
The asymptotic structures of $\mathcal{PT}$-symmetric vector rational rogue waves are related with some real coefficient polynomials.	
\end{conjecture}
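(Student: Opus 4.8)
The plan is to promote the three worked examples of this subsection (Examples 3.3.1--3.3.3, together with Appendix~D) into a statement valid for every admissible degree $\ell$ with $2\le\ell\le n$, and then to transport the same mechanism to the $N$th-order solutions of Section~\ref{sec4}. I would split the claim into two parts: (i) after a self-similar rescaling of $(x,t)$ driven by a large real parameter $h$, the quantity $\mathbf L^{\dagger}\mathbf L$ in~\eqref{form-gai}--\eqref{L} is asymptotic to $(n+1)h^{2\ell}\,|\mathcal F_\ell(z)|^{2}$ for a monic polynomial $\mathcal F_\ell$ of degree $\ell$; and (ii) the $\mathcal{PT}$-constraint of Proposition~\ref{prop-PT} forces $\mathcal F_\ell\in\mathbb R[z]$. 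The only structural input about $\mathbf A$ that is needed is that, from the generating-function form~\eqref{AB} with $z_0=\mathrm i\zeta$, one has (generalizing~\eqref{VL}) $\tau_s(x,t)=\frac{(-\mathrm i)^{s}}{s!}(x+\mathrm i\zeta t)^{s}+(\text{polynomial of lower degree in }x+\mathrm i\zeta t)$. For (ii), parametrize the free constants by $\alpha_s=s!\,\mathrm i^{s}\kappa_s h^{\ell-s}$ plus real-linear corrections of orders $h^{\ell-s-1},\dots,h^{0}$ (scaled by $\zeta^{-1}$), with $\kappa_\ell:=1$ and $\alpha_{\ell+1}=\cdots=\alpha_n=0$; this is exactly the scaling used in the examples. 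The key elementary identity is that $w=(-1)^{s}w^{*}$ holds for $w=s!\,\mathrm i^{s}\kappa$ precisely when $\kappa\in\mathbb R$, so the symmetry condition $\alpha_s=(-1)^{s}\alpha_s^{*}$ of Proposition~\ref{prop-PT} is equivalent to all the $\kappa$-parameters being real, i.e.\ to $\mathcal F_\ell$ having real coefficients.

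For (i), substitute $x=\widehat x+\Re(z)h$, $t=\widehat t+\Im(z)h/\zeta$, so that $x+\mathrm i\zeta t=(\widehat x+\mathrm i\zeta\widehat t)+hz$. Using $(-\mathrm i)^{s}\mathrm i^{s}=1$ one computes $\mathbf v_\ell\mathbf J_{\mathrm{or}}^{s}\bsm\alpha=h^{\ell-s}\mathrm i^{s}\mathcal F_\ell^{(s)}(z)+O(h^{\ell-s-1})$, where $\mathcal F_\ell(z)=z^{\ell}+\kappa_{\ell-1}z^{\ell-1}+\cdots+\kappa_0$; in particular $L_0=h^{\ell}\mathcal F_\ell(z)+O(h^{\ell-1})$, and since the $s=0$ term dominates, also $L_j=h^{\ell}\mathcal F_\ell(z)+O(h^{\ell-1})$ and $\mathbf L=h^{\ell}\mathcal F_\ell(z)\bsm\xi_0+O(h^{\ell-1})$. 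With $\bsm\xi_0^{\dagger}\bsm\xi_0=n+1$ this gives $\mathbf L^{\dagger}\mathbf L/[(n+1)h^{2\ell}]=|\mathcal F_\ell(z)|^{2}+O(1/h)$, reproducing precisely what is observed in the examples and identifying $\mathcal F_\ell$ as the governing polynomial of degree $\ell$.

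To read the limiting profile off~\eqref{form-gai}: on any compact set avoiding the zeros of $\mathcal F_\ell$ one has $(n+1)L_jL_0^{*}-\mathbf L^{\dagger}\mathbf L=O(h^{2\ell-1})$ while $\mathbf L^{\dagger}\mathbf L\asymp h^{2\ell}$, so $q_j^{[1]_\ell}\to q_j^{\mathrm{bg}}\mathrm e^{-2\mathrm i\omega_j}$ (the background). Near a zero $z_0$ of multiplicity $m$ one freezes $z=z_0$ and expands in $u=\widehat x+\mathrm i\zeta\widehat t$, finding $L_0=h^{\ell-m}\bigl[\tfrac{1}{m!}\mathcal F_\ell^{(m)}(z_0)\,u^{m}+\cdots\bigr]+O(h^{\ell-m-1})$ with $\mathcal F_\ell^{(m)}(z_0)\ne0$, and analogous identities for $L_j$ and $\mathbf L$; after cancelling the common factor $h^{\ell-m}$ one is left with a degree-$m$ rogue-wave block of the type $p_{j,m}$ (a fundamental rogue wave when $m=1$), whose internal shape constants are recovered from the lower-order correction parameters by an invertible triangular change of variables — the point at which the inverse-function device of Section~\ref{sec4} enters. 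Summing over the zeros of $\mathcal F_\ell$ exhibits the leading term as a superposition of such blocks centered at $\bigl(\Re(z_0)h+O(1),\,\Im(z_0)h/\zeta+O(1)\bigr)$, and the reality of $\mathcal F_\ell$ makes this configuration invariant under $z_0\mapsto z_0^{*}$, i.e.\ under $t\mapsto -t$, in accordance with Proposition~\ref{prop-PT}. The $N$th-order case of Section~\ref{sec4} is treated by the same three steps, with $\mathbf A$ replaced by its multi-fold analogue and $\mathcal F_\ell$ by a real-coefficient governing polynomial of the appropriate higher degree.

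The principal obstacle is the last, local step carried out \emph{uniformly}: proving that near a zero of arbitrary multiplicity $m$ the remainder after rescaling is \emph{exactly} one degree-$m$ block of the predicted form, and that the triangular map from the lower-order correction parameters to the shape constants of the $m$ blocks is onto, so that every such configuration is realizable. Establishing this for all $\ell$ and $m$ simultaneously, and then pushing it through the heavier algebra of the $N$-fold Darboux matrix of Theorem~\ref{thm2}, is what separates the examples from a theorem; by contrast the implication $\mathcal{PT}\Rightarrow\mathcal F_\ell\in\mathbb R[z]$ is the short computation above.
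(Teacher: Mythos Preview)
The paper does not prove this statement; it is explicitly posed as a \emph{conjecture}. Immediately after stating it, the authors write: ``Through the above theoretic analysis, we verify the above conjecture for the lower order cases. The systematic proof for above conjecture will be a future studying by the determinant solution or with the help of Riemann-Hilbert representation.'' In other words, the paper's ``proof'' consists solely of Examples 3.3.1--3.3.3 and Appendix~D (the cases $\ell=2,3,4$), together with the few second-order $N=2$ computations in Section~\ref{sec4}. There is no general argument in the paper to compare against.

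Your proposal therefore goes well beyond what the paper attempts. The two calculations you isolate are correct and are precisely the mechanism underlying the worked examples: the leading-term identity $\tau_s=\frac{(-\mathrm i)^s}{s!}(x+\mathrm i\zeta t)^s+(\text{lower degree})$ combined with the scaling $\alpha_s=s!\,\mathrm i^s\kappa_s h^{\ell-s}+\cdots$ does yield $\mathbf v_\ell\mathbf J_{\mathrm{or}}^s\bsm\alpha=\mathrm i^s h^{\ell-s}\mathcal F_\ell^{(s)}(z)+O(h^{\ell-s-1})$ and hence $\mathbf L^\dagger\mathbf L/[(n+1)h^{2\ell}]=|\mathcal F_\ell(z)|^2+O(1/h)$; and the equivalence $\alpha_s=(-1)^s\alpha_s^*\Longleftrightarrow\kappa_s\in\mathbb R$ is exactly the short computation you give. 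So parts~(i) and~(ii) of your plan are a genuine, clean generalization of what the examples exhibit case by case.

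Where you and the paper agree is on the obstacle. Your honest acknowledgment that the uniform local analysis near a zero of arbitrary multiplicity $m$ --- showing that the limit is \emph{exactly} a degree-$m$ block $p_{j,m}$, that the lower-order parameters map onto the block's shape constants by an invertible triangular change, and that this persists through the $N$-fold Darboux determinant --- remains unproved is precisely what the authors defer to ``future studying by the determinant solution or with the help of Riemann-Hilbert representation.'' So your proposal is not a complete proof either, but it is a correct and more structured outline than anything the paper provides, and it accurately locates the gap.
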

Through the above theoretic analysis, we verify the above conjecture for the lower order cases. The systematic proof for above conjecture will be a future studying by the determinant solution or with the help of Riemann-Hilbert representation.

\begin{figure}[!t]
\centering
\includegraphics[scale=0.4]{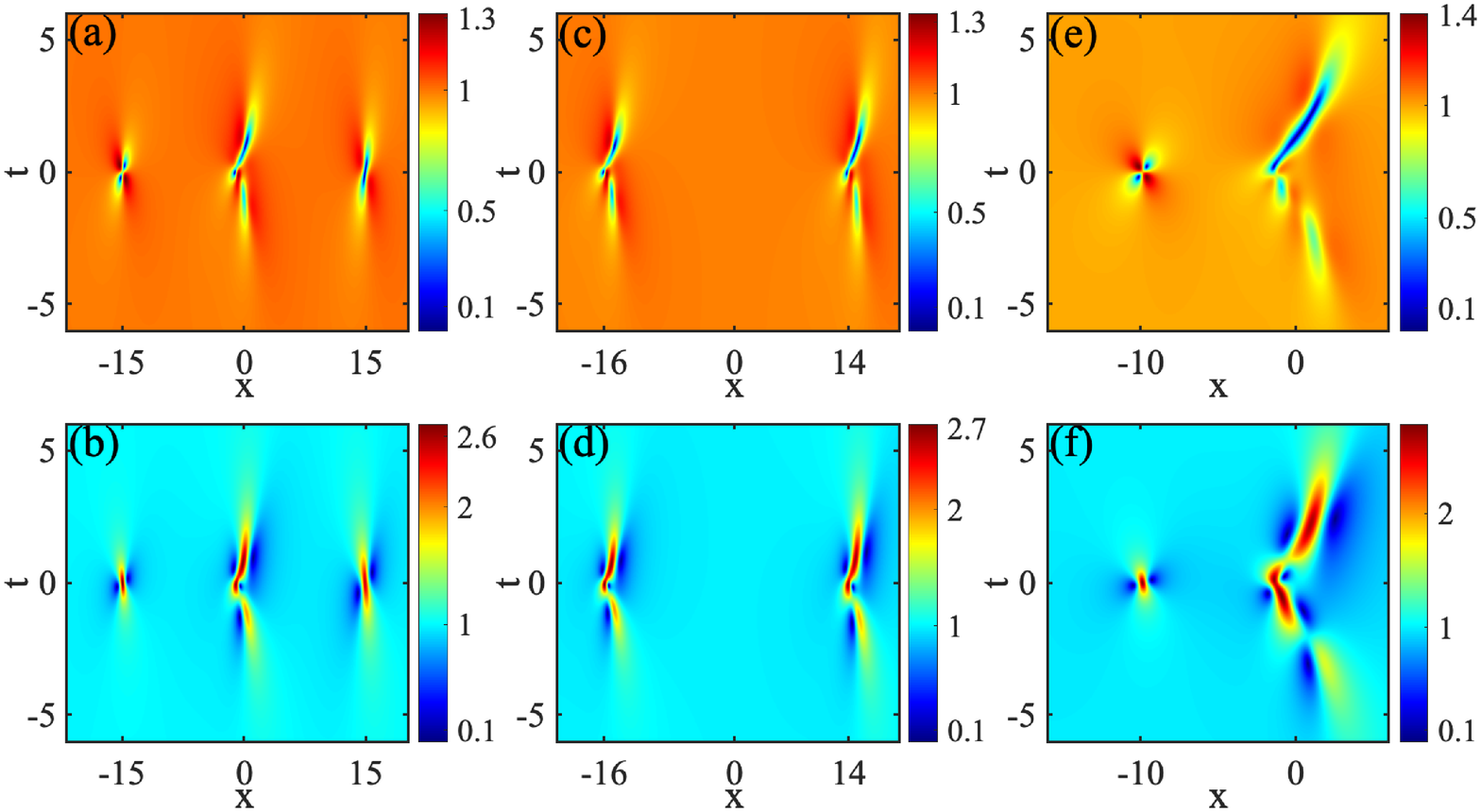}
\caption{Rogue wave solutions $q_1^{[1]_4}/a_1$ (a, c, e) and $q_2^{[1]_4}/a_2$ (b, d, f) of the $4$-NLS equation given in Eqs. \eqref{form-gai}, \eqref{L} and \eqref{VL} with parameters: (a, b) $\zeta=1, \alpha_0=10^4, \alpha_1=-300\mathrm{i}, \alpha_2=0, \alpha_3=-6\mathrm{i}, \alpha_4=24$; (c, d)  $\zeta=1, \alpha_0=15^4, \alpha_1=0, \alpha_2=900, \alpha_3=0, \alpha_4=24$; (e, f) $\zeta=1, \alpha_0=\alpha_1=\alpha_2=0, \alpha_3=-19\mathrm{i}, \alpha_4=8$.}
\label{RW4a}
\end{figure}

\subsection{Vector rational RWs with ultrahigh amplitudes}

In this subsection, we consider the vector rational RWs with the maximal amplitude. From the formula of the general first-order vector rogue waves (\ref{RW}), the parameter $\bsm{\beta}$ is related to the amplitude. Next, we will seek the suitable $\bsm{\beta}$ corresponding to the rogue wave with the maximal  amplitude. We will consider two kinds of cases: the maximal  average amplitude
%$\max_{\bsm{\beta},x, t}\left(\|\mathbf{q}^{[1]}\|_1/\left\|\mathbf{a}\right\|_1\right)$
and the maximal  amplitude of a certain component
%$\max\limits_{\bsm{\beta}, x, t}|q_j^{[1]}/a_j|$. For convenience, we denote the two kinds of expressions by
as
\begin{gather}\label{GAA}
\mathcal{GA}[1]:=\max_{\bsm{\beta},x, t}\frac{\|\mathbf{q}^{[1]}\|_1}{\left\|\mathbf{a}\right\|_1}, \qquad
\mathcal{A}_j[1]:=\max_{\bsm{\beta}, x, t}\left|q_j^{[1]}/a_j\right|.
\end{gather}

\begin{prop}
Given the formula of  the vector RW solutions (\ref{RW}),  $\mathcal{GA}[1]$ can be attained  at $\bsm{\beta}=\bsm{\eta}$ and $(x, t)=(0, 0)$, and $\mathcal{A}_j[1]$ can be attained at $\bsm{\beta}=\bsm{\eta}_j$ and $(x, t)=(0, 0)$:
\begin{align}\label{GAA-asy}
\mathcal{GA}[1]&=1+\frac{\left(n+1\right)\sqrt{n}}{\sum_{j=1}^n \csc{\omega_j}}=\mathcal{O}\left(\frac{\sqrt{n}}{\ln n}\right), \quad \mathrm{as}\quad n\to\infty,  \\[0.05in]
\mathcal{A}_j[1]&=1+(n+1)\sin\omega_j
\end{align}
where $\bsm{\eta}=\left(\sqrt{n}, \mathrm{i}, \mathrm{i}, \cdots, \mathrm{i}\right)^\mathrm{T}$, $\bsm{\eta}_j=\left(1, 0, \cdots, 0 ,\mathrm{i}, 0, \cdots, 0\right)^\mathrm{T}$, and the non-zero $\mathrm{i}$ is the $j+1$-th entry of $\bsm{\eta}_j$.
\end{prop}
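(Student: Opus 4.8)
The plan is to turn both optimisations into elementary inequalities on a single nonzero vector by evaluating the Darboux data at the origin. Set $\mathbf{w}=\mathbf{w}(x,t):=\mathbf{A}(x,t)\bsm{\beta}$; from Eq.~\eqref{Phi-p} this is the admissible-$\bsm{\beta}$ image of an invertible fundamental solution, so $\mathbf{w}\neq\mathbf{0}$ for every $(x,t)$. Using $\zeta/a_j=\sin\omega_j$, $|q_j^{\mathrm{bg}}|=a_j$, and that $\mathbf{A}_1\bsm{\beta}$, $\mathbf{A}_{j+1}\bsm{\beta}$ are the $1$st and $(j+1)$th entries $w_1,w_{j+1}$ of $\mathbf{w}$ while $(\mathbf{A}\bsm{\beta})^\dagger(\mathbf{A}\bsm{\beta})=\|\mathbf{w}\|_2^2$, formula~\eqref{RW} rewrites as
\[
\left|\frac{q_j^{[1]}(x,t)}{a_j}\right|=\left|1-2\mathrm{i}(n+1)\sin\omega_j\,\frac{w_{j+1}\overline{w_1}}{\|\mathbf{w}\|_2^2}\right|,
\]
valid for all $(x,t)$ and all admissible $\bsm{\beta}$. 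Note also that at $(x,t)=(0,0)$ every term of Eq.~\eqref{AB} with $s+k\geq1$ vanishes, so $\mathbf{A}(0,0)=\mathbb{I}_{n+1}$ and $\mathbf{w}(0,0)=\bsm{\beta}$.

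For $\mathcal{A}_j[1]$ I would combine the triangle inequality with $|w_1||w_{j+1}|\le\tfrac12(|w_1|^2+|w_{j+1}|^2)\le\tfrac12\|\mathbf{w}\|_2^2$ to get $|q_j^{[1]}/a_j|\le 1+(n+1)\sin\omega_j$ uniformly in $(x,t)$ and $\bsm{\beta}$. Equality is then exhibited at $(x,t)=(0,0)$ with $\bsm{\beta}=\bsm{\eta}_j$: there $w_1=1$, $w_{j+1}=\mathrm{i}$ and all other entries vanish, so the AM--GM step is tight and $-2\mathrm{i}(n+1)\sin\omega_j\,w_{j+1}\overline{w_1}/\|\mathbf{w}\|_2^2=(n+1)\sin\omega_j>0$ aligns with the leading $1$; admissibility holds because $\bsm{\xi}_0=(1,\mathrm{e}^{-\mathrm{i}\omega_1},\dots,\mathrm{e}^{-\mathrm{i}\omega_n})^{\mathrm{T}}$ (once $\Re\lambda_0=0$, via $a_k/(z_0+b_k)=\mathrm{e}^{-\mathrm{i}\omega_k}$) has all entries nonzero whereas $\bsm{\eta}_j$ does not for $n\ge2$, the case $n=1$ being checked directly.

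For $\mathcal{GA}[1]$ I would write $\|\mathbf{q}^{[1]}\|_1/\|\mathbf{a}\|_1=\big(\sum_j a_j|q_j^{[1]}/a_j|\big)/\sum_j a_j$, bound each ratio as above, and use $a_j\sin\omega_j=\zeta$ to reach a numerator $\le\|\mathbf{a}\|_1+2(n+1)\zeta\,|w_1|\big(\sum_j|w_{j+1}|\big)/\|\mathbf{w}\|_2^2$. Cauchy--Schwarz gives $\sum_j|w_{j+1}|\le\sqrt{n}\,\big(\sum_j|w_{j+1}|^2\big)^{1/2}$ and then AM--GM gives $|w_1|\big(\sum_j|w_{j+1}|^2\big)^{1/2}\le\tfrac12\|\mathbf{w}\|_2^2$, so the whole quotient is $\le 1+(n+1)\zeta\sqrt{n}/\|\mathbf{a}\|_1=1+(n+1)\sqrt{n}/\sum_j\csc\omega_j$, again uniformly. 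The equality conditions (all $|w_{j+1}|$ equal, $|w_1|^2=n|w_2|^2$, and $w_{j+1}\overline{w_1}$ a nonnegative real multiple of $\mathrm{i}$ for every $j$) are met at $(x,t)=(0,0)$ by $\bsm{\beta}=\bsm{\eta}=(\sqrt n,\mathrm{i},\dots,\mathrm{i})^{\mathrm{T}}$, which is linearly independent of $\bsm{\xi}_0$ since matching first entries forces the proportionality factor to be $\sqrt n$, incompatible with $|\mathrm{i}|=|\sqrt n\,\mathrm{e}^{-\mathrm{i}\omega_j}|$ unless $n=1$ (handled separately). Since the bounds are attained at the origin, the suprema are maxima attained there, as claimed.

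The only genuinely analytic point is the growth statement $\mathcal{GA}[1]=1+(n+1)\sqrt n/\sum_{j=1}^n\csc\!\big(\tfrac{j\pi}{n+1}\big)=\mathcal{O}(\sqrt n/\ln n)$: here I would invoke the classical estimate for sums of cosecants, $\sum_{j=1}^{n}\csc\!\big(\tfrac{j\pi}{n+1}\big)=\tfrac{2(n+1)}{\pi}\ln(n+1)+\mathcal{O}(n)$, the logarithmic divergence coming from the two endpoint terms where $\csc x\sim 1/x$, which yields $\mathcal{GA}[1]=1+\tfrac{\pi\sqrt n}{2\ln n}(1+o(1))$. I expect this cosecant-sum asymptotic to be the main technical obstacle; everything else reduces to the triangle inequality, AM--GM and Cauchy--Schwarz, together with the bookkeeping that makes the bounds uniform in $(x,t)$.
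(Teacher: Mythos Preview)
Your proof is correct and follows the same strategy as the paper: bound $|q_j^{[1]}/a_j|$ and $\|\mathbf{q}^{[1]}\|_1/\|\mathbf{a}\|_1$ uniformly over all nonzero $\mathbf{w}=\mathbf{A}\bsm{\beta}$ by elementary inequalities, then exhibit equality at $(x,t)=(0,0)$ via $\mathbf{A}(0,0)=\mathbb{I}$. Your direct use of Cauchy--Schwarz plus AM--GM for $\mathcal{GA}[1]$ is in fact slightly more explicit than the paper's argument, which first passes to the positive-real cone $\mathbb{C}_1$ and then asserts without detail that $c_0\big(\sum_{j\ge1}c_j\big)/\sum_{j\ge0}c_j^2$ is maximised at $c_0=\sqrt{n}\,c_1$, $c_1=\cdots=c_n$; for the asymptotic the paper supplies the two-sided bound $\tfrac{2}{\pi}\ln(n/2)\le(n+1)^{-1}\sum_{j}\csc\omega_j\le1+\ln\!\big((n+1)/2\big)$, which is exactly the cosecant-sum estimate you invoke.
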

\begin{proof}
Given the $(n+1)$-dimensional vectors $\bsm{\gamma}=\left(\gamma_0, \gamma_1, \cdots, \gamma_n\right)$ and $\widehat{\bsm{\gamma}}=\left(\left|\gamma_0\right|, \left|\gamma_1\right|\mathrm{i}, \cdots, \left|\gamma_n\right|\mathrm{i}\right)$, one has the following inequality
\begin{gather*}
\left|a_j-\frac{2\mathrm{i}\left(n+1\right)\zeta\gamma_j\gamma_0^*}{\bsm{\gamma}^\dag\bsm{\gamma}}  \right|\le
\left|a_j-\frac{2\mathrm{i}\left(n+1\right)\zeta\left(\left|\gamma_j\right|\mathrm{i}\right)\left|\gamma_0\right|^*}{\widehat{\bsm{\gamma}}^\dag\widehat{\bsm{\gamma}}}  \right|,
\end{gather*}
from which one can claim that the maximal  value of $\sum_{j=1}^n\left|a_j-2\mathrm{i}\left(n+1\right)\zeta\gamma_j\gamma_0^*/(\bsm{\gamma}^\dag\bsm{\gamma}) \right|$ can be reached at the subset %$\mathbb{C}_1\subset\mathbb{C}^{n+1}$:
$\mathbb{C}_1:=\left\{\bsm{\gamma}=\left(\gamma_0, \gamma_1, \cdots, \gamma_n\right)\in \mathbb{C}^{n+1} \big| \, \gamma_0\in \mathbb{R}, \gamma_j\in \mathrm{i}\mathbb{R}, \gamma_0\ge 0, \left(\gamma_j\right)_\mathrm{I}\ge 0, 1\le j\le n\right\}$.

Given a vector $\mathbf{c}=\left(c_0, c_1\mathrm{i}, c_2\mathrm{i}, \cdots, c_n\mathrm{i} \right)\in\mathbb{C}_1$, where $c_j\ge 0\, (0\le j\le n)$, one can derive
\begin{gather}\label{ceq}
\sum_{j=1}^n\left|a_j+\frac{2\left(n+1\right)\zeta c_jc_0}{\mathbf{c}^\dag\mathbf{c}}  \right|=
\left(\sum_{j=1}^n \csc\varphi_j+\frac{2\left(n+1\right)c_0\sum_{j=1}^n c_j}{\sum_{j=0}^nc_j^2}\right)\left|\zeta\right|.
\end{gather}
The right-hand side of Eq.~(\ref{ceq}) is a multivariate function with respect to variables $c_0, c_1, \cdots, c_n$, and its maximal value is achieved at $c_0=c_1\sqrt{n}$,\, $c_j=c_1\, (1\le j\le n)$. Then the greatest value can be obtained at $\bsm{\eta}$. Note that $\mathbf{A}$ in Eq.~(\ref{RW}) is identity matrix at $(x, t)=(0, 0)$, thus one confirms $\mathcal{GA}[1]$ is attained at $\bsm{\beta}=\bsm{\eta}$, $(x, t)=(0, 0)$, and
\begin{gather*}
\mathcal{GA}[1]=\frac{\|\mathbf{q}^{[1]}(0, 0)\|_1}{\left\|\mathbf{a}\right\|_1}\bigg|_{\bsm{\beta}=\bsm{\eta}}=1+\frac{\left(n+1\right)\sqrt{n}}{\sum_{j=1}^n \csc{\omega_j}}.
\end{gather*}
By the estimate
\begin{gather*}
\frac{2}{\pi}\ln\left(\frac{n}{2}\right)\le\frac{\sum_{j=1}^n\csc\omega_j}{n+1}\le 1+\ln\left(\frac{n+1}{2}\right),
\end{gather*}
one derives
$\mathcal{GA}[1]=\mathcal{O}\left(\sqrt{n}/\ln n\right), \, n\to\infty$.

The case of $\mathcal{A}_j[1]$ can be verified by the estimate
\begin{gather*}
\mathcal{A}_j[1]=\left|q_j^{[1]}(0, 0)/a_j\right|_{\bsm{\beta}=\mathbf{c}}=1+\frac{2(n+1)c_0c_j\sin\omega_j}{\sum_{k=0}^nc_k^2}\le 1+\frac{2(n+1)c_0c_j\sin\omega_j}{c_0^2+c_j^2}\le 1+(n+1)\sin\omega_j,
\end{gather*}
 where  $\mathbf{c}=\left(c_0, c_1\mathrm{i}, \cdots, c_n\mathrm{i}\right)\in\mathbb{C}_1$. Then we complete the proof.
\end{proof}

\begin{figure}[!t]
\centering
\includegraphics[scale=0.4]{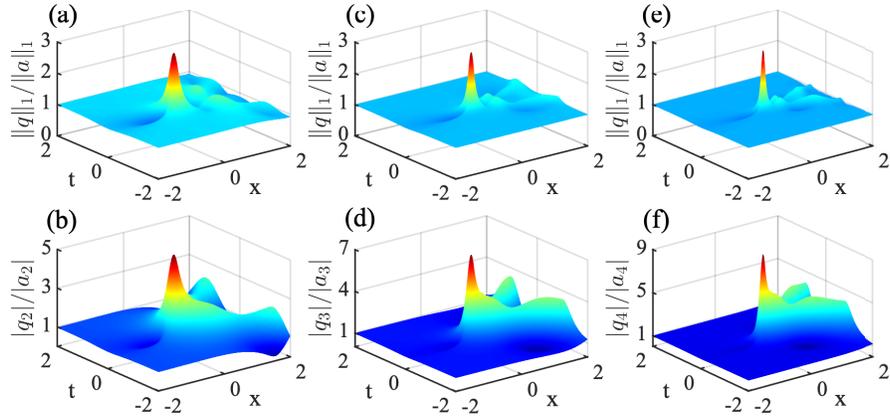}
\caption{Rogue wave solutions with $\mathcal{GA}[1]$ for (a) $3$-NLS  equation, (c) $5$-NLS  equation and (e) $7$-NLS  equation. Rogue wave solutions with $\mathcal{A}_{\frac{n+1}{2}}[1]$ for (b) $3$-NLS  equation, (d) $5$-NLS  equation and (f) $7$-NLS  equation.}
\label{fGAA}
\end{figure}

\begin{remark}
 $\bsm{\eta}$ satisfies the parity-time symmetric condition $\bsm{\eta}=\mathbf{J}\bsm{\eta}^*$, then the rogue wave $\mathbf{q}^{[1]}$ in Eq.  \eqref{RW} with $\bsm{\beta}=\bsm{\eta}$ is also the parity-time-reversal. That is to say the rogue wave with the greatest average amplitude is time-reversal.
\end{remark}

It is well known that the amplitude of the Peregrine soliton (rogue wave) for $n=1$ is $3$ times  the background. It follows from Eq. \eqref{GAA-asy} that the ratio of the amplitude of $\|\mathbf{q}^{[1]}\|_1$ and the background $\|\mathbf{a}\|_1$ is $\mathcal{O}(\sqrt{n}/\ln n)$, which tends to $\infty$ as $n\to\infty$:

\begin{itemize}

\item{} As $n=1$, one has $\mathcal{GA}[1]=3$ corresponds to the Peregrine soliton;

 \item{} As $2\le n\le 11$,  one has $2.8<\mathcal{GA}[1]<3$;

 \item{}
 As $n\ge 12$,  one has $\mathcal{GA}[1]>3$.

 \end{itemize}

 We display the dynamical structures of rogue waves with $\mathcal{GA}[1]$ for $3$-NLS equation in Fig.~\ref{fGAA}a,  $5$-NLS equation in
 Fig. \ref{fGAA}c, and $7$-NLS equation in Fig. \ref{fGAA}e, respectively.

For the function $\mathcal{A}_{j}[1]$, we have the following conclusions:

\begin{itemize}

\item{} As  $n$ is odd, for the rogue wave with maximal amplitude of the $j$-th component, $\bsm{\eta}_j=\mathbf{J}\bsm{\eta}_j^*$ iff $j=\frac{n+1}{2}$. That is, the vector rogue wave $\mathbf{q}^{[1]}$ with $\bsm{\beta}=\bsm{\eta}_{j_c}$ and $j_c=\left(n+1\right)/2$ is parity-time-reversal and the
$|q_{j_c}^{[1]}(0,0)/a_{j_c}|$ of the $j_c$-th component attains maximum $\mathcal{A}_{j_c}[1]=n+2$. We display the dynamical structures of rogue waves $|q_{j_c}^{[1]}(x,t)/a_{j_c}|$ for the $3$-NLS equation (Fig. \ref{fGAA}(b)),  $5$-NLS equation (Fig. \ref{fGAA}(d)), and $7$-NLS equation (Fig. \ref{fGAA}(f)), respectively.

\item{} As $n$ is even, the vector rogue wave $\mathbf{q}^{[1]}$ with $\bsm{\beta}=\bsm{\eta}_j$ is no longer parity-time-reversal. As both the time-reversal property and the greatest amplitude are considered, one can choose the rogue wave with $\bsm{\beta}=\left(1, 0, \cdots, 0, \mathrm{i}/\sqrt{2},  \mathrm{i}/\sqrt{2}, 0, \cdots, 0\right)^\mathrm{T}$, where the non-zero $ \mathrm{i}/\sqrt{2}$ is the $\left(n/2+1\right)$- and $\left(n/2+2\right)$-th entries. In this case, $\mathbf{q}^{[1]}$ is parity-time-reversal and
\begin{gather*}
\left|q_{\frac{n}{2}}^{[1]}(0, 0)/a_{\frac{n}{2}}\right|=\left|q_{\frac{n}{2}+1}^{[1]}(0, 0)/a_{\frac{n}{2}+1}\right|=1+\frac{\sqrt{2}}{2}(n+1)\sin\omega_{\frac{n}{2}}.
\end{gather*}

\end{itemize}

In the last section, we classify the general formula of the first-order rogue wave solution (\ref{RW}) into $n$ cases: $\mathbf{q}^{[1]_1}, \mathbf{q}^{[1]_2}, \cdots, \mathbf{q}^{[1]_n}$. We can clarify  the vector rogue waves with the maximal amplitude (\ref{GAA}) considered in this section correspond to the $n$-th case: $\mathbf{q}^{[1]_n}$. In fact, by the computation we can obtain
\begin{gather}
\mathrm{det}\left(\bsm{\xi}_0, \bsm{\xi}_1, \cdots, \bsm{\xi}_{n-1}, \bsm{\eta}\right)=\mathrm{i}^n\left(-\zeta^{-1}\right)^{\frac{n\left(n-1\right)}{2}}\left(\sqrt{n}+\sum_{k=1}^n\sin\omega_k\right)\prod_{s=1}^{n-1}\left(\sin\omega_{s+1}\right)^s\ne 0,
\end{gather}
and  for $j=1, 2, \cdots, n$,
\begin{gather}
\mathrm{det}\left(\bsm{\xi}_0, \bsm{\xi}_1, \cdots, \bsm{\xi}_{n-1}, \bsm{\eta}_j\right)=\mathrm{i}^n\left(-\zeta^{-1}\right)^{\frac{n\left(n-1\right)}{2}}\left(1+\mathrm{i}{\rm e}^{-{\rm i}\omega_j}\right)\prod_{s=1}^{n-1}\left(\sin\omega_{s+1}\right)^s\ne 0,
\end{gather}
which derive $\mathbf{B}^n\bsm{\eta}\ne\mathbf{0}$ and  $\mathbf{B}^n\bsm{\eta}_j\ne\mathbf{0}$. Then the rogue waves with $\bsm{\beta}=\bsm{\eta}$ and $\bsm{\beta}=\bsm{\eta}_j$ belong to the  case of $\ell=n$.

\section{Higher-order vector rational RWs: dynamics and maximal amplitude}\label{sec4}

\subsection{The inverse function method and an explicit formula via two determinants}

In this subsection, we present a novel technique of the inverse function to find an explicit formula of the $N$-th-order vector rational RW solutions of the $n$-NLS equation (\ref{n-NLS}) via the multi-fold Darboux transformation, where $N\in\mathbb{N}^+$. By the idea of the generalized Darboux transformation, we need to make the perturbation of the spectral parameter $\lambda_0$.

Let $\lambda=\lambda_0\left(1+\epsilon^{n+1}\right)$ with $\epsilon$ being a small perturbation parameter. Denote the  eigenvalues of the matrix $\mathbf{H}+\lambda\mathbb{I}_{n+1}$ by $\mu_0:=\mu_0(\epsilon), \ldots, \mu_{n}:=\mu_{n}(\epsilon)$, which satisfy the characteristic equation $|(\mu_j(\epsilon)-\lambda)\mathbb{I}_{n+1}-\mathbf{H}|=0$.
The main difficulty of the application of the generalized Darboux transformation to multi-component model is that $\mu_j(\epsilon)$'s can not be explicitly represented. By recalling the idea of the generalized Darboux transformation, we find that these explicit representations are in fact not necessary. We just need to solve the values $\mu_j$, $\mathrm{d}\mu_j/\mathrm{d}\epsilon$, $\mathrm{d}^2\mu_j/\mathrm{d}\epsilon^2$, $\cdots$ at $\epsilon=0$ explicitly. In next theorem, the explicit formula of the $N$-th order rogue wave solutions is presented.

\begin{theorem}\label{Nth-RW}
The explicit formula of the $N$-th order vector rational RW solutions for the $n$-NLS system (\ref{n-NLS}) is derived as
\begin{gather}\label{high-RW}
q_j^{[N]}(x,t)=q_j^{\mathrm{bg}}+2\frac{\mathrm{det}
\begin{pmatrix}
\mathbf{M}& \mathbf{Y}_1^\dagger \vspace{0.05in}\\
\mathbf{Y}_{j+1}&0
\end{pmatrix}
}{\mathrm{det}\,\mathbf{M}},\quad  j=1, 2, \cdots n
\end{gather}
with the intensity of the vector potential $\mathbf{q}^{[N]}$ being
\begin{gather}\label{RW-mf-N}
\|\mathbf{q}^{[N]}\|_2^2=\left\|\mathbf{a}\right\|_2^2+\frac{\partial^2}{\partial x^2} \ln (\mathrm{det}\,\mathbf{M}),
\end{gather}
where $\mathbf{Y}_j$ is the $j$-th rows of the matrix $\mathbf{Y}=\left(\widetilde{\Phi}_0, \widetilde{\Phi}_1, \cdots, \widetilde{\Phi}_{N-1}\right)$ with
%,  $\widetilde{\Phi}_k$ are $(n+1)$-dimensional column vectors as
$\widetilde{\Phi}_k=\sum_{s=0}^k\widehat{\Phi}_s\bsm{\gamma}_{k-s}\bsm{\beta}\, (k=0, 1, \ldots, N-1)$, $\bsm{\gamma}_0,  \bsm{\gamma}_1\cdots,  \bsm{\gamma}_{N-1}, \bsm{\beta}$ being $(n+1)$-dimensional free constant column vectors, and $\widehat{\Phi}_s$ are $(n+1)$-dimensional column vector as
\begin{gather}
\widehat{\Phi}_s=\left(\frac{\Phi_0^{\left(sn+s\right)}\left(0\right)}{\left(sn+s\right)!}\mathbf{R}_0, \frac{\Phi_0^{\left(sn+s+1\right)}\left(0\right)}{\left(sn+s+1\right)!}\mathbf{R}_1, \cdots, \frac{\Phi_0^{\left(sn+s+n\right)}\left(0\right)}{\left(sn+s+n\right)!}\mathbf{R}_n\right),
\end{gather}
%$\mathbf{R}_0, \mathbf{R}_1, \cdots, \mathbf{R}_n$ are $(n+1)$-dimensional column vectors as
%\begin{gather}
with $\mathbf{R}_s=\left(g_0^s\left(z_0\right), g_1^s\left(z_0\right), \cdots, g_{n}^s\left(z_0\right)\right)^\dag,\, (s=0, 1, \ldots, n)$
%\mathrm{with} \\
%\end{gather}
and
\begin{gather}
g_k:z\mapsto \left|f(z)\right|^{\frac{1}{n+1}}\mathrm{e}^{\mathrm{i}\left[\mathrm{arg}\left(f(z)\right)+2k\pi\right]/(n+1)}, \,\, f(z)=2\lambda_0\prod_{s=1}^n\left(z+b_s\right),\,\, \mathrm{arg}\left(f(z)\right)\in\left(-\pi, \pi\right],\,\,k=0, \ldots, n,
\end{gather}
$\Phi_0^{\left(j\right)}\left(0\right)$ denotes the $j$-th-order derivative of matrix $\Phi_0\left(\epsilon\right)=\Big(\mathbf{h}\left(\mu_1\right), \mathbf{h}\left(\mu_2\right), \cdots, \mathbf{h}\left(\mu_{n+1}\right)\Big)$ at $\epsilon=0$ with
 %\mathrm{with}\,\, \Omega: z\mapsto z\left[x+\left(z/2+\lambda\right)t\right]\,\, \mathrm{and}\\ at $\epsilon=0$
\begin{gather}
\mathbf{h}:
z\mapsto\mathbf{G}\left(1, \frac{a_1}{z+b_1}, \cdots, \frac{a_n}{z+b_n}\right)^\mathrm{T}
\mathrm{e}^{\mathrm{i}(z-\lambda)[x+(z+\lambda)t/2]-\mathrm{i}\left(\left\|\mathbf{a}\right\|_2^2+\lambda^2/2\right)t},
\end{gather}
$\mu_0:=\mu_0(\epsilon),\mu_1:=\mu_1(\epsilon),  \ldots, \mu_{n}:=\mu_{n}(\epsilon)$ are $n+1$ eigenvalues of the matrix  $\mathbf{H}+\lambda_0\left(1+\epsilon^{n+1}\right)\mathbb{I}_{n+1}$, satisfying
\begin{gather}\label{DD}
\mu_s\left(0\right)=z_0, \quad \mu_s^{\left(i\right)}\left(0\right)=\frac{{\rm d}^{i-1}g_s^i(z)}{{\rm d}z^{i-1}}\Big|_{z=z_0}, \quad i\in\mathbb{N}^+, \quad s=0, 1, \ldots, n,
\end{gather}
and the matrix $\mathbf{M}=\left(M_{m, k}\right)_{0\le m, k \le N-1}$  with
\begin{gather}
M_{m, k}=\frac{1}{\mathrm{i}\left(n+1\right)\zeta}\sum_{s=0}^{m}
\sum_{l=0}^{k}\left(-\frac{1}{2}\right)^{s+l}\binom{s+l}{l} \widetilde{\Phi}_{m-s}^\dag \widetilde{\Phi}_{k-l},  \,\,\,\, 1\le m, k\le N.
\end{gather}
\end{theorem}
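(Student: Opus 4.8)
The plan is to realize \eqref{high-RW}--\eqref{RW-mf-N} as the special case of the general $N$-fold Darboux transformation of Theorem~\ref{thm2} in which there is a single pole of multiplicity $N$ sitting at the degenerate spectral value $\lambda_0$ (that is, $k=1$ and $r_1=N$), evaluated on the plane-wave background \eqref{plane}--\eqref{jutixuanqu}. Everything in \eqref{eq:darboux}--\eqref{eq:back-N} is then explicit except the confluent eigenvectors $\varphi_1^{[j]}$, whose expansion is exactly what the inverse-function substitution $\lambda=\lambda_0\!\left(1+\epsilon^{n+1}\right)$ produces. \textbf{Step 1 (diagonalize the background system).} For $\lambda$ near $\lambda_0$ with $\mathbf H+\lambda\mathbb I_{n+1}$ nondegenerate, the reduced system \eqref{constant} has the $j$-th column solution $\mathbf h(\mu_j)$, where $\mu_0,\dots,\mu_n$ are the eigenvalues of $\mathbf H+\lambda\mathbb I_{n+1}$ and $\mathbf h$ is the map in the statement (the rational vector is the corresponding eigenvector of $\mathbf H$, the exponential is dictated by \eqref{constant}, and $\mathbf G$ undoes the gauge); hence $\Phi(\lambda;x,t)=\Phi_0(\epsilon)\,C(\epsilon)$ for a constant-in-$(x,t)$ matrix $C(\epsilon)$ enforcing $\Phi(\lambda;0,0)=\mathbb I$. \textbf{Step 2 (inverse function via Lagrange inversion).} By Lemma~\ref{root-condition}, the choice \eqref{jutixuanqu} makes $D(z)$ in \eqref{Dx} equal to $(z-z_0)^{n+1}$, which is equivalent to $2\lambda-2\lambda_0=(z-z_0)^{n+1}/f(z)$ along the spectral curve, with $f(z)=2\lambda_0\prod_{s=1}^n(z+b_s)$; substituting $\lambda=\lambda_0\!\left(1+\epsilon^{n+1}\right)$ gives $\epsilon^{n+1}=(z-z_0)^{n+1}/f(z)$, i.e.\ $z-z_0=\epsilon\,g_k(z)$ on the $k$-th sheet. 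Since $f(z_0)\neq0$ and the $b_i$ are distinct, $g_k$ is analytic at $z_0$, so Lagrange's inversion theorem yields $\mu_k(\epsilon)=z_0+\sum_{i\ge1}\tfrac{\epsilon^i}{i!}\,\tfrac{\mathrm d^{i-1}}{\mathrm dz^{i-1}}\!\left[g_k(z)^i\right]\big|_{z=z_0}$, which is precisely \eqref{DD}. Because the sheets are related by $g_k(z)=\varrho^k g_0(z)$ with $\varrho:=\mathrm e^{2\pi\mathrm i/(n+1)}$, this upgrades to the key identity $\mu_k(\epsilon)=\mu_0(\varrho^k\epsilon)$, so the $k$-th column of $\Phi_0(\epsilon)$ is its first column evaluated at $\varrho^k\epsilon$.

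\textbf{Step 3 (reorganize the $\epsilon$-expansion into blocks).} Writing $\mathbf h(\mu_0(\epsilon))=\sum_{m\ge0}\mathbf c_m\epsilon^m$, Step~2 gives $\Phi_0(\epsilon)=\sum_{m\ge0}\epsilon^m\,\mathbf c_m\,(1,\varrho^m,\dots,\varrho^{nm})$, so the $\epsilon^m$-coefficient of $\Phi_0$ is rank one with a discrete-Fourier row. Using $\sum_{j=0}^n\varrho^{j(m-m')}=(n+1)$ when $m\equiv m'\!\pmod{n+1}$ and $0$ otherwise, the combinations $\widehat\Phi_s$ of the statement --- which contract the $\bigl(s(n+1)+k\bigr)$-th $\epsilon$-derivative of $\Phi_0$ against the conjugate-Fourier vector $\mathbf R_k$ --- collapse exactly onto the block $(\mathbf c_{s(n+1)},\dots,\mathbf c_{s(n+1)+n})$ of Taylor coefficients, up to the harmless constant $n+1$ and the scalars $g_0^k(z_0)^{*}$. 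Converting $\mathrm d/\mathrm d\lambda$ to $\mathrm d/\mathrm d\epsilon$ (so one $\lambda$-derivative raises the $\epsilon$-order by $n+1$ and mixes exactly the orders congruent modulo $n+1$), one checks that $\{\varphi_1^{[0]},\dots,\varphi_1^{[N-1]}\}$ and the columns $\{\widetilde\Phi_0,\dots,\widetilde\Phi_{N-1}\}$ --- with $\widetilde\Phi_k=\sum_{s=0}^k\widehat\Phi_s\bsm\gamma_{k-s}\bsm\beta$, where the free vectors $\bsm\gamma_k$ and $\bsm\beta$ carry the $\epsilon$-expansion of the generalized-DT vector $\mathbf v(\lambda)$ together with $C(\epsilon)$ --- span the same space, hence may be used interchangeably in \eqref{eq:darboux}.

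\textbf{Step 4 (assemble the two determinants).} Substituting $\mathbf Y=(\widetilde\Phi_0,\dots,\widetilde\Phi_{N-1})$ into \eqref{eq:darboux}--\eqref{eq:back-N}, the Gram-type matrix of Theorem~\ref{thm2} at the single node $\lambda_0$ --- after using the projection and symmetry relations behind Lemma~\ref{thm1} and Lemma~\ref{lemma2}, the change of variable $\lambda\mapsto\epsilon$, and the evaluation $\lambda_0-\lambda_0^{*}=\mathrm i(n+1)\zeta$ from \eqref{BA} --- reduces to $M_{m,k}=\tfrac{1}{\mathrm i(n+1)\zeta}\sum_{s,l}\bigl(-\tfrac{1}{2}\bigr)^{s+l}\binom{s+l}{l}\widetilde\Phi_{m-s}^{\dagger}\widetilde\Phi_{k-l}$. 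The potential update \eqref{eq:back-N} is then rewritten as the bordered-determinant ratio \eqref{high-RW} by Cramer's rule (a Schur-complement identity for $\mathbf T_N$), and the intensity formula \eqref{RW-mf-N} follows from the logarithmic-derivative/trace identity for $\det\mathbf M$, in the same way \eqref{RW-mf} was derived in the first-order case; the relation $(\Phi_p^{\dagger}\Phi_p)_x=-(n+1)\zeta\,\Phi_p^{\dagger}\sigma_3\Phi_p$ used in the proof of Theorem~\ref{dingli-RW}, with $\Phi_p$ as in \eqref{Phi-p}, again provides the link between the Darboux data and $\|\mathbf q^{[N]}\|_2^2$.

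\textbf{Main obstacle.} The heart of the argument is Step~3: a priori the $\epsilon$-derivatives of $\Phi_0(\epsilon)$ entangle all $n+1$ non-explicit branches $\mu_k(\epsilon)$, and one must show that the Fourier identity $\mu_k(\epsilon)=\mu_0(\varrho^k\epsilon)$ disentangles them into the clean blocks $\widehat\Phi_s$, that these vectors genuinely realize the confluent limit of the $N$-fold iterated Darboux matrix, and --- crucially --- that the construction stays nondegenerate, i.e.\ $\det\mathbf M\not\equiv0$ for generic $\bsm\gamma_k,\bsm\beta$. By contrast, Step~2 is routine once analyticity of $g_k$ at $z_0$ is in place ($\lambda_0\neq0$, distinct $b_i$), and Step~4 is bookkeeping with Cramer's rule and trace identities.
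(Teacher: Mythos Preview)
Your proposal is correct and follows essentially the same route as the paper: the single-pole confluent Darboux transformation of Theorem~\ref{thm2} on the plane-wave background, the substitution $\lambda=\lambda_0(1+\epsilon^{n+1})$, the Lagrange-inversion formula \eqref{DD} for the branch derivatives, the discrete-Fourier reorganization of the $\epsilon$-expansion, and the intensity identity via the Lax-pair matching. The only cosmetic differences are that the paper (i) derives \eqref{DD} by computing the first few derivatives of the implicit relation $\epsilon=(\mu_s-z_0)/g_s(\mu_s)$ by hand and then stating the pattern, rather than invoking Lagrange inversion by name, and (ii) packages your Step~3 as a right-multiplication $\widehat\Phi=\Phi_0\,\mathbf{R}\,\mathrm{diag}(1,\epsilon^{-1},\dots,\epsilon^{-n})$, using the orthogonality $\mathbf R_i^\dagger\mathbf R_j=0$ for $i\neq j$ to show directly that $\widehat\Phi(\epsilon)$ expands only in powers of $\epsilon^{n+1}$, instead of your ``span the same space'' argument; for the intensity \eqref{RW-mf-N} the paper carries out the $N$-distinct-pole computation and then lets $\epsilon_l\to0$, which is what you sketch in Step~4.
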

\begin{proof}
The form of the Eq. (\ref{high-RW}) is  trivially derived by the generalized Darboux transformation in Theorem \ref{thm2}. For more details, one can also refer to Refs. \cite{Ling2014, Zhao2016, LingZ-19}. To present the explicit formula, one needs to yield the explicit and non-trivial vector solution $\widetilde{\Phi}$, which has explicit coefficients $\widetilde{\Phi}_k$ of Taylor expansion at $\epsilon=0$. Note that the characteristic polynomial of the matrix $\mathbf{H}+\lambda\mathbb{I}_{n+1}$ with respect to $z$ is
\begin{gather}
\mathrm{det}\left[\left(z-\lambda\right)\mathbb{I}_{n+1}-\mathbf{H}\right]\big|_{\lambda=\lambda_0\left(1+\epsilon^{n+1}\right)}=\left(z-z_0\right)^{n+1}-2\lambda_0\epsilon^{n+1}\prod_{j=1}^n\left(z+b_j\right).
\end{gather}
It follows from $\mathrm{det}\left[\left(\mu_s-\lambda\right)\mathbb{I}_{n+1}-\mathbf{H}\right]=0$ that one can derive
\begin{gather}\label{D0}
\epsilon=\frac{\mu_s-z_0}{g_s\left(\mu_s\right)},\quad s=0, 1, \ldots, n,
\end{gather}
from which one obtains $\mu_s\left(0\right)=z_0$. Taking the first-order derivative with respect to $\epsilon$ in Eq. (\ref{D0}) yields  $\mathrm{d}\epsilon/\mathrm{d}\mu_s=\left(1-\epsilon g_s'\left(\mu_s\right)\right)/g_s\left(\mu_j\right)$. By the derivative rule of the inverse function, one derives
\begin{gather}\label{D1}
\frac{\mathrm{d}\mu_s(\epsilon)}{\mathrm{d}\epsilon}=\frac{g_s\left(\mu_s\right)}{1-\epsilon g_s'\left(\mu_s\right)},
\end{gather}
which implies $\mu_s'\left(0\right)=g_s\left(z_0\right)$, where the prime denote the derivative with respect to $\epsilon$.
Taking the first-order derivative of Eq.~(\ref{D1}) with respect to $\epsilon$ yields
%based on the rule
%\begin{gather}
%\mathrm{d}^2\mu_s/\mathrm{d}\epsilon^2=(\mathrm{d}\left(\mathrm{d}\mu_s/\mathrm{d}\epsilon\right)/\mathrm{d}\mu_s) (\mathrm{d}\mu_s/\mathrm{d}\epsilon),
%\end{gather}
%one has
\begin{gather}\label{D2}
\frac{\mathrm{d}^2\mu_s}{\mathrm{d}\epsilon^2}=\frac{2g_s\left(\mu_s\right)g_s'\left(\mu_s\right)
+\epsilon\left[g_s^2\left(\mu_s\right)g_s''\left(\mu_s\right)-2g_s\left(\mu_s\right)g_s'^2(\mu_s)\right]}{\left(1-\epsilon g_s'\left(\mu_s\right)\right)^3},
\end{gather}
from which one deduces $\mu_s''\left(0\right)=\left(g_s^2\left(z\right)\right)'\left|\right._{z=z_0}$. Taking the derivative with respect to $\epsilon$ in Eq. (\ref{D2}), one can obtains $\mu_s'''\left(0\right)=\left(g_s^3\left(z\right)\right)''\left|\right._{z=z_0}$.
Similarly, we will obtain
\begin{gather}\label{DD}
\mu_s\left(0\right)=z_0, \quad \mu_s^{\left(k\right)}\left(0\right)=\frac{{\rm d}^{k-1}g_s^k(z)}{{\rm d}z^{k-1}}\Big|_{z=z_0}, \,\,\,\, k\in\mathbb{N}^+,\,\,\, s=1,2,...,n.
\end{gather}

Given the eigenvalue $u_s-\lambda$ of the matrix $\mathbf{H}$, one can derive the corresponding eigenvector $\mathbf{K}_s$ as
\begin{gather}
\mathbf{K}_s=\left(1, \frac{a_1}{\mu_s+b_1}, \frac{a_2}{\mu_s+b_2}, \cdots, \frac{a_n}{\mu_s+b_n}\right)^\mathrm{T},\quad s=0,1,...,n.
\end{gather}
such that one has $\mathbf{H}=\mathbf{K}\Lambda\mathbf{K}^{-1}$, where $\mathbf{K}=\left(\mathbf{K}_0, \mathbf{K}_1, \cdots, \mathbf{K}_n\right)$ and $\Lambda=\mathrm{diag}\left(\mu_0-\lambda, \mu_1-\lambda, \cdots, \mu_n-\lambda\right)$.
%We obtain $\mathbf{H}=\mathbf{K}\Lambda\mathbf{K}^{-1}$.

 Let $\Psi=\mathbf{K}\widehat{\Psi}$ in system (\ref{constant}), then one  can find that $\widehat{\Psi}$ solves the system of the linear PDEs
\begin{equation*}
\widehat{\Psi}_x=\mathrm{i}\Lambda\widehat{\Psi},\quad \widehat{\Psi}_t=\mathrm{i}\left[\frac{1}{2}\Lambda^2+\lambda\Lambda-\left(\sum_{j=1}^na_j^2
+\frac{\lambda^2}{2}\right)\mathbb{I}_{n+1}\right]\widehat{\Psi},
\end{equation*}
which has a fundamental matrix solution
\begin{gather*}
\widehat{\Psi}=\exp\left\{\mathrm{i}\Lambda\left[x\mathbb{I}_{n+1}
+\left(\frac{1}{2}\Lambda+\lambda\mathbb{I}_{n+1}\right)t\right]-\mathrm{i}\left(\left\|{\bf a}\right\|_2^2+\frac{\lambda^2}{2}\right)t\right\}.
\end{gather*}
Then the Lax pair (\ref{laxs}, \ref{laxt}) has a fundamental matrix solution $\Phi_0$ under the plane-wave potential $\mathbf{Q}=\mathbf{Q}^\mathrm{bg}$.
Note that the coefficient of each term of the Taylor series
\begin{gather*}
\Phi_0\left(\epsilon\right)=\Phi_0\left(0\right)+\Phi_0'\left(0\right)\epsilon+\frac{\Phi_0''\left(0\right)}{2}\epsilon^2+\cdots
\end{gather*}
can be explicitly derived via Eq.~(\ref{DD}). Based on the fact $g_s'(z)=(n+1)^{-1}\sum_{i=1}^n\left(z+b_i\right)^{-1}g_s(z)$, we have
${\rm rank}(\Phi_0^{\left(i\right)}(0))=1$, and each row of $\Phi_0^{\left(i\right)}(0)$ is linearly dependent with $\left(g_0^i(z_0),g_1^i(z_0), \cdots, g_{n}^i(z_0)\right), i\in\mathbb{N}$. Besides, by the fact $g_s^{n+1}(z_0)=2\lambda_0\prod_{i=1}^n\left(z_0+b_i\right)$, then one find that $\left(g_0^i(z_0), g_1^i(z_0), \cdots, g_{n}^i(z_0)\right)$ is linearly dependent with $\big(g_0^{n+1+i}(z_0), g_1^{n+1+i}(z_0), \cdots, g_{n}^{n+1+i}(z_0)\big)$.

To obtain the general non-trivial vector RW solutions of the $n$-NLS system (\ref{n-NLS}) by the generalized Darboux transformation, one needs to modify the fundamental matrix solution.  Note that $\mathbf{R}_i^\dagger\mathbf{R}_j=0$ for $i\ne j$. Let us define the matrix $\mathbf{R}=\left(\mathbf{R}_0, \mathbf{R}_1, \cdots, \mathbf{R}_n\right)$ and matrix $\mathbf{E}=\mathrm{diag}\left(1, \epsilon^{-1}, \cdots, \epsilon^{-n}\right)$. Then we construct another fundamental matrix solution of  Eqs. \eqref{laxs} and \eqref{laxt} $\widehat{\Phi}$ as
\begin{gather}\label{modify}
\widehat{\Phi}=\Phi_0\mathbf{RE},
\end{gather}
which has the following explicit Taylor expansion
\begin{gather}
\widehat{\Phi}\left(\epsilon\right)=\sum_{s=0}^{N-1}\widehat{\Phi}_s\epsilon^{s\left(n+1\right)}+\mathcal{O}\left(\epsilon^{N\left(n+1\right)}\right),
\end{gather}
Let $\bsm{\gamma}$ be an $(n+1)$-dimensional column vector in the form
\begin{gather}
\bsm{\gamma}=\left(\sum_{s=0}^{N-1}\bsm{\gamma}_s\epsilon^{s\left(n+1\right)}\right)\bsm{\beta}.
\end{gather}
Then $\widetilde{\Phi}:=\widehat{\Phi}\bsm{\gamma}$ is a fundamental vector solution of the system (\ref{initial}), which has the following explicit Taylor expansion
\begin{gather}\label{modify-1}
\widetilde{\Phi}\left(\epsilon\right)=\sum_{k=0}^{N-1}\widetilde{\Phi}_k\epsilon^{k\left(n+1\right)}.
\end{gather}

Below, we give the proof of the intensity expression \eqref{RW-mf-N}.
As we all known that the generalized Darboux transformation is the limit of the classical Darboux transformation.  In the process of the proof, we consider the classical Darboux transformation firstly and then we perform the technique of the limits. Given $N$ mutually different $\epsilon_1, \epsilon_2, \cdots, \epsilon_N$ and the corresponding $\lambda_1=\left(1+\epsilon_1^{n+1}\right),  \lambda_2=\left(1+\epsilon_2^{n+1}\right), \cdots, \lambda_N=\left(1+\epsilon_N^{n+1}\right)$. Let $\widehat{\mathbf{Y}}_j$ be the $j$-th row of the matrix $\widehat{\mathbf{Y}}:=\left(\widetilde{\Phi}\left(\epsilon_1\right), \widetilde{\Phi}\left(\epsilon_2\right), \cdots, \widetilde{\Phi}\left(\epsilon_N\right)\right), j=1, 2, \cdots, n+1$ and  the matrix $\widehat{\mathbf{M}}=\left(\widehat{M}_{m, j}\right)_{N, N}$ with each entry $\widehat{M}_{m, j}=\widetilde{\Phi}\left(\epsilon_m\right)^\dag\widetilde{\Phi}\left(\epsilon_j\right)/\left(\lambda_j-\lambda_m^*\right)$. We perform the classical $N$-fold Darboux matrix
\begin{gather*}
\mathbf{T}^{[N]}=\mathbb{I}_{n+1}+\sum_{j=1}^N\frac{\mathbf{C}_j}{\lambda-\lambda_j^*},
\end{gather*}
which converts the system (\ref{initial}) to the invariant form
\begin{gather*}
\left\{
\begin{aligned}
\Phi[N]_x&={\mathbf U}^{[N]}\Phi^{[N]},\quad \mathbf{U}^{[N]}(\lambda; x, t):=\mathrm{i}\left(\lambda\sigma_3+\mathbf{Q}^{[N]}\right) , \\
\Phi[N]_t&={\mathbf V}^{[N]}\Phi^{[N]},\quad \mathbf{V}^{[N]}(x, t; \lambda):=\lambda\mathbf{U}^{[N]}+\frac{1}{2}\sigma_3\left(\mathbf{Q}^{[N]}_x-\mathrm{i}\left(\mathbf{Q}^{[N]}\right)^2\right),
\end{aligned}\right.
\end{gather*}
with
\begin{equation*}
\mathbf{Q}^{[N]}=\begin{pmatrix}
0& \left(\mathbf{p}^{[N]}\right)^\dag  \\
\mathbf{p}^{[N]} & \mathbf{0}
\end{pmatrix}, \quad \mathbf{p}^{[N]}=\left(p_1^{[N]}, p_2^{[N]}, \cdots, p_n^{[N]}\right)^\mathrm{T},
\end{equation*}
By $\mathbf{T}^{[N]}_x+\mathbf{T}^{[N]}\mathbf{U}^{\mathrm{bg}}\left(\lambda\right)=\mathbf{U}^{[N]}\mathbf{T}^{[N]}$ and matching the term $\mathcal{O}\left(\lambda^{-1}\right)$, one yields
\begin{gather*}
\left(\sum_{j=1}^N\mathbf{C}_j\right)_x=\mathrm{i}\left(\mathbf{Q}^{[N]}\sum_{j=1}^N\mathbf{C}_j-\sum_{j=1}^N\mathbf{C}_j\mathbf{Q}^\mathrm{bg}\right).
\end{gather*}
By $\mathbf{T}^{[N]}_t+\mathbf{T}^{[N]}\mathbf{V}^{\mathrm{bg}}\left(\lambda\right)=\mathbf{V}^{[N]}\mathbf{T}^{[N]}$ and matching the term $\mathcal{O}\left(1\right)$, one yields
\begin{gather*}
\left(\mathbf{Q}^{[N]}\right)^2=\mathbf{Q}^\mathrm{bg}+\mathrm{i}\left(\mathbf{Q}^\mathrm{bg}-\mathbf{Q}^{[N]}\right)_x+2\sigma_3\left(\mathbf{Q}^{[N]}\sum_{j=1}^N\mathbf{C}_j-\sum_{j=1}^N\mathbf{C}_j\mathbf{Q}^\mathrm{bg}\right).
\end{gather*}
Combining the above two equations, one deduces
\begin{gather*}
\left(\mathbf{Q}^{[N]}\right)^2=\mathbf{Q}^\mathrm{bg}+\mathrm{i}\left(\mathbf{Q}^\mathrm{bg}-\mathbf{Q}^{[N]}\right)_x-2\mathrm{i}\sigma_3\left(\sum_{j=1}^N\mathbf{C}_j\right)_x.
\end{gather*}
By extracting the diagonal entries, we derives
\begin{gather*}
\|\mathbf{p}^{[N]}\|_2^2=\|\mathbf{a}\|_2^2-2\mathrm{i}\frac{\partial}{\partial x}\left(\sum_{j=1}^N\mathbf{C}_j\right)_{1, 1}, \quad \left|p_j^{[N]}\right|^2=\left|a_j\right|^2+2\mathrm{i}\frac{\partial}{\partial x}\left(\sum_{j=1}^N\mathbf{C}_j\right)_{j+1, j+1}.
\end{gather*}
Note that
\begin{gather*}
\sum_{j=1}^N\mathbf{C}_j=-\widehat{\mathbf{Y}}\widehat{\mathbf{M}}^{-1}\widehat{\mathbf{Y}}^\dag, \quad \left(\frac{\widetilde{\Phi}\left(\epsilon_m\right)^\dag\widetilde{\Phi}\left(\epsilon_j\right)}{k_j-k_m^*}\right)_x=\mathrm{i}\left(\widetilde{\Phi}\left(\epsilon_m\right)^\dag\sigma_3\widetilde{\Phi}\left(\epsilon_j\right)\right).
\end{gather*}
Then one obtains
\begin{gather*}
\|\mathbf{p}^{[N]}\|_2^2=\left\|\mathbf{a}\right\|_2^2+\frac{\partial^2}{\partial x^2} \ln (\mathrm{det}\,\widehat{\mathbf{M}}),
\end{gather*}
Taking the limits $\epsilon_l \to 0\, (l=1, 2, \cdots, N$), we have the $q_j^{[N]}=\lim_{\epsilon_l \to 0}p_j^{[N]}$. Thus we complete the proof.
\end{proof}

From the intensity of the vector rogue wave $\mathbf{q}^{[N]}$ in the above proof, the high-order rogue wave solutions also have the following  conservation laws:
\begin{gather}
\int_{-\infty}^{+\infty}\left(\|\mathbf{q}^{[N]}\|_2^2-\left\|\mathbf{a}\right\|_2^2\right)\mathrm{d}x=0.
\end{gather}

\subsection{The governing polynomials and dynamical behaviors of higher-order RWs}

The approach of governing polynomial can also be used to study the dynamical behaviors of the above-mentioned high-order vector rational rogue wave solutions. Without loss of generality, we take $\zeta=1$ for convenience. From the above examples of the first-order vector rational RW solutions in Sec. 3, one can notice that the governing polynomials play a key and important role, which control the central locations of RWs in the leading terms of asymptotic expressions. Besides, the governing polynomial of the first-order rogue wave solution is free and one can adjust the coefficients  based on our purpose. However, one can find that the governing polynomials of high-order vector RW solutions are not completely free and some coefficients are restricted. \\

%\begin{example}
%\rm{

{\it Example 4.2.1.\, } For the second-order vector RW solutions of the single NLS equation, from Theorem. \ref{Nth-RW}, the free parameters are $\bsm{\gamma}_0\bsm{\beta}$ and $\bsm{\gamma}_1\bsm{\beta}$, denoted by
\begin{gather}
\bsm{\gamma}_0\bsm{\beta}=\left(u_1, \mathrm{i}\right)^\mathrm{T}, \quad \bsm{\gamma}_1\bsm{\beta}=\left(u_2, \mathrm{i}u_3\right)^\mathrm{T}, \quad u_1, u_2, u_3\in\mathbb{R}.
\end{gather}
To perform the method of governing polynomial, we take
\begin{gather}
u_1=-\frac{2}{3}\kappa_2h+\kappa_{2, 0}, \quad u_2=\frac{4}{3}\kappa_0h^3+4\sum_{s=0}^2\kappa_{0, s}h^{2-s}, \quad u_3=0.
\end{gather}
Since $u_2$ and $u_3$ have the same order in $\mathrm{det}\,\mathbf{M}$, then we can take $u_3=0$. Under the transformation $x=\widehat{x}+\Re\left(z\right)h, t=\widehat{t}+\Im\left(z\right)h$, we have the asymptotic expansion
\begin{gather}
-\frac{9\,\mathrm{det}\,\mathbf{M}}{256h^6}=\left|\mathcal{F}_3(z)\right|^2+\mathcal{O}\left(\frac{1}{h}\right), \quad h\to +\infty,
\end{gather}
where
\begin{gather}
\mathcal{F}_3(z)=z^3+\kappa_2z^2+\kappa_1z+\kappa_0, \quad \kappa_1=\frac{\kappa_2^2}{3}.
\end{gather}

One can find that $\mathcal{F}_3(z)$ has no three distinct and simple real roots or three distinct and simple roots with same real part, thus there do not exist the line-typed structures along the $x$-axis or $t$-axis. As $\kappa_0=1, \kappa_2=0$, $\mathcal{F}_3$ has three simple roots $\mathrm{e}^{(2s+1)\pi/3}, s=0, 1, 2$, in which we have the asymptotic behavior
\begin{gather}
q_1^{[2]}=q_1^\mathrm{bg}\left(1+\sum_{s=0}^2R_1\left(x-x_s, t-t_s\right)\right)+\mathcal{O}\left(\frac{1}{h^2}\right), \quad h\to +\infty,
\end{gather}
where $x_s=(h+\kappa_{0, 0})\cos[(2s+1)\pi/3]+(\kappa_{2, 0}-1)/2, \, t_s=(h+\kappa_{0, 0})\sin[(2s+1)\pi/3]$.  In particular, as $\kappa_{2, 0}=1, \kappa_{0, 0}=0$, three fundamental RWs located at $(h\cos[(2s+1)\pi/3], h\sin[(2s+1)\pi/3])$

%\begin{example}
%\rm{

\vspace{0.1in}
{\it Example 4.2.2.} For the $2$-NLS equation with $N=2$, according to the degree of the denominator of $q_j^{[2]}$, we have the following two cases.

\begin{itemize}

\item{} For the first case,  the vanishing  third entry of  $\bsm{\gamma}_0\bsm{\beta}$ is considered and $\bsm{\gamma}_0\bsm{\beta}, \bsm{\gamma}_1\bsm{\beta}$ are denoted by
\begin{gather}
\bsm{\gamma}_0\bsm{\beta}=\left(\sqrt[3]{2}u_1,\, \mathrm{i},\, 0\right)^\mathrm{T}, \quad
\bsm{\gamma}_1\bsm{\beta}=\left(\sqrt[3]{2}u_2, \,\mathrm{i}2^{2/3}u_3,\, u_4/(2\sqrt[3]{2})\right)^\mathrm{T}, \quad u_1, u_2, u_3,u_4\in\mathbb{R}.
\end{gather}
To derive the governing polynomial, we take
\begin{gather}
u_1=-\kappa_3h+\kappa_{3, 0}, \, u_2=-\kappa_0h^4+24\sum_{s=0}^3\kappa_{0, s}h^{3-s}, \, u_3=0, \, u_4=\left(\kappa_3^2-\kappa_2\right)h^2+24\left(\kappa_{2, 0}h+\kappa_{2, 1}\right).
\end{gather}
Since $u_3$ and $u_4$ have the same order in $\mathrm{det}\,\mathbf{M}$, then we can take $u_3=0$. Under the transformation $x=\widehat{x}+\Re\left(z\right)h, \, t=\widehat{t}+\Im\left(z\right)h$, we have the asymptotic expansion
\begin{gather}
-\frac{\mathrm{e}^{-2\left(\widehat{x}+\Re\left(z\right)h\right)}\mathrm{det}\,\mathbf{M}}{162\sqrt[3]{2}h^8}
=\left|\mathcal{F}_4(z)\right|^2+\mathcal{O}\left(\frac{1}{h}\right), \quad h\to +\infty,
\end{gather}
where
\begin{gather}
\mathcal{F}_4(z)=z^4+2\kappa_3z^3+\kappa_2z^2+\kappa_1z+\kappa_0, \quad \kappa_1=\kappa_3\left(\kappa_2-\kappa_3\right).
\end{gather}

One can study the asymptotic behavior of the RWs based on the roots of the governing polynomial $\mathcal{F}_4$. Here, we just give   a kind of straightline-shaped wave structure. As $\kappa_0=9,\, \kappa_2=-10,\, \kappa_3=0$, the governing polynomial $\mathcal{F}_4$ has four simple roots $z=\pm 1, \pm 3$. At last, we obtain the asymptotic formula
\begin{gather}
q_j^{[2]}=q_j^\mathrm{bg}\left[1+\sum_{\delta=0}^1R_j\left(x-x_\delta, t\right)+R_j\left(x-\widetilde{x}_\delta, t\right)\right]+\mathcal{O}\left(\frac{1}{h^2}\right), \quad h\to +\infty,
\end{gather}
where $x_\delta=(-1)^\delta[h-3(\kappa_{0, 0}+\kappa_{2, 0})/2]-5/4,\, \widetilde{t}_\delta=(-1)^\delta[3h+(\kappa_{0, 0}+9\kappa_{2, 0})/2]+5/4$.

\item{} For the second case, the non-vanishing  third entry of  $\bsm{\gamma}_0\bsm{\beta}$ is considered and $\bsm{\gamma}_0\bsm{\beta}, \bsm{\gamma}_1\bsm{\beta}$ are denoted by
\begin{gather}
\bsm{\gamma}_0\bsm{\beta}=\left(2^{5/3}u_1, \mathrm{i}u_22^{4/3}, 1\right)^\mathrm{T}, \quad \bsm{\gamma}_1\bsm{\beta}=\left(2^{2/3}u_3/5, \mathrm{i}2^{1/3}u_4/5, u_5\right)^\mathrm{T}, \quad u_1, u_2, u_3,u_4, u_5\in\mathbb{R}.
\end{gather}
To derive the governing polynomial, we take
\begin{gather}\label{2nls-xishud}
u_1=\left(2\kappa_5^2-\kappa_4\right)h^2, \quad u_2=\kappa_5h, \quad u_3=-\kappa_1h^5, \quad u_4=\left[\kappa_2-10\left(\kappa_4^2+4\kappa_5^4-4\kappa_4\kappa_5^2\right)\right]h^4, \quad u_5=0.
\end{gather}
Since $u_4$ and $u_5$ have the same orders in $\mathrm{det}\,\mathbf{M}$, then we can take $u_5=0$. Under the transformation $x=\widehat{x}+\Re\left(z\right)h,\, t=\widehat{t}+\Im\left(z\right)h$, we have the asymptotic expansion
\begin{gather}
-\frac{25\mathrm{e}^{-2\left(\widehat{x}+\Re\left(z\right)h\right)}\mathrm{det}\,\mathbf{M}}{2^{2/3}1296h^8}
=\left|\mathcal{F}_6(z)\right|^2+\mathcal{O}\left(\frac{1}{h}\right), \quad h\to +\infty,
\end{gather}
where
\begin{gather}
\mathcal{F}_6(z)=z^6+6\kappa_5z^5+5\kappa_4z^4+\kappa_3z^3+\kappa_2z^2+\kappa_1z+\kappa_0,
\end{gather}
with \begin{gather*}
\kappa_3=20\kappa_5\left(\kappa_4-\kappa_5^2\right), \quad
\kappa_0=-80\kappa_5^6+120\kappa_4\kappa_5^4+\left(2\kappa_2-60\kappa_4^2\right)\kappa_5^2+\kappa_1\kappa_5+10\kappa_4^3-\kappa_2\kappa_4.
\end{gather*}
One can  add the low-order terms in Eq. \eqref{2nls-xishud} for the study the asymptotic formula in a standard way  based on the roots of the governing polynomial $\mathcal{F}_6$. Here we omit them.

\end{itemize}

\subsection{High-order vector RW solutions with the maximal amplitude}

Similarly, in this subsection,  we consider  two cases of the maximal amplitudes of the obtained higher-order vector rational RWs:
\begin{gather}
\mathcal{GA}[N]:=\max_{\bsm{\gamma},x, t}\frac{\|\mathbf{q}^{[N]}\|_1}{\left\|\mathbf{a}\right\|_1}, \quad
\mathcal{A}_j[N]:=\max_{\bsm{\gamma}, x, t}\left|q_j^{[N]}/a_j\right|.
\end{gather}
In the following proposition, one can verify that the $N$-th-order vector rogue wave solution (\ref{high-RW}) attains the maximal amplitude at the origin $\left(0, 0\right)$  for some chosen $\bsm{\gamma}$.

\begin{prop}
As  $\bsm{\gamma}_0=\widehat{\Phi}_0\left(0, 0\right)^{-1}, \, \bsm{\gamma}_m=-\widehat{\Phi}_0\left(0, 0\right)^{-1}\sum_{s=0}^{m-1}\widehat{\Phi}_{m-s}\left(0, 0\right)\bsm{\gamma}_s \,(1\le m \le N-1)$,
\begin{itemize}
\item $\|{\bf q}^{[N]}\|_1/\|{\bf a}\|_1$ attains the maxima at   $\bsm{\beta}=\bsm{\eta}$ and $\left(x, t\right)=\left(0, 0\right)$, that is
\begin{gather}\label{N-GAA}
\mathcal{GA}[N]=1+\frac{N\left(n+1\right) \sqrt{n}}{\sum_{j=1}^n \csc{\omega_j}};
\end{gather}
\item $\left|q_j^{[N]}/a_j\right|$ attains the maxima  at  $\bsm{\beta}=\bsm{\eta}_j$ and $\left(x, t\right)=\left(0, 0\right)$, that is, $\mathcal{A}_j[N]=1+N\left(n+1\right) \sin\omega_j$.
\end{itemize}
\begin{proof}
For two cases of the amplitudes, we just verify the case of $\mathcal{GA}[N]$, and the case of $\mathcal{A}_j[N]$ can be shown in a similar way.
In the process of the proof, we firstly consider the classical Darboux transformation, and then we perform the technique of the limits. Given $N$ mutually different $\epsilon_1, \epsilon_2, \cdots, \epsilon_N$ and the corresponding $\lambda_1=\left(1+\epsilon_1^{n+1}\right),  \lambda_2=\left(1+\epsilon_2^{n+1}\right), \cdots, \lambda_N=\left(1+\epsilon_N^{n+1}\right)$, one obtains $\widetilde{\Phi}\left(0, 0; \epsilon_s\right)=\bsm{\eta}\, (s=1, 2, \cdots, N)$. Denote the potentials derived by the classical Darboux transformation by $\mathbf{p}^{[1]}, \mathbf{p}^{[2]}, \cdots, \mathbf{p}^{[N]}$. By the transformation between the potential $\mathbf{p}^{[1]}$ and $\mathbf{q}^\mathrm{bg}$,
\begin{gather*}
p_j^{[1]}=q_j^\mathrm{bg}-\frac{4\mathrm{i}\Im\left(\lambda_1\right)\left(\widetilde{\Phi}\left(\epsilon_1\right)\widetilde{\Phi}
\left(\epsilon_1\right)^\dag\right)_{j+1, 1}}{\widetilde{\Phi}\left(\epsilon_1\right)^\dag\widetilde{\Phi}\left(\epsilon_1\right)},
\end{gather*}
one deduces $\|\mathbf{p}^{[1]}\|_1$ attains the maximum at $\left(0, 0\right)$ and $\|\mathbf{p}^{[1]}\left(0, 0\right)\|_1=\left|\sum_{j=1}^na_j+2\Im\left(\lambda_1\right)\sqrt{n}\right|$. By the one-fold Darboux transformation
\begin{gather*}
\mathbf{T}_1\left(\epsilon\right)=\mathbb{I}_{n+1}+\frac{\lambda_1^*-\lambda_1}{\lambda-\lambda_1^*}\frac{\widetilde{\Phi}\left(\epsilon_1\right)\widetilde{\Phi}\left(\epsilon_1\right)^\dag}{\widetilde{\Phi}\left(\epsilon_1\right)^\dag\widetilde{\Phi}\left(\epsilon_1\right)},
\end{gather*}
one obtains the wave function of the first iteration $\widetilde{\Phi}[1]$ at $\left(0, 0; \epsilon_2\right)$ is $\widetilde{\Phi}[1]\left(0, 0; \epsilon_2\right)=\left(1+\frac{\lambda_1^*-\lambda_1}{\lambda_2-\lambda_1^*}\right)\bsm{\eta}$. By the transformation between the potential $\mathbf{p}^{[2]}$ and $\mathbf{p}^{[1]}$,
\begin{gather*}
p_j^{[2]}=p_j^{[1]}-\frac{4\mathrm{i}\Im\left(\lambda_2\right)\left(\widetilde{\Phi}[1]\left(\epsilon_2\right)\widetilde{\Phi}[1]\left(\epsilon_2\right)^\dag\right)_{j+1, 1}}{\widetilde{\Phi}[1]\left(\epsilon_2\right)^\dag\widetilde{\Phi}[1]\left(\epsilon_2\right)},
\end{gather*}
one deduces $\|\mathbf{p}^{[2]}\|_1$ attains the maximum at $\left(0, 0\right)$ and $\|\mathbf{p}^{[2]}\left(0, 0\right)\|_1=\left|\sum_{j=1}^na_j+2\Im\left(\lambda_1+\lambda_2\right)\sqrt{n}\right|$. Continuing  the process, one can find the wave function at each iteration $\widetilde{\Phi}[s]$ at $\left(0, 0; \epsilon_{s+1}\right), s=1, 2, \cdots, N-1$  is always linearly dependent with $\bsm{\eta}$. Therefore,  the next iterated potential $\|\mathbf{p}^{[s+1]}\|_1$ attains the maximum at $\left(0, 0\right)$ and $\|\mathbf{p}^{[s+1]}\left(0, 0\right)\|_1=\left|\sum_{j=1}^na_j+2\left(\sum_{l=1}^{s+1}\Im\left(\lambda_l\right)\right)\sqrt{n}\right|$. After $N$ iterations, $\|\mathbf{p}^{[N]}\|_1$ attains the maximum at $\left(0, 0\right)$ and
\begin{gather*}
\|\mathbf{p}^{[N]}\left(0, 0\right)\|_1=\left|\sum_{j=1}^na_j+2\left(\sum_{l=1}^{N}\Im\left(\lambda_l\right)\right)\sqrt{n}\right|.
\end{gather*}
As $\epsilon_l\to 0$, one has $\lambda_l \to\lambda_0$. Let $\epsilon_l \to 0, l=1, 2, \cdots, N$, we yield
\begin{gather*}
\|\mathbf{q}^{[N]}\left(0, 0\right)\|_1=\left|\sum_{j=1}^na_j+2N\Im\left(\lambda_0\right)\sqrt{n}\right|,
\end{gather*}
which derives Eq. (\ref{N-GAA}). Then we complete the proof.
\end{proof}
\end{prop}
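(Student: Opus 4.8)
The plan is to realize $\mathbf{q}^{[N]}$ as the confluent limit ($\epsilon_1,\dots,\epsilon_N\to0$) of an $N$-fold \emph{classical} Darboux transformation with distinct simple poles at $\lambda_l=\lambda_0(1+\epsilon_l^{n+1})$, all in the upper half plane (possible since $\Im(\lambda_0)=(n+1)\zeta/2>0$), and to run a one-step estimate along the chain. If $\mathbf{p}^{[s]}$ denotes the potential after $s$ elementary steps and $\widetilde{\Phi}[s]$ its wavefunction at $\lambda_{s+1}$, then Lemma~\ref{thm1} gives $p_j^{[s+1]}=p_j^{[s]}-4\mathrm{i}\,\Im(\lambda_{s+1})\,(\widetilde{\Phi}[s](\widetilde{\Phi}[s])^\dagger)_{j+1,1}/((\widetilde{\Phi}[s])^\dagger\widetilde{\Phi}[s])$. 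Writing $u=|(\widetilde{\Phi}[s])_1|^2$, $v=\sum_{j}|(\widetilde{\Phi}[s])_{j+1}|^2$, Cauchy--Schwarz yields $\sum_{j=1}^{n}|(\widetilde{\Phi}[s])_{j+1}||(\widetilde{\Phi}[s])_1|\le\sqrt{n}\,\sqrt{uv}\le\frac{\sqrt n}{2}(u+v)=\frac{\sqrt n}{2}\|\widetilde{\Phi}[s]\|_2^2$, and for a single $j$, $2|(\widetilde{\Phi}[s])_{j+1}||(\widetilde{\Phi}[s])_1|\le u+v=\|\widetilde{\Phi}[s]\|_2^2$. Hence, for \emph{any} choice of wavefunctions, $\|\mathbf{p}^{[s+1]}\|_1\le\|\mathbf{p}^{[s]}\|_1+2\sqrt n\,\Im(\lambda_{s+1})$ and $|p_j^{[s+1]}|\le|p_j^{[s]}|+2\Im(\lambda_{s+1})$; iterating from $\|\mathbf{p}^{[0]}\|_1=\|\mathbf{q}^{\mathrm{bg}}\|_1=\|\mathbf{a}\|_1$, $|p_j^{[0]}|=a_j$ and letting $\epsilon_l\to0$ (so $\lambda_l\to\lambda_0$, $\mathbf{p}^{[N]}\to\mathbf{q}^{[N]}$) gives, for every admissible $\bsm{\gamma}$ and every $(x,t)$, the bounds $\|\mathbf{q}^{[N]}\|_1\le\|\mathbf{a}\|_1+2N\sqrt n\,\Im(\lambda_0)$ and $|q_j^{[N]}|\le a_j+2N\,\Im(\lambda_0)$.

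Next I would show these bounds are attained by the prescribed $\bsm{\gamma}_m$'s at the origin. The recursion $\bsm{\gamma}_0=\widehat{\Phi}_0(0,0)^{-1}$, $\bsm{\gamma}_m=-\widehat{\Phi}_0(0,0)^{-1}\sum_{s=0}^{m-1}\widehat{\Phi}_{m-s}(0,0)\bsm{\gamma}_s$ is exactly designed so that $\sum_{s=0}^{k}\widehat{\Phi}_{s}(0,0)\bsm{\gamma}_{k-s}$ equals $\mathbb{I}_{n+1}$ for $k=0$ and vanishes for $1\le k\le N-1$; hence $\widetilde{\Phi}_0(0,0)=\bsm{\beta}$, $\widetilde{\Phi}_k(0,0)=\mathbf{0}$ for $k\ge1$, so $\widetilde{\Phi}(0,0;\epsilon)=\bsm{\beta}+\mathcal{O}(\epsilon^{N(n+1)})\to\bsm{\beta}$ (this uses $\det\widehat{\Phi}_0(0,0)\ne0$). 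Taking $\bsm{\beta}=\bsm{\eta}$ (resp.\ $\bsm{\beta}=\bsm{\eta}_j$), I would prove by induction on $s$ that every elementary projection at the origin is $\mathbf{P}_{s+1}(0,0)=\bsm{\eta}\bsm{\eta}^\dagger/(\bsm{\eta}^\dagger\bsm{\eta})$ (resp.\ with $\bsm{\eta}_j$): each $\mathbf{T}_k(\lambda;0,0)=\mathbb{I}+\frac{\lambda_k^*-\lambda_k}{\lambda-\lambda_k^*}\bsm{\eta}\bsm{\eta}^\dagger/(\bsm{\eta}^\dagger\bsm{\eta})$ fixes the line $\mathbb{C}\bsm{\eta}$, so $\widetilde{\Phi}[s](0,0;\epsilon_{s+1})$ stays a scalar multiple of $\bsm{\eta}$. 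Then, using $(\bsm{\eta}\bsm{\eta}^\dagger)_{j+1,1}=\mathrm{i}\sqrt n$ and $\|\bsm{\eta}\|_2^2=2n$ (resp.\ $(\bsm{\eta}_j\bsm{\eta}_j^\dagger)_{j+1,1}=\mathrm{i}$, $\|\bsm{\eta}_j\|_2^2=2$), the elementary update adds the \emph{positive real} quantity $2\Im(\lambda_{s+1})/\sqrt n$ to $p_j^{[s]}(0,0)$ (resp.\ $2\Im(\lambda_{s+1})$ to the $j$-th component); since $p_j^{[0]}(0,0)=a_j>0$, all $p_j^{[s]}(0,0)$ remain positive reals, every triangle inequality above becomes an equality, and $\|\mathbf{p}^{[N]}(0,0)\|_1=\|\mathbf{a}\|_1+2\sqrt n\sum_{l=1}^N\Im(\lambda_l)$ exactly. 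Letting $\epsilon_l\to0$ and inserting $\Im(\lambda_0)=(n+1)\zeta/2$, $\|\mathbf{a}\|_1=\zeta\sum_j\csc\omega_j$, $a_j=\zeta\csc\omega_j$ reproduces $\mathcal{GA}[N]=1+N(n+1)\sqrt n/\sum_j\csc\omega_j$ and $\mathcal{A}_j[N]=1+N(n+1)\sin\omega_j$, matching the upper bounds.

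I expect the main obstacle to be the interplay between the uniform estimate and the confluent limit: the clean per-step bound is only available for the genuine distinct-pole $N$-fold Darboux transformation, while the solutions of Theorem~\ref{Nth-RW} are the $\epsilon_l\to0$ coalescences, so one must check (i) the bound survives the limit --- automatic, since $\mathbf{p}^{[N]}\to\mathbf{q}^{[N]}$ pointwise with $\Im(\lambda_l)\to\Im(\lambda_0)$ --- and (ii) that the extremal chain built from distinct $\epsilon_l$ with the special $\bsm{\gamma}_m$'s degenerates to exactly the solution of Theorem~\ref{Nth-RW} with those same $\bsm{\gamma}_m$'s, which is the standard ``generalized DT $=$ limit of classical DT'' fact (cf.\ Refs.~\cite{Ling2014,Zhao2016,LingZ-19} and the proof of Theorem~\ref{Nth-RW}); the only extra care is that the equalities must hold \emph{simultaneously along the whole chain at $(0,0)$}, which is precisely what the invariance of $\mathbb{C}\bsm{\eta}$ under every $\mathbf{T}_k(\cdot;0,0)$ guarantees. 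A minor preliminary point is the invertibility of $\widehat{\Phi}_0(0,0)$, needed for $\bsm{\gamma}_m$ to be well defined; this follows from $\widehat{\Phi}$ being a fundamental matrix solution of \eqref{laxs}--\eqref{laxt}.
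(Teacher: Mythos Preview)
Your proposal follows essentially the same route as the paper: realize $\mathbf{q}^{[N]}$ as the confluent limit of an $N$-fold classical Darboux chain with distinct poles $\lambda_l$, observe that with the prescribed $\bsm{\gamma}_m$ one has $\widetilde{\Phi}(0,0;\epsilon_s)=\bsm{\beta}$, use that each elementary $\mathbf{T}_k(\cdot;0,0)$ preserves the line $\mathbb{C}\bsm{\eta}$ so that every iterated wavefunction at the origin stays a scalar multiple of $\bsm{\eta}$, accumulate the increments $2\Im(\lambda_l)\sqrt{n}$ (resp.\ $2\Im(\lambda_l)$), and pass to the limit $\epsilon_l\to0$.

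The one place you go beyond the paper is the per-step upper bound $\|\mathbf{p}^{[s+1]}\|_1\le\|\mathbf{p}^{[s]}\|_1+2\sqrt{n}\,\Im(\lambda_{s+1})$ via Cauchy--Schwarz. The paper simply asserts at each stage that ``$\|\mathbf{p}^{[s]}\|_1$ attains the maximum at $(0,0)$'', implicitly leaning on the first-order analysis (Proposition~4); strictly speaking that proposition applies to a Darboux step starting from the plane wave, not from $\mathbf{p}^{[s-1]}$, so your pointwise bound valid for \emph{any} seed makes the inductive step cleaner. Conversely, the paper is more terse and does not pause over the limit interchange, whereas you flag it explicitly. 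Neither difference changes the substance: the key mechanism---invariance of $\mathbb{C}\bsm{\eta}$ under every $\mathbf{T}_k(\cdot;0,0)$ forcing all the triangle inequalities to saturate simultaneously at the origin---is identical in both arguments.
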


\begin{figure}[!t]
\centering
\includegraphics[scale=0.4]{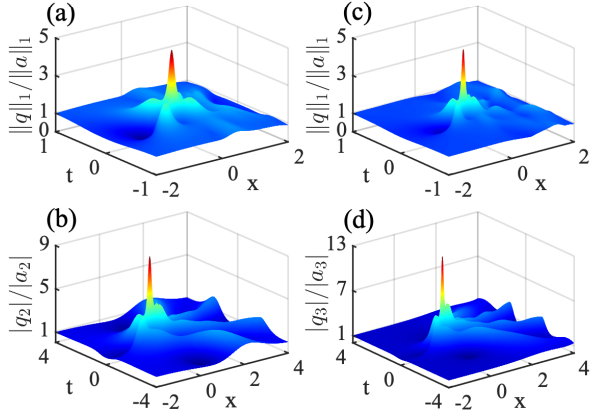}
\caption{The second-order rogue wave solution with $\mathcal{GA}[2]$ for (a) $3$-NLS  equation and (c) $5$-NLS  equation. The second-order rogue wave solution with $\mathcal{A}_{\frac{n+1}{2}}[2]$ for (b) $3$-NLS  equation and (d) $5$-NLS  equation.}
\label{GAAhigh}
\end{figure}

It follows from the above proof that $N$-th-order vector rogue wave $\mathbf{q}^{[N]}$ with $\bsm{\beta}=\bsm{\eta}$ is time-reversal. We display the dynamical structures of the second-order rogue waves with $\mathcal{GA}[2]$ for $3$-NLS equation (Fig. \ref{GAAhigh}(a)) and $5$-NLS equation (Fig. \ref{GAAhigh}(c)).

 For the rogue wave with $j$-th component of the maximal amplitude, $\bsm{\eta}_j$ satisfies $\bsm{\eta}_j=\mathbf{J}\bsm{\eta}_j$ iff $n$ is odd and $j=\frac{n+1}{2}$. Then as $n$ is odd, the rogue wave $\mathbf{q}^{[1]}$ with $\bsm{\beta}=\bsm{\eta}_{j_c}$ is time-reversal and its $j_c$-th component attains maximum $2n+3$, where $j_c=\left(n+1\right)/2$. We display the dynamical structures of the second-order rogue waves with $\mathcal{GA}_{j_c}[1]$ for $3$-NLS equation (Fig. \ref{GAAhigh}(b)) and $5$-NLS equation (Fig. \ref{GAAhigh}(d)). As $n$ is even and $\bsm{\beta}=\left(1, 0, \cdots, 0, \mathrm{i}/\sqrt{2},  \mathrm{i}/\sqrt{2}, 0, \cdots, 0\right)^\mathrm{T}$, the $N$-th-order  rogue wave $\mathbf{q}^{[N]}$  is also time-reversal, where the non-zero $ \mathrm{i}/\sqrt{2}$ is the $\left(n/2+1\right)$- and $\left(n/2+2\right)$-th entries. In this case, $\mathbf{q}^{[N]}$ is time-reversal and
\begin{gather}
\left|q_{\frac{n}{2}}^{[N]}(0, 0)/a_{\frac{n}{2}}\right|=\left|q_{\frac{n}{2}+1}^{[N]}(0, 0)/a_{\frac{n}{2}+1}\right|=1+\frac{\sqrt{2}}{2}N(n+1)\sin\omega_{\frac{n}{2}}.
\end{gather}

\section{Conclusions and discussions}

In conclusion, we have explicitly found the existence condition of $\left(n+1\right)$-multiple eigenvalues for $n$-component nonlinear Sch\"odinger equation, which play a fundamental role in the study of using the Darboux transformation to construct vector rational rogue waves.
% of the formulae of both fundamental and higher-order vector rational rogue wave solutions.
 Base on the proper plan-wave solution $\mathbf{q}^\mathrm{bg}$ for the $\left(n+1\right)$-multiple eigenvalues,
 we first presented the explicit formulae of both first-order and high-order vector rational RW solutions. For the first-order vector rational
 RW solutions, a complete classification is proposed based on the degree $\ell$ of $\mathbf{A}\bsm{\beta}$. When $\ell=1$, we classify the first-order vector RWs into three types of fundamental RWs (bright, dark, and four-pated RWs). However, when $\ell\geq 2$, the so-called {\it new governing polynomials} $\mathcal{F}_\ell(z)$'s are introduced to study the asymptotic behaviors of first-order vector RWs as some parameters tend to infinity. The leading term of asymptotic formula is the linear superposition of some fundamental RWs, whose central location is related to the governing polynomial $\mathcal{F}_\ell(z)$, and can be deduced. Two kinds of amplitude constants $\mathcal{GA}[1], \mathcal{A}_j[1]$ are also studied and the first-order vector rogue wave solutions with $\bsm{\beta}=\bsm{\eta}$ and $\bsm{\beta}=\bsm{\eta}_j$ have the properties:
\bee\begin{array}{l}
\mathcal{GA}[1]=1+\dfrac{\left(n+1\right)\sqrt{n}}{\sum_{j=1}^n \csc{\omega_j}}=\mathcal{O}\left(\dfrac{\sqrt{n}}{\ln n}\right), \quad \mathrm{as}\quad n\to\infty,  \vspace{0.1in} \\
\mathcal{A}_j[1]=1+(n+1)\sin\omega_j.
\end{array}
\ene

%Until now, the explicit formula of high-order rogue wave solution for $n$-NLS equation has not been reported.
Be recalling the idea of the generalized Darboux transformation, we find the explicit zeros of the polynomial of  degree $(n+1)$ with the parameter $\epsilon$ is not necessary and one just need to yield the coefficients of the Taylor expansion at $\epsilon=0$. By performing the technique of inverse function, we present the explicit formula of high-order vector rogue wave solutions. As the direct application, two kinds of constants $\mathcal{GA}[N], \mathcal{A}_j[N]$ related to the amplitude, are also studied. The $N$-th-order rogue wave solution with $\bsm{\beta}=\bsm{\eta}$ and $\bsm{\beta}=\bsm{\eta}_j$ has the properties:
\begin{gather}
\mathcal{GA}[N]=1+\frac{N\left(n+1\right) \sqrt{n}}{\sum_{j=1}^n \csc{\omega_j}}, \quad \mathcal{A}_j[N]=1+N\left(n+1\right) \sin\omega_j.
\end{gather}

The condition of the multiple root plays the fundamental and essential role of the existence of the rogue wave solutions. In this paper, we present a religious  rogue wave theory for the $n$-component NLS system with an $(n+1)$-multiple root.  Recalling the Lemma \ref{root-condition}, the condition of $(n+1)$-multiple root is determined based on $n$ mutually different $b_j's$. In the following we give the general existence condition of the multiple root according to different cases of the mutuality  of $b_j's$. Without loss of generality, suppose $b_1, b_2, \cdots, b_m$ are mutually different and $b_{m+1}, b_{m+2}, \cdots, b_n\in \{b_1, b_2, \cdots, b_m\}$, where $1\le m\le n$. We define $m$ mutually exclusive index sets $\Lambda_1, \Lambda_2, \cdots, \Lambda_m$ such that $b_i=b_j$ for $i\in \Lambda_j$. As the plane wave solution $\mathbf{q}^\mathrm{bg}$ in Eq. (\ref{plane}) is chosen as
\begin{gather}
b_j=\frac{2\Im\left(\lambda_0\right)}{m+1}\cot\frac{j\pi}{m+1}-2\Re\left(\lambda_0\right),  \quad \sum_{i\in\Lambda_j}a_i^2=\left(\frac{2\Im\left(\lambda_0\right)}{m+1}\csc \frac{j\pi}{m+1} \right)^2, \quad j=1, 2, \cdots, m,
\end{gather}
the characteristic polynomial $D(z)$ in Eq. (\ref{Dx}) has the following form of multiple root
\begin{gather}
D\left(x\right)=\left(x-z_0\right)^{m+1}\prod_{j=1}^m\left(x+b_j\right)^{\left|\Lambda_j\right|-1}, \quad z_0=\frac{2\mathrm{i}\Im\left(\lambda_0\right)}{m+1}+2\Re\left(\lambda_0\right),
\end{gather}
where $\left|\Lambda_j\right|$ denotes the amount of the entries in the index set $\Lambda_j$. With the general existence condition and the process of the paper, one can study the semi-rational solutions.

The idea used in this paper can also be extended to other $n$-component integrable nonlinear wave systems such as the higher-order n-component NLS system (e.g., n-component Hirota system, n-component fifth-oder NLS system), n-component mKdV system, n-component complex mKdV system, n-component KP system, and etc. These issues will be given in another literature.

The recent studies on the RW theory in the integrable system shows that the RW solutions almost possess the uniform structure. For the fixed order RW solutions, one of them will attain the maximum peak, the others can be decomposed into the asymptotic linear superposition of lower order RWs. Simultaneously, there exist infinite many hierarchy of RW solutions similar as the soliton hierarchy. And the infinite order RW are constructed recently from the Riemann-Hilbert representation of Darboux transformation \cite{BilmanLM-20}. The RW solutions can not be involved in the traditional inverse scattering transform since the RW solutions will have the same scattering data as the background solution or the so-called spectral singularity \cite{BilmanM-19}. The recent proposed robust inverse scattering transform can be used to deal with the spectral singularity of RWs on the non-vanishing background \cite{BilmanM-19}.

On the other hand, as for the non-integrable equation, up to now there is no systematic construction on the RW solution for the non-integrable model with MI under the non-vanishing background except for the non-integrable defocusing NLS equation with time-dependent potential possessing the RW solution~\cite{dnlsp}. The existence or non-existence of RWs on the non-integrable systems with MI is still a puzzle up to now \cite{Strauss77} which is similar as the existence or nonexistence of two-soliton solution in the non-integrable KdV equation \cite{Martel-11,Martel-11-1}. It is widely believed that the mechanism of RWs is MI. The RWs were grown in the background of MI. Thus the orbital stability theory will not adapt for the studies of RWs \cite{GrillakisSS-87}. Thus how to develop a theory to study the stabilities of RWs is also open for us. All in all, further studies on the RWs will not only impulse the development of integrable systems but also bring the new topic to the theoretical analysis and numerical studies of non-integrable dispersive equations in the fields of applied mathematics and mathematical physics.

%\vspace{1in}

%\noindent

\section*{Appendix A: The proof of the Lemma 1.}
\addcontentsline{toc}{section}{Appendix A}

\begin{proof}
	Since the coefficient matrices $\mathbf{U}$ and $\mathbf{V}$ have  the symmetric relationships:
	\begin{equation}
	\mathbf{U}(\lambda;x,t)=-\mathbf{U}^{\dag}(\lambda^*;x,t),\quad \mathbf{V}(\lambda;x,t)=-\mathbf{V}^{\dag}(\lambda^*;x,t),
	\end{equation}
	thus it follows that the matrix function $\Phi^{\dag}(\lambda^*;x,t)$ satisfies the adjoint Lax pair:
	\begin{equation}\label{eq:adjoint-lax}
	-\Phi_x^{\dag}(\lambda^*;x,t)=\Phi^{\dag}(\lambda^*;x,t) \mathbf{U}(\lambda;x,t),\quad
	-\Phi_t^{\dag}(\lambda^*;x,t)=\Phi^{\dag}(\lambda^*;x,t) \mathbf{V}(\lambda;x,t)
	\end{equation}
if $\Phi(\lambda;x,t)$ satisfies the Lax pair \eqref{laxs} and \eqref{laxt}.
	On the other hand, the inverse matrix function $\Phi^{-1}(\lambda;x,t)$ also satisfies the adjoint Lax pair \eqref{eq:adjoint-lax}. By the uniqueness and existence of differential equations and $\Phi(\lambda;0,0)=\mathbb{I}$, we complete the proof.
\end{proof}

\section*{Appendix B: The proof of the Lemma 2: }
\addcontentsline{toc}{section}{Appendix B}

\begin{proof}
	The inverse matrix function $\mathbf{N}^{-1}(\lambda;x,t)$ satisfies the adjoint linear spectral problem
	\begin{equation}
	\frac{\partial}{\partial x}\left(\mathbf{N}^{-1}(\lambda;x,t)\right)=-\mathbf{N}^{-1}(\lambda;x,t) \mathbf{U}(\lambda;x,t),
	\end{equation}
	and then we have the following stationary zero-curvature equations:
	\begin{equation}\label{eq:station-zero}
	\frac{\partial}{\partial x}\Theta(\lambda;x,t)=\left[\mathbf{U}(\lambda;x,t),\Theta(\lambda;x,t)\right],
	\end{equation}
	by $\Theta(\lambda;x,t)\equiv\mathbf{N}(\lambda;x,t)\sigma_3\mathbf{N}^{-1}(\lambda;x,t)=\mathbf{M}(\lambda;x,t)\sigma_3\mathbf{M}^{-1}(\lambda;x,t).$
	The above coefficients $\Theta_j(x,t)$ can be determined recursively:
	\begin{equation}\label{eq:theta-re}
	\begin{split}
	\Theta_{j+1}^{\rm off}&=-\frac{1}{2}\sigma_3\left(\ii\,\Theta_{j,x}^{\rm off}+\left[\mathbf{Q}, \Theta_j^{\rm diag}\right]\right), \\
	\Theta_{j,x}^{\rm diag}&=\ii \left[\mathbf{Q}, \Theta_j^{\rm off}\right].\\
	\end{split}
	\end{equation}
	On the other hand, it is readily to see that
	\begin{equation}
	\Theta^2(\lambda;x,t)=\mathbb{I},
	\end{equation}
	which implies that
	\begin{equation}\label{eq:theta-diag}
	\Theta_j^{\rm diag}=-\frac{\sigma_3}{2}\sum_{s=1}^{j-1}\left(\Theta_j^{\rm diag}\Theta_{s-j}^{\rm diag}+\Theta_j^{\rm off}\Theta_{s-j}^{\rm off}\right),\qquad j\geq 2; \qquad \Theta_1^{\rm diag}=0.
	\end{equation}
	Through the first equation of \eqref{eq:theta-re} and equation \eqref{eq:theta-diag}, we complete the proof.
\end{proof}

\section*{Appendix C.\, MI analysis of the plane waves}
\addcontentsline{toc}{section}{Appendix C}

Here we utilize the squared eigenfunction method to construct the solutions of linearized vector matrix Hirota equations. To this purpose, we depart from the stationary zero curvature equation:
\begin{equation}\label{eq:station-zero}
\Psi_x=[\mathbf{U},\Psi],\qquad \Psi_t=[\mathbf{V},\Psi],\qquad \Psi=\begin{pmatrix}
\Psi_{11} &\Psi_{12} \\
\Psi_{21} & \Psi_{22} \\
\end{pmatrix},
\end{equation}
where the matrices $\Psi_{11}$, $\Psi_{12}$, $\Psi_{21}$ and $\Psi_{22}$ have the shape $1\times 1$, $1\times n$,  $n\times 1$ and $n\times n$ respectively. Then we write the corresponding component form:
\begin{equation}\label{eq:station-x}
\begin{split}
\Psi_{11,x}=&\ii\left(\mathbf{q}^{\dag} \Psi_{21}-\Psi_{12}\mathbf{q}\right), \\
\Psi_{12,x}=&\ii\left(2\lambda \Psi_{12}+\mathbf{q}^{\dag}\Psi_{22}-\Psi_{11}\mathbf{q}^{\dag}\right), \\
\Psi_{21,x}=&\ii\left(\mathbf{q}\Psi_{11}-\Psi_{22}\mathbf{q}-2\lambda \Psi_{21}\right), \\
\Psi_{22,x}=&\ii\left(\mathbf{q}\Psi_{12}-\Psi_{21}\mathbf{q}^{\dag}\right).
\end{split}
\end{equation}
Further, taking the second order derivative of $\Psi_{12}$ and $\Psi_{21}$ with respect to $x$, together with the above equations \eqref{eq:station-x} we obtain that
\begin{equation}\label{eq:station-xx}
\begin{split}
\Psi_{12,xx}=&-2\left[\left(\lambda \mathbf{q}^{\dag}-\frac{\ii}{2}\mathbf{q}_x^{\dag}\right)\Psi_{22}-\Psi_{11}\left(\lambda\mathbf{q}^{\dag}-\frac{\ii}{2}\mathbf{q}_x^{\dag}\right)+2\lambda^2\Psi_{12}\right] \\
&-\mathbf{q}^{\dag}\left(\mathbf{q}\Psi_{12}-\Psi_{21}\mathbf{q}^{\dag}\right)+\left(\mathbf{q}^{\dag}\Psi_{21}-
\Psi_{12}\mathbf{q}\right)\mathbf{q}^{\dag}, \\
\Psi_{21,xx}=&2\left[(\lambda\mathbf{q}+\frac{\ii}{2}\mathbf{q}_x)\Psi_{11}-\Psi_{22}(\lambda\mathbf{q}+\frac{\ii}{2}\mathbf{q}_x)-2\lambda^2\Psi_{21}\right]\\
&-\left(\mathbf{q}^{\dag} \Psi_{21}-\Psi_{12}\mathbf{q}\right)+\left(\mathbf{q}\Psi_{12}-\Psi_{21}\mathbf{q}^{\dag}\right)\mathbf{q}.
\end{split}
\end{equation}

Similarly, for the $t$-part of stationary zero-curvature equations, we write the corresponding component form:
\begin{equation}\label{eq:station-t}
\begin{split}
\Psi_{11,t}&=V_{12}\Psi_{21}-\Psi_{12}V_{21}, \\
\Psi_{12,t}&=V_{12}\Psi_{22}-\Psi_{11}V_{12}+V_{11}\Psi_{12}-\Psi_{12}V_{22}, \\
\Psi_{21,t}&=V_{21}\Psi_{11}-\Psi_{22}V_{21}+V_{22}\Psi_{21}-\Psi_{21}V_{11}, \\
\Psi_{22,t}&=V_{21}\Psi_{12}+V_{22}\Psi_{22}-\Psi_{21}V_{12}-\Psi_{22}V_{22},
\end{split}
\end{equation}
where
\begin{equation}
\begin{split}
V_{11}=&\ii\left(\lambda^2-\frac{1}{2}\mathbf{q}^{\dag}\Lambda\mathbf{q}\right), \quad
V_{12}=\ii\left(\lambda\mathbf{q}^{\dag}\Lambda-\frac{\ii}{2}\mathbf{q}^{\dag}_x\Lambda\right), \\
V_{21}=&\ii\left(\lambda\mathbf{q}+\frac{\ii}{2}\mathbf{q}_x\right), \quad
V_{22}=\ii\left(\lambda^2\mathbb{I}_n+\frac{1}{2}\mathbf{q}\mathbf{q}^{\dag}\Lambda\right). \\
\end{split}
\end{equation}
Combining with the equations \eqref{eq:station-xx} and \eqref{eq:station-t}, we will obtain the linearized  equations:
\begin{equation}\label{eq:linearized-eq}
\begin{split}
\ii \Psi_{21,t}&=-\left[\frac{1}{2}\Psi_{21,xx}+\left(\Psi_{21}\mathbf{q}^{\dag}\mathbf{q}-\mathbf{q}\Psi_{12}\mathbf{q}
+\mathbf{q}\mathbf{q}^{\dag}\Psi_{21}\right)\right],\\
\ii\Psi_{12,t}&=\left[\frac{1}{2}\Psi_{12,xx}+\left(\Psi_{12}\mathbf{q}\mathbf{q}^{\dag}
-\mathbf{q}^{\dag}\Psi_{21}\mathbf{q}^{\dag}+\mathbf{q}^{\dag}\mathbf{q}\Psi_{12}\right)\right].
\end{split}
\end{equation}
Moreover, the symmetry relationship $\Psi_{12}=-\Psi_{21}^{\dag}$ guarantees the above linearized equation to satisfy the linearized multi-component Hirota equations.

We construct the solutions of $\Psi$ by the wavefunction of Lax pair. Suppose we have a vector solution $\phi_i(\lambda)$ for Lax pair, by the symmetric proposition we know that
$\phi_j^{\dag}(\lambda^*)$ satisfies the adjoint Lax pair, which implies that the matrix function
$\Psi=\phi_i(\lambda)\phi_j^{\dag}(\lambda^*)$ solves the stationary zero-curvature equations \eqref{eq:station-zero}, where $\phi_i(\lambda)=[\phi_{i,1}(\lambda), \phi_{i,2}^{\T}(\lambda)]^{\T}$. Similarly, we know that $\Psi=\phi_j(\lambda^*)\phi_i^{\dag}(\lambda)$ also solves the equations \eqref{eq:linearized-eq}. Since the equations \eqref{eq:linearized-eq} are unrelated with the spectral parameters $\lambda$. Thus the solutions
\begin{equation}\label{eq:solution-linearized}
\phi_{i,2}(\lambda)\phi_{j,1}^*(\lambda^*)-\phi_{j,2}(\lambda^*)\phi_{i,1}^*(\lambda)
\end{equation}
solves the linearized multi-component NLS equations automatically.

In what follows, we consider how to use the above procedures to solve linear stability analysis for the multi-component Hirota equation with the plane wave solution \eqref{plane}. Suppose we consider the perturbation form:
\begin{equation}\label{eq:perturb}
\Psi_{21}=\begin{pmatrix}
\left(g_1\ee^{\ii\xi( x+\zeta t)}+f_1^*\ee^{-\ii\xi(x+\zeta^* t)}\right)\ee^{\ii(b_1x+c_1t)}\\
\left(g_2\ee^{\ii\xi( x+\zeta t)}+f_2^*\ee^{-\ii\xi(x+\zeta^* t)}\right)\ee^{\ii(b_2x+c_2t)}\\
\vdots\\
\left(g_n\ee^{\ii\xi( x+\zeta t)}+f_n^*\ee^{-\ii\xi(x+\zeta^* t)}\right)\ee^{\ii(b_nx+c_nt)}
\end{pmatrix},
\end{equation}
and $\phi_i(\lambda)=KL_i(\lambda)\ee^{\ii[\xi_i(\lambda)-\lambda]x+\eta_i(\lambda) t}$ and $\phi_j(\lambda)=KL_j(\lambda)\ee^{\ii[\xi_j(\lambda)-\lambda]x+\eta_j(\lambda) t}$ provides a solution
\begin{equation}
g_k=\frac{a_i}{\xi_i(\lambda)+b_k},\qquad f_k=-\frac{a_i}{\xi_j(\lambda)+b_k}
\end{equation}
where $k=1,2,\cdots, n$ and
\begin{equation}
\xi=\xi_i(\lambda)-\xi_j(\lambda)\in \mathbb{R},\qquad \zeta=\frac{\gamma}{2}(\xi_i(\lambda)+\xi_j(\lambda))
\end{equation}
the parameter $\xi$ determines the perturbation frequency and the parameter $\zeta$ determines the gain index. As for the fixed $\xi$, the parameter $\xi_i(\lambda)$ can be determined by the following equations
\begin{equation}\label{eq:dispersion-1}
1+\sum_{k=1}^n\frac{s_ka_k^2}{(\xi_i(\lambda)+b_k)((\xi_i(\lambda)-\xi+b_k))}=0.
\end{equation}
For the multi-component NLS equations, the gain index $\zeta=\xi_i(\lambda)-\frac{\xi}{2}$ will solve the equations
\begin{equation}
1+\sum_{k=1}^n\frac{s_ka_k^2}{(\zeta+b_k)^2-\frac{\xi^2}{4}}=0.
\end{equation}

\section*{Appendix D.\, Asymptotic behaviors of $\mathbf{q}^{[1]_4}$}
\addcontentsline{toc}{section}{Appendix D}

\quad {\it Case 1.} We consider four simple roots of the governing polynomial $\mathcal{F}_4(z)$. For instance,  as $\kappa_0=9, \kappa_2=-10, \kappa_1=\kappa_3=0$ are taken for the line-typed structure along $t$-axis, four roots of $\mathcal{F}_4$ are $z=\pm 1, \pm 3$.  As $\kappa_0=9, \kappa_2=10, \kappa_1=\kappa_3=0$ are taken for the line-typed structure along $x$-axis, four roots of $\mathcal{F}_4$ are $z=\pm \mathrm{i}, \pm 3\mathrm{i}$.  As $\kappa_0=1, \kappa_1=\kappa_2=\kappa_3=0$ are taken for the square-typed structure, four roots of $\mathcal{F}_4$ are $z=\mathrm{e}^{\left(2s-1\right)\pi\mathrm{i}/4}, s=1, 2, 3, 4$. Then we yield the formula of asymptotic behavior
 \begin{gather}
q_j^{[1]_4}=q_j^\mathrm{bg}\bigg[1+\sum_{\delta_1=0}^1\sum_{\delta_2=0}^1R_j\left(x-x_{\delta_1, \delta_2}, t-t_{\delta_1, \delta_2}\right)
\bigg]\mathrm{e}^{-2\mathrm{i}\omega_j}+\mathcal{O}\left(\frac{1}{h}\right), \,\, h\to +\infty.
\end{gather}
with the linear superposition of four single rogue waves in the leading term, where
\begin{itemize}
\item $x_{\delta_1, 0}=(-1)^{\delta_1}(h+(\kappa_{0, 0}+\kappa_{2, 0})/16\zeta)+(\kappa_{1, 0}+\kappa_{3, 0}-8)/16\zeta, \, x_{\delta_1, 1}=(-1)^{\delta_1}(3h-(9\kappa_{2, 0}+\kappa_{0, 0})/48\zeta)-(\kappa_{1, 0}+9\kappa_{3, 0}+8)/16\zeta, \, t_{\delta_1, \delta_2}=0$ for the case $\kappa_0=9, \kappa_2=-10, \kappa_1=\kappa_3=0$  (see Figs. \ref{RW4}(a, b));
\item $x_{\delta_1, 0}=-(\kappa_{1, 0}-\kappa_{3, 0}+4)/16\zeta, \, t_{\delta_1, 0}=(-1)^\delta(h+(\kappa_{0, 0}-\kappa_{2, 0})/16\zeta)/\zeta, \, x_{\delta_1, 1}=(\kappa_{1, 0}-9\kappa_{3, 0}+36)/16\zeta\, t_{\delta_1, 1}=(-1)^\delta(3h-(\kappa_{0, 0}-9\kappa_{2, 0})/48\zeta)/\zeta$ (see Fig. \ref{RW4}(c, d));
\item $x_{\delta_1, \delta_2}=(-1)^{\delta_1}(h/\sqrt{2}+\sqrt{2}(\kappa_{0, 0}-\kappa_{2, 0})/8\zeta)+(1-\kappa_{3, 0})/4\zeta, \, t_{\delta_1, \delta_2}=(-1)^{\delta_2}(h/\sqrt{2}+\sqrt{2}(\kappa_{0, 0}-\kappa_{2, 0})/8\zeta)/\zeta+(-1)^{\delta_2}(\kappa_{1, 0}+3)/4\zeta$ (see Fig. \ref{RW4}(e, f)).
\end{itemize}

{\it Case 2.} For the second case, we consider a  double root and two simple roots of the governing polynomial $\mathcal{F}_4$, which arrange a line-typed structure. For instance,  as  $\kappa_2=-1, \kappa_0=\kappa_1=\kappa_3=0$ are taken, then the governing polynomial $\mathcal{F}_4$ has a double root $z=0$ and two simple root $z=\pm 1$. Under the constraint $\kappa_{0, 0}=0$, we deduce the formula of asymptotic behavior
\begin{gather}
q_j^{[1]_4}=q_j^\mathrm{bg}\left[1+p_{j, 2}\left(x, t\right)+\sum_{\delta=0}^1R_j\left(x-x_\delta, t\right)\right]\mathrm{e}^{-2\mathrm{i}\omega_j}
+\mathcal{O}\left(\frac{1}{h}\right), \,\, h\to +\infty,
\end{gather}
where $x_\delta=(-1)^\delta(h-\kappa_{2, 0}/2\zeta)-(\kappa_{3, 0}+\kappa_{1, 0}+1)/2\zeta$ and the parameters of $p_{j, 2}$ are $\alpha_0=\kappa_{0, 1}, \alpha_1=\kappa_{1, 0}\mathrm{i}, \alpha_2=2\zeta$.
The leading term is linear superposition of two single rogue waves located at $\left(\pm h-1/2\zeta, 0\right)$ and a double rogue wave (see Figs. \ref{RW4a}(a, b)).

{\it Case 3.}
For the third case, we consider a two real double roots of the governing polynomial $\mathcal{F}_4$.  For instance,  we can take $\kappa_2=-2, \kappa_1=1, \kappa_0=\kappa_3=0$, then the governing polynomial $\mathcal{F}_4$ has two double real roots  $z=\pm 1$. Under the  constraint  $\kappa_{3, 0}+\kappa_{1, 0}=\kappa_{2, 0}+\kappa_{0, 0}=0$, we derive  the formula of asymptotic behavior
\begin{gather}
q_j^{[1]_4}=q_j^\mathrm{bg}\bigg[1+\sum_{\delta=0}^1p_{j, 2}\left(x+\left(-1\right)^\delta h, t\right)+\bigg]\mathrm{e}^{-2\mathrm{i}\omega_j}
+\mathcal{O}\left(\frac{1}{h}\right), \,\, h\to +\infty.
\end{gather}
where the parameters of $p_{j, 2}$ are $\alpha_0=\kappa_{0, 0}+\kappa_{1, 0}, \alpha_1=-2(\kappa_{0, 1}+\kappa_{1, 1}+\kappa_{2, 1}+\kappa_{0, 0}^2)\mathrm{i}, \alpha_2=4\zeta$.
The leading term is linear superposition of two double rogue waves (see Figs. \ref{RW4a}(c, d)).

{\it Case 4.}
For the fourth case, we consider a simple root and a triple root of $\mathcal{F}_4$ and the  corresponding rogue waves  arrange in a line. As we take $\kappa_3=1, \kappa_0=\kappa_1=\kappa_2=0$, the governing polynomial $\mathcal{F}_4$ has a simple roots $z=-1$ and a triple root $z=0$. Under the constraint $\kappa_{0, 0}=\kappa_{1, 0}=0$, we deduce the formula of  asymptotic behavior
\begin{gather}
q_j^{[1]_4}=q_j^\mathrm{bg}\left[1+p_{j, 3}\left(x, t\right)+R_j\left(x-x_0, t\right)\right]\mathrm{e}^{-2\mathrm{i}\omega_j}
+\mathcal{O}\left(\frac{1}{h}\right), \,\, h\to +\infty,
\end{gather}
where $x_0=(2\kappa_{2, 0}-2\kappa_{3, 0}-1)/2\zeta-h$ and the parameters of $p_{j, 3}$ are $\alpha_0=\kappa_{0, 2}, \alpha_1=\mathrm{i}\kappa_{1, 1}, \alpha_2=-2\kappa_{2, 0}, \alpha_3=-6\mathrm{i}\zeta$. The leading term is linear superposition of a single rogue wave located at $\left(x_0, 0\right)$ and a triple rogue wave (see Fig. \ref{RW4a}(e, f)).

{\it Case 5.}
For the fifth case, we consider a quadruple root of $\mathcal{F}_4$.  For the convenience of the representation of  the leading term of asymptotic formula, we define
\begin{gather}
p_{j, 4}=-\frac{2\mathrm{i}\zeta}{b_j-\mathrm{i}\zeta}\frac{\left(n+1\right)L_jL_0^*-\mathbf{L}^\dag \mathbf{L}}{\mathbf{L}^\dag \mathbf{L}},\quad j=1, 2, \cdots, n,
\end{gather}
where $\mathbf{L}, L_j's, L_0$ are defined in Eqs. \eqref{L} and \eqref{VL} with $\ell=4$. As we take $\kappa_0=\kappa_1=\kappa_2=\kappa_3=0$, the governing polynomial $\mathcal{F}_4$ has a quadruple root $z=0$.   To obtain the rogue wave solution in the leading term of the asymptotic formula, the constraint $\kappa_{0, 0}=\kappa_{0, 1}=\kappa_{0, 2}=\kappa_{1, 0}=\kappa_{1, 1}=\kappa_{2, 0}=0$ is posed.  Then $q_j^{[1]_4}=q_j^\mathrm{bg}\left(1+p_{j, 4}\right)\mathrm{e}^{-2\mathrm{i}\omega_j}$ with $\alpha_0=\kappa_{0, 3}, \alpha_1=\mathrm{i}\kappa_{1, 2}, \alpha_2=-2\kappa_{2, 1}, \alpha_3=-6\kappa_{3, 0}\mathrm{i}, \alpha_4=24\zeta$. For other quadruple  root of $\mathcal{F}_4$, we also have the formula of asymptotic behavior as $d\to +\infty$. For instance,  as $\kappa_0=1, \kappa_1=-4, \kappa_2=6, \kappa_3=-4$, the governing polynomial $\mathcal{F}_4=\left(z-1\right)^4$, which has a quadruple root $z=1$. Under the constraint $\kappa_{0, 0}=\kappa_{0, 1}=\kappa_{0, 2}=\kappa_{1, 0}=\kappa_{1, 1}=\kappa_{2, 0}=0$, we derive the formula of asymptotic behavior
\begin{gather}
q_j^{[1]_4}=q_j^\mathrm{bg}\left[1+p_{j, 4}\left(x-h, t\right)\right]\mathrm{e}^{-2\mathrm{i}\omega_j}+\mathcal{O}\left(\frac{1}{h}\right).
\end{gather}
where the parameters of $p_{j, 4}$ are $\alpha_0=\kappa_{0, 3}, \alpha_1=\mathrm{i}\kappa_{1, 2}, \alpha_2=-2\kappa_{2, 1}, \alpha_3=-6\kappa_{3, 0}\mathrm{i}, \alpha_4=24\zeta$.

\vspace{0.1in}
\noindent {\bf Acknowledgements}
\vspace{0.1in}

G.Z.  acknowledges support from the China Postdoctoral Science Foundation under Grant No. 2019M660600, L. L. is supported by the National Natural Science Foundation of China (Grant No. 11771151), the Guangzhou Science and Technology Program of China (Grant No. 201904010362), and the Fundamental Research Funds for the Central Universities of China (Grant No. 2019MS110). Z.Y.  acknowledges support from the
National Natural Science Foundation of China (Grant Nos.11925108 and 11731014).

\end{document}